\pgfplotsset{compat=1.15}
	\DeclareFontShape{T1}{cmr}{m}{scit}{<->ssub*cmr/m/sc}{}%
\let\truehypersetup\hypersetup
\renewcommand\hypersetup[1]{}
\let\hypersetup\truehypersetup
\newtheorem{theorem}{Theorem}[section]
\newtheorem{lemma}[theorem]{Lemma}
\newtheorem{coro}[theorem]{Corollary}
\newenvironment{manuallemma}[1]{%
	\manuallemmainner
}{\endmanuallemmainner}
\newcommand{\mc}[1]{\ensuremath{\mathcal{#1}}\xspace}
\newcommand{\opt}[1]{\ensuremath{\textsc{Opt}_{#1}}\xspace}
\newcommand{\cmax}{\ensuremath{C_{\max}}}
\newcommand{\dx}{\mathop{}\!\mathrm{d}x}
\newcommand{\rhosand}{\ensuremath{\bar{\rho}(m)}\xspace} 
\newcommand{\rhosandm}[1]{\ensuremath{\bar{\rho}(#1)}\xspace} 
\newcommand{\rhosandid}{\ensuremath{\bar{\rho}_{\textup{\tiny 01}}}\xspace} 
\newcommand{\rhosandidm}[1]{\ensuremath{\bar{\rho}_{\textup{01}}(m)}\xspace} 
\newcommand{\rhoid}{\rhosandid}
\newcommand{\R}{\mathbb{R}}
\newcommand{\N}{\mathbb{N}}
\newcommand{\Z}{\mathbb{Z}}
\newcommand{\nrbags}{m}
\newcommand{\nrmach}{{m'}}
\newcommand{\optm}{\ensuremath{\textsc{Opt}_{m}}\xspace}
\newcommand{\optf}{\ensuremath{\textsc{Opt}_{m'}}\xspace}
\newcommand{\lptf}{\ensuremath{\text{LPT}_{m'}}\xspace}
\newcommand{\simplefold}{\emph{simple folding}\xspace}
\newcommand{\compfunc}{g}
\newcommand{\nrfail}{t}
\newcommand{\avg}{\lambda\xspace}
\newcommand{\machineload}{P}
\newcommand{\algoname}[1]{\textsc{#1}\xspace}
\newcommand{\lpt}{\algoname{LPT}}  
\newcommand{\oddalgo}{\algoname{BuildOdd}} 
\newcommand{\sandtobricks}{\algoname{SandForBricks}} 
\newcommand{\sandalg}{\algoname{Sand}} 
\newcommand{\sandalgid}{$\algoname{Sand}_{01}$} 
\newcommand{\srs}{speed-robust scheduling\xspace}
\newcommand{\floor}[1]{\ensuremath{\left\lfloor#1\right\rfloor}\xspace}
\newcommand{\ceil}[1]{\ensuremath{\left\lceil#1\right\rceil}\xspace}
\DeclareMathOperator*{\argmax}{arg\,max}
\newcommand{\Mod}[1]{\ (\mathrm{mod}\ #1)}
\title{Speed-Robust Scheduling}
\author{Franziska Eberle\thanks{London School of Economics and Political Sciences, Department of Mathematics, United Kingdom. Email: \texttt{franziska.eberle@posteo.de}}
	\and Ruben Hoeksma\thanks{University of Twente, Department of Applied Mathematics, The~Netherlands. \texttt{r.p.hoeksma@utwente.nl}}
	\and Nicole Megow\thanks{University of Bremen, Department of Mathematics and Computer Science, Germany. \texttt{\{nmegow,noelke\}@uni-bremen.de}}
	\and Lukas Nölke\footnotemark[4]
	\and Kevin Schewior\thanks{University of Southern Denmark, Department of Mathematics and Computer Science, Odense, Denmark. \texttt{kevs@sdu.dk}}
	\and Bertrand Simon\thanks{IN2P3 Computing Center, CNRS, Villeurbanne, France. \texttt{bertrand.simon@cc.in2p3.fr}}
}
\date{\today}
\begin{document}

\title{Speed-Robust Scheduling\thanks{Research was partially supported by Deutsche Forschungsgemeinschaft (DFG, German Research Foundation) under project 146371743 within TRR 89 Invasive Computing  and under contract ME 3825/1.}\\\Large Sand, Bricks, and Rocks
}

\maketitle

\begin{abstract}
The \srs problem is a two-stage problem where, given $m$ machines, jobs must be grouped into
	at most~$m$ bags while the processing speeds of the 
	machines
	are unknown. After the speeds are revealed, the grouped jobs must be assigned
	to the machines without being separated. To evaluate the performance of
	algorithms, we determine upper bounds on the worst-case ratio of the
	algorithm's makespan and the optimal makespan given full information. We refer
	to this ratio as the robustness factor. We give an algorithm with a robustness
	factor~$2-\frac1m$ for the most general setting and improve this to~$1.8$ for
	equal-size jobs. For the special case of infinitesimal jobs, we give an
	algorithm with an optimal robustness factor {equal to~$\frac e{e-1}\approx 1.58$}. The
	particular machine environment in which all machines have either speed~$0$
	or~$1$ was studied before by Stein and Zhong~(ACM Trans.\ Alg.\ 2020). For this setting, we
	provide an algorithm for scheduling infinitesimal jobs with an optimal robustness
	factor of~$\frac{1+\sqrt{2}}2\approx 1.207$. It lays the foundation for an
	algorithm matching the lower bound of~$\frac43$ for equal-size jobs.
\end{abstract}

\section{Introduction}

	Scheduling problems with incomplete knowledge of the input data have been
	studied extensively. There are different ways to model such uncertainty, the major
	frameworks being {\em online optimization}, where parts of the input are
	revealed incrementally, {\em stochastic optimization}, where parts of the input
	are modeled as random variables, and {\em robust optimization}, where
	uncertainty in the data is bounded. %
	Most scheduling research in this context assumes uncertainty about the job
	characteristics. Examples include online scheduling, where the job set is a priori unknown
	\cite{AlbersH17,PruhsST04}, mixed-criticality scheduling, where
	the processing time 
	comes from a given set \cite{BaruahBDLMMS12},
	stochastic scheduling, where the processing times are modeled as random
	variables with known distributions~\cite{Nino-Mora09,Pinedo2012-book}, robust scheduling, where
	the unknown processing times are within a given interval \cite{KouvelisYu97},
	two/multi-stage stochastic and robust scheduling with recourse, where the set of
	jobs that has to be scheduled 
	stems from a known superset and is revealed in stages \cite{ChenMRS21,ShmoysS07}, 
	and scheduling with explorable uncertainty,
	where the processing time of a job can potentially be reduced by testing the job at some extra cost~\cite{DuerrEMM20}.
	
	A lot less research addresses uncertainty about the machine
	environment, particularly, where the processing speeds of machines change in an
	unforeseeable manner. The majority of such research focuses on the special case of
	scheduling with unknown non-availability periods, that is, machines break down
	temporarily~\cite{AlbersS01,DiedrichJST09} or permanently~\cite{SteinZ20}.
	Arbitrarily changing machine speeds have been considered for scheduling on a
	single machine~\cite{EpsteinLMMMSS12}.
	
	Fluctuations in the processing speeds of machines are pervasive in real-world
	environments. For example, machines can be shared computing resources in 
	data centers, where a sudden increase of workload may cause a general slowdown
	or, for some users, the machine may become unavailable due to priority issues.
	As another example, machines that are production units may change their
	processing speed due to aging or, unexpectedly, break down completely. In any
	case,~(unforeseen) changes in the processing speed may have a drastic impact on
	the quality of a given schedule.
	
	In this paper, we are concerned with the question of how to design a partial
	schedule by committing to groups of jobs, to be scheduled on the same machine,
	before knowing the actual machine speeds. This question is motivated, for
	example, by MapReduce computations done in large data centers. 
	A MapReduce
	function typically groups workload before knowing the actual number or precise
	characteristics of the available computing resources~\cite{MapReduce}.
		
	We consider a two-stage robust scheduling problem in which we aim for a schedule
	of minimum makespan on multiple machines of unknown speeds.
	Given a set of~$n$ jobs and~$\nrbags$ machines, we ask
	for a partition of the jobs into~$\nrbags$ groups, we say {\em bags}, that have to be scheduled on the machines after their speeds are revealed without being split up.
	That is, in the second stage, when the machine speeds are known, a feasible schedule assigns all jobs in the same
	bag to the same machine.
	The goal is to minimize the second-stage makespan.
			
	More formally, we define the {\em \srs} problem as follows. We are given~$n$
	jobs with processing times~$p_j\geq 0$, for~$j\in\{1, \ldots, n\}$, and the
	number of machines,~$\nrbags \in \N$. Machines run in parallel but their speed
	is a priori unknown. 
	In the first stage, the task is to group jobs into at most~$\nrbags$ bags. In
	the second stage, the machine speeds~$s_i\geq 0$, for~$i\in\{1,\ldots,m\}$, are
	revealed. The time needed to execute job~$j$ on machine~$i$ is~$p_j/s_i$, if~$s_i>0$. 
	If a machine has speed~$s_i=0$, then it cannot process any job; we say the
	machine {\em fails}. Given the machine speeds, the second-stage task is to
	assign bags to the machines such that the makespan~$\cmax$ is minimized, where
	the makespan is the maximum sum of execution times of jobs assigned to the same
	machine. 

	Given a set of bags and machine speeds, the second-stage problem emerges as
	classical makespan minimization on related
	parallel machines. It is well-known that this problem can
	be solved arbitrarily close to optimality by polynomial-time approximation
	schemes~\cite{AlonAWY1998,HochbaumS87,Jansen10}. As we are interested in the
	information-theoretic tractability, we
	allow superpolynomial running times for our algorithms -- ignoring any computational concern -- and assume that 
	the second-stage problem is solved optimally. Thus, an \emph{algorithm} for \srs defines a job-to-bag allocation, i.e., it gives a solution to the first-stage problem. We may use non-optimal bag-to-machine assignments to simplify the analysis.
	
	We evaluate the performance of algorithms by a worst-case analysis comparing the
	makespan of the algorithm with the optimal makespan achievable when all machine
	speeds are known in advance. We say that an algorithm is $\rho$-{\em robust} if,
	for any instance, its makespan is within a factor~$\rho\geq 1$ of the optimal
	solution. The {\em robustness factor} of the algorithm is defined as the infimum over all such~$\rho$.
	
	The special case of \srs where all machine speeds are either~$0$ or~$1$ has
	been studied previously by Stein and Zhong~\cite{SteinZ20}. They introduced the
	problem with identical machines and an unknown number of machines that fail (speed
	$0$) in the second stage. They present a simple lower bound of~$4/3$ on the
	robustness factor with equal jobs and  design a general $5/3$-robust algorithm.
	For infinitesimal jobs, they give an improved $1.2333$-robust algorithm complemented by a lower bound for each number of machines which tends
	to~$(1+\sqrt{2})/2\approx 1.207$ for large~$m$. Stein and Zhong also consider
	the objective of minimizing the maximum difference between the most loaded and
	the least loaded machine,
	motivated by questions on fair allocation.

\subsection*{Our Contribution}	

	We introduce the \srs problem and present robust algorithms. The 
	algorithmic difficulty of this problem is to construct bags in the first stage that are robust 
	under any choice of machine speeds in the second stage. The straightforward 
	approach of using any makespan-optimal solution on~$m$ identical machines is not 
	sufficient. \Cref{lem:optnotrobust} shows that such an algorithm might 
	have an arbitrarily large robustness factor. Using \emph{Longest Processing Time} 
	first (\lpt) to create bags does the trick and is $(2 - \frac1m)$-robust 
	for arbitrary job sizes (\Cref{thm:speedLPT}). While this was known for speeds in~$\{0,1\}$~\cite{SteinZ20}, our result for general speeds  
	is much less obvious.
	
	Note that \lpt aims at ``balancing'' the bag sizes which cannot lead to a
	better robustness factor than~$2-\frac1m$ as we show
	in~\Cref{lem:balancedlowerbound}. Hence, to improve upon this factor, we need to
	carefully construct bags with imbalanced bag sizes. There are two major
	challenges with this approach: (i) finding the ideal imbalance in the bag sizes
	independent from the actual job processing times that would be robust for all
	adversarial speed settings simultaneously and (ii) to adapt bag sizes to
	accommodate discrete jobs.
	
	A major contribution of this paper is an optimal solution to the first challenge by considering infinitesimal jobs (\cref{thm:speedsandLB,thm:speedsandUB}). One can think of this as filling bags with \emph{sand} to the desired level. In this case, the speed-robust scheduling problem boils down
	to identifying the best bag sizes as placing the jobs into bags becomes trivial.
	We give, for any number of machines, optimally imbalanced bag sizes and prove a
	best possible robustness factor of
	\[
		\rhosand = \frac{\nrbags^\nrbags}{\nrbags^\nrbags-(\nrbags-1)^\nrbags} \leq
		\frac{e}{e-1} \approx 1.58\,.
	\] 
	For infinitely many infinitesimal jobs in the particular machine environment in which all
	machines have either speed~$0$ or~$1$, we obtain an algorithm (\cref{theo:UBSandOnParallel}) with robustness
	factor
	\[
		\rhosandid(m) = \max_{t\leq \frac m2,~t\in\mathbb N}\ \frac{1}{\frac t{m-t}
		+\frac{m-2t}m} \ \leq\  \frac{1+\sqrt{2}}{2} \approx 1.207 = \rhoid\,.
	\]
	This improves the previous upper bound of~$1.233$ by Stein and
	Zhong~\cite{SteinZ20} and matches exactly their lower bound for each~$m$.
	Furthermore, we show that the lower bound in \cite{SteinZ20} holds even for
	randomized algorithms (\cref{thm:speedsandLB}),  
	so our algorithm is optimal for both,
	deterministic and randomized scheduling.

	The above tight results for infinitesimal jobs are crucial for our further
	results for discrete jobs. Following the figurative notion of sand for infinitesimal jobs, we think of equal-size jobs as \emph{bricks} and arbitrary jobs as \emph{rocks}. Building on those ideal bag sizes, our approaches
	differ substantially from the methods in \cite{SteinZ20}. When all jobs have equal processing time, we obtain a $1.8$-robust solution through a careful analysis of the trade-off
	between using slightly imbalanced bags and a scaled version of the ideal bag sizes computed for the infinitesimal setting (\Cref{theo:speedbrickUB}).
	
	When machines have only speeds in~$\{0,1\}$ and jobs
	have arbitrary equal sizes, i.e., unit size, 
	we give an optimal $\frac{4}{3}$-robust algorithm (\Cref{thm:BricksUB}). This is
	an interesting class of instances as the best known lower bound of~$\frac43$ for
	discrete jobs uses only unit-size jobs~\cite{SteinZ20}. To achieve this 
	result, we, again, crucially exploit the ideal bag sizes computed for
	infinitesimal jobs by using a scaled variant of these sizes. Some cases,
	depending on~$m$ and the optimal makespan on~$m$ machines, have to be handled
	individually. Here, we use a direct way of constructing bags with at most four
	different bag sizes, and some cases can be solved by an integer linear program.
	We summarize our results in Table~\ref{tab:results}.
		
	Inspired by traditional one-stage scheduling problems where jobs have {\em
	machine-dependent execution times} (unrelated machine scheduling), one might ask
	for such a generalization of our problem. However, it is easy to rule out any robustness 
	factor for such a setting: Consider four machines and five
	jobs, where each job may be executed on a unique pair of machines. Any algorithm
	must build at least one bag with at least two jobs. For this bag, there is
	at most one machine to which it can be assigned with finite makespan. If this
	machine fails, the algorithm cannot complete the jobs whereas an optimal
	solution can split this bag on multiple machines to get a finite makespan.

\begin{table}
	\renewcommand*\arraystretch{1.5}
	\centering	
	\def\arraystretch{1.1}
	\begin{tabular}{lcccc}	
		&\multicolumn{2}{c}{General speeds}& \multicolumn{2}{c}{Speeds from~$\{0,1\}$ } \\
		\cmidrule{2-5}
		&Lower bound&Upper bound&Lower bound&Upper bound\\
		\midrule
		Discrete jobs&~$\rhosandm m~$&~$2 - \tfrac1m$ &$ \tfrac43$&~$\tfrac53$  \\
		(Rocks)\footnotesize & (\cref{thm:speedsandLB}) & (\cref{thm:speedLPT}) & \cite{SteinZ20} & \cite{SteinZ20} \\
		\midrule
		Equal-size jobs&$\rhosandm m~$&$1.8$&\multicolumn{2}{c}{$\tfrac43$} \\
		(Bricks) & (\cref{thm:speedsandLB}) & (\cref{theo:speedbrickUB}) & \multicolumn{2}{c}{(\cite{SteinZ20}, \cref{thm:BricksUB})} \\ 
		\midrule
		Infinitesimal jobs& \multicolumn{2}{c}{$\rhosandm m \leq\tfrac{e}{e-1} \approx 1.58$} 
		& \multicolumn{2}{c}{$\rhoid(m) \leq \tfrac{1+\sqrt{2}}{2} \approx 1.207$} \\
		(Sand) & \multicolumn{2}{c}{(\cref{thm:speedsandLB,thm:speedsandUB})} 
		& \multicolumn{2}{c}{(\cite{SteinZ20}, \cref{theo:UBSandOnParallel})} \\ 
		\bottomrule
	\end{tabular}
	\caption{Summary of results on \srs.}\label{tab:results}
\end{table}

\section{Speed-Robust Scheduling with Infinitesimal Jobs}\label{sec:Sand}

	In this section, we assume that there are infinitely many jobs of infinitesimal processing time, we say {\em infinitesimal jobs}.
	We give optimal algorithms for \srs for both, the general case (\Cref{sec:SandOnRelated}) and the
	special case with speeds in~$\{0,1\}$ (\Cref{sec:SandOnParallel}).

\subsection{General Speeds}\label{sec:SandOnRelated}

	We present an algorithm for \srs with
		infinitesimal jobs that achieves a best-possible robustness factor of~$\rhosand$ for all~$m\geq 1$, where 
			\[
				\rhosand = \frac{\nrbags^\nrbags}{\nrbags^\nrbags-(\nrbags-1)^\nrbags} 
				\leq \frac{e}{e-1} \approx 1.58\,.
			\]

	We first show that, even when we
	restrict the adversary to a particular set of speed configurations, no deterministic
	algorithm can achieve a robustness factor better than~$\rhosand$. Note that
	since we can scale all speeds equally by an arbitrary factor without
	influencing the robustness factor, we can assume that the sum of the speeds is
	equal to~$1$. Similarly, we can assume that the total processing time of the
	jobs is equal to~$1$, such that the optimal makespan of the adversary is equal
	to~$1$ and the worst-case makespan of an algorithm is equal to its robustness
	factor.
	
	Intuitively, the idea
	behind the set of $m$ speed configurations is that the adversary can set
	$\nrbags-1$ machines to equal low speeds and one machine to a high speed. The low speeds are set such
	that one particular bag size just fits on that machine when aiming for the given robustness
	factor. This immediately implies that all larger bags have to be put on the fast
	machine together. This way, the speed configuration can \emph{target} a certain 
	bag size. We provide specific bag sizes that achieve a robustness 
	of~$\rhosand$ and show that for the speeds targeting these bag sizes, other bag sizes would result in even larger robustness factors.
	
	We define~$U=\nrbags^\nrbags$,~$L=\nrbags^\nrbags-(\nrbags-1)^{\nrbags}$, and
	$t_k = (\nrbags-1)^{\nrbags-k} \nrbags^{k-1}$ for~$k\in\{1,\ldots,\nrbags\}$.
	Intuitively, these values are chosen such that the bag sizes~$t_i/L$ are optimal 
	and~$t_i/U$ corresponds to the low speed of the~$i$-th speed configuration.
	It is easy to verify that~$\rhosand=U/L$ and for all~$k$ we have
	\begin{equation}\label{eq:ULt}
		\sum_{i<k} t_i =  (\nrbags-1) t_{k} -U + L\,. 	
	\end{equation} 
	In particular, this implies that~$\sum_{i\leq \nrbags} t_i = \nrbags t_\nrbags -U+L = L$
	and, hence, that the sum of the bag sizes is~$1$.
	Let~$a_1\le\cdots\le a_\nrbags$ denote the bag sizes chosen by an algorithm 
	and~$s_1\le\cdots\le s_\nrbags$ the speeds chosen by the adversary.
	\begin{theorem}
	\label{thm:speedsandLB}
		For any~$\nrbags\geq 1$, no deterministic algorithm for \srs
		with
		infinitesimal jobs can have a robustness factor less than~$\rhosand$. 
	\end{theorem}
	\begin{proof}
		We restrict the adversary to the following~$\nrbags$ speed configurations 
		indexed by~$k\in\{1,\ldots,\nrbags\}$:
		\[ 
			\mc S_k := \big\{s_1 = t_k/U,~ s_2 = t_k/U,~ \dots ,~ s_{\nrbags-1} 
			= t_k/U,~ s_\nrbags = 1 - (\nrbags-1) t_k/U\big\}\,.
		\]
		Note that for all~$k\in\{1,\ldots,\nrbags\}$, we have~$\nrbags t_k \leq U$ and, thus,~$s_\nrbags\geq s_{\nrbags-1}$.
		
		We show that for any bag sizes~$a_1,\ldots,a_\nrbags$, the adversary can force 
		the algorithm to have a makespan of at least~$U/L$ with some~$\mc S_k$. Since the 
		optimal makespan is fixed to be equal to~$1$ by assumption, this implies a robustness 
		factor of at least~$U/L$.
		
		Let~$k^\star$ be the smallest index such that~$a_k\ge t_k/L$. Such an index exists
		because the sum of the~$t_i$'s is equal to~$L$ (\Cref{eq:ULt}) and the sum of
		the~$a_i$'s is equal to~$1$. Now, consider the speed configuration~$\mc S_{k^\star}$. If
		one of the bags~$a_i$ for~$i\geq k^\star$ is not scheduled on the~$\nrbags$-th machine,
		the makespan is at least~$a_i/s_1\geq a_{k^\star}U/t_{k^\star} \geq U/L$. Otherwise, all~$a_i$
		for~$i\geq k^\star$ are scheduled on machine~$\nrbags$. Then, using Equation~\eqref{eq:ULt}, the load on that machine 
		is at least
		\begin{align*}
			\sum_{i\geq k^\star} a_i = 1 - \sum_{i< k^\star} a_i  \geq 1 - \frac 1L \sum_{i< k^\star} t_i   
			= \frac{1}{L}\left(L - (\nrbags-1)t_{k^\star} + U - L\right)
			= \frac UL s_\nrbags\,.
		\end{align*}
		Thus, either a machine~$i<m$ with a bag~$i'\geq k^*$
	    or machine~$i=\nrbags$ has a load of at least~$s_i \cdot U/L~$ and determines the makespan.
	\end{proof}
	
	For given bag sizes, we call a 
	speed configuration that maximizes the minimum makespan a \emph{worst-case speed configuration}.
	Before we provide the strategy that obtains a matching robustness factor, 
	we state a property of such best strategies for the adversary. 
	\begin{restatable}{lemma}{LemmaFullAssignment}\label{clm:fullassignment}
		 Given bag sizes and a worst-case speed configuration, for each machine~$i$, there exists an optimal assignment of the bags to the machines such that only machine~$i$ determines the makespan. 
	\end{restatable}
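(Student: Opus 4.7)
The plan is a perturbation argument in the space of speed vectors. Fix the worst-case configuration $s^*$, write $L_\sigma(k)$ for the total bag size placed on machine~$k$ by an assignment~$\sigma$, set $C_\sigma(s) = \max_k L_\sigma(k)/s_k$ (the makespan of $\sigma$ under speeds~$s$), and let $M^* = C(s^*) = \min_\sigma C_\sigma(s^*)$. Denote by $\Sigma^*$ the set of assignments attaining $M^*$ at $s^*$; crucially, because bags and machines are finite in number, $\Sigma^*$ is a finite subset of a universe of at most $m^m$ assignments. I aim to show that for every machine~$i$ with $s^*_i>0$ there is a $\sigma \in \Sigma^*$ in which $i$ is the unique machine attaining the maximum load-to-speed ratio.

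Suppose, for contradiction, that for some such $i$ every $\sigma \in \Sigma^*$ has a bottleneck machine $k(\sigma) \neq i$. Let $J = \{j \neq i : s^*_j > 0\}$; the map $k(\cdot)$ must take values in $J$, since a machine with $s^*_k = 0$ carrying a bag would force $C_\sigma(s^*) = \infty > M^*$. I perturb by setting $s'_i = s^*_i + \epsilon$ and $s'_j = s^*_j - \epsilon/|J|$ for $j \in J$, leaving all remaining speeds at~$0$. For every $\epsilon \in (0, \epsilon_0)$ with $\epsilon_0$ small enough, $s'$ is a valid speed vector with $\sum_j s'_j = 1$.

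For every $\sigma \in \Sigma^*$, the bottleneck $k(\sigma) \in J$ has a strictly smaller speed under $s'$, so $C_\sigma(s') \geq L_\sigma(k(\sigma))/s'_{k(\sigma)} > M^*$. For every $\sigma \notin \Sigma^*$ with $C_\sigma(s^*) < \infty$, continuity of $C_\sigma$ at $s^*$ combined with the strict slack $C_\sigma(s^*) > M^*$ gives $C_\sigma(s') > M^*$ for $\epsilon$ small; assignments with $C_\sigma(s^*) = \infty$ keep large values at $s'$ because the offending machine either stays at speed~$0$ or has speed at most $\epsilon$ while carrying a fixed positive load. Exploiting the finiteness of the set of assignments, a uniform $\epsilon > 0$ works across all~$\sigma$, yielding $C(s') > M^*$ and contradicting the worst-case property of $s^*$.

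The main delicate point is the simultaneous control of all non-optimal assignments as $\epsilon \to 0$; this is resolved by the finiteness of the assignment space, which lets me take a uniform lower bound on the slack $C_\sigma(s^*) - M^*$ across $\sigma \notin \Sigma^*$ and a uniform modulus of continuity. Machines with $s^*_i = 0$ need no separate argument: they cannot be the bottleneck of any finite-makespan assignment, and the claim is vacuous under the natural convention that an empty machine does not determine the makespan.
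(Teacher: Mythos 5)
Your proposal is correct and takes essentially the same route as the paper: both proofs perturb the worst-case speed vector by adding~$\epsilon$ to~$s_i$ and subtracting a compensating amount from the other machines, then show that every bag-to-machine assignment now has makespan strictly larger than~$M^*$, contradicting that~$s^*$ maximizes the minimum makespan. The finiteness of the assignment universe is what lets both arguments choose a single small~$\epsilon$ uniformly, just as you note.

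The one place you are actually more careful than the paper: you subtract only from machines in~$J=\{j\neq i : s^*_j>0\}$ and leave speed-zero machines at zero. The paper subtracts~$\epsilon/(m-1)$ from \emph{all} other machines, which would produce negative speeds if any~$s^*_k=0$; the paper sidesteps this by observing afterwards that the lemma \emph{implies} all speeds are nonzero in a worst-case configuration, but your version avoids the circularity cleanly. One tiny blemish in your write-up: in the~$C_\sigma(s^*)=\infty$ case, the clause ``or has speed at most~$\epsilon$'' never arises --- the offending machine has~$s^*_k=0$, hence~$k\notin J\cup\{i\}$, hence~$s'_k=0$ exactly --- so that alternative can simply be dropped.
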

	\begin{proof}
		Consider a given set of bag sizes and a speed configuration~$\{s_1,\dots,s_m\}$ 
		that maximizes the minimum makespan for those bag sizes. Let~$\cmax^*$ be the minimum makespan of the best
		assignment of the bags given these speeds. This implies that, for any other speed configuration,
		there exists an assignment which has a makespan at most~$\cmax^*$. 
		
		We prove the lemma by contradiction. If there exists a machine~$i$ that does
		not satisfy the lemma, we increase its speed by an additive factor
		of~$\varepsilon$ and we decrease the speed of all other machines
		by~$\varepsilon/(m-1)$. Pick~$\varepsilon$ such that all (non-optimal)
		assignments that cause the load of machine~$i$ to be strictly greater
		than~$\cmax^* \cdot s_i$ still satisfy that their respective load on
		machine~$i$ is strictly greater than ~$\cmax^* \cdot (s_i + \varepsilon)$.
		Denote the new speeds by~$s_{i'}'$ for~$1 \leq i'\leq m$. Now, consider any
		assignment. If the load of machine~$i$ is larger than~$\cmax^*\cdot s_i$, it
		is also larger than~$\cmax^*\cdot s_i'$ by the construction of~$s_i'$ and the assignment still has a makespan strictly larger than~$\cmax^*$.
		Conversely, if the load of machine~$i$ is at most~$\cmax^*\cdot s_i$,
		there must be a machine~$i'\neq i$ with load at
		least~$\cmax^*\cdot s_{i'}$. Otherwise, either~$i$ is the only machine that admits the makespan and already satisfies the lemma, or the assignment has a makespan smaller than~$\cmax^*$, a contradiction.
		Consider any assignment where at least one
		other machine~$i'$ has load at least~$\cmax^*\cdot s_{i'}$. Since we decreased
		the speed of all machines except machine~$i$, the load of machine~$i'$ is
		strictly larger than~$\cmax^*\cdot s_{i'}'$ leading to a makespan strictly
		greater than~$\cmax^*$. This contradicts that the speed configuration
		maximizes the minimum makespan since every assignment with the new speeds has
		a makespan strictly larger than~$\cmax^*$.
	\end{proof}

	Note that, by~\Cref{clm:fullassignment}, for a worst-case speed configuration, many 
	bag-to-machine assignments obtain the 
	optimal makespan. \Cref{clm:fullassignment} also implies that, for such a
	speed configuration, all speeds are non-zero.
	Indeed, if a machine has a speed equal to zero, then it cannot determine the makespan in an optimal assignment (a better speed configuration would slow down other machines to increase its speed).
	
	Let \sandalg denote the algorithm that creates~$\nrbags$ bags of the following sizes
	\[
		a_1 = t_1/L,~ a_2 = t_2/L,~ \dots ,~ a_m = t_m/L\,.
	\]
	Note that this is a valid algorithm since the sum of these bag sizes is equal to~$1$.
	Moreover, these bag sizes are exactly such that if we take the speed configurations 
	from the proof of \Cref{thm:speedsandLB}, placing bag~$j$ on a slow machine in configuration~$j$
	results in a makespan that is equal to~$\rhosand$.
	
	We proceed to show that \sandalg has a robustness factor of~$\rhosand$.
	\begin{theorem}\label{thm:speedsandUB}
		For any~$m\geq 1$, \sandalg is $\rhosand$-robust for \srs with
		 infinitesimal jobs.
	\end{theorem}
	\begin{proof}
		Let~$a_1,\ldots,a_\nrbags$ be the bag sizes as specified by \sandalg and let
		$s_1,\ldots,s_\nrbags$ be a speed configuration that maximizes the minimum
		makespan given these bag sizes. Further, consider an optimal assignment of
		bags to machines and let~$\cmax^*$ denote its makespan. We use one particular
		(optimal) assignment to obtain an upper bound on~$\cmax^*$. By
		\Cref{clm:fullassignment}, there exists an optimal assignment where only
		machine~$1$ determines the makespan, i.e., machine~$1$ has load~$\cmax^*\cdot
		s_1$ and any other machine~$i$ has load strictly less than~$\cmax^* \cdot
		s_i$. Consider such an assignment. If there are two bags assigned to
		machine~$1$, then there is an empty machine with speed at least~$s_1$.
		Therefore, we can put one of the two bags on that machine and decrease the
		makespan. This contradicts~$\cmax^*$ being the optimal makespan, so there is
		exactly one bag assigned to machine~$1$. Let~$k$ be the index of the unique
		bag placed on machine~$1$, i.e.,~$\cmax^*=a_k/s_1$, and let~$\ell$ be the
		number of machines of speed~$s_1$.
				
		If~$a_k>a_\ell$, machine~$i\in\{1,\ldots,\ell\}$ with speed~$s_1$ can
		be assigned bag~$i$ with a load that is strictly less than~$\cmax^*\cdot s_1$.
		Thus, given the current assignment, we can remove bag~$a_k$ from machine~$1$ and place
		the~$\ell$ smallest bags on the~$\ell$ slowest machines, one per machine, e.g., bag~$a_i$ on machine~$i$ for~$i\in\{1,\ldots,\ell\}$. This empties at least one
		machine of speed strictly larger than~$s_1$. Then, we can place bag~$a_k$ on this, now
		empty, machine, which yields a makespan that is strictly smaller than~$\cmax^*$. This
		contradicts the assumption that~$\cmax^*$ is the optimal makespan and,
		thus,~$a_k\le a_\ell$, which implies~$k \leq \ell$.
		
		Let~$\machineload_i$ denote the total processing time of bags that are
		assigned to machine~$i$ and let~$C$ be the total \emph{remaining capacity} of
		the assignment, that is,~$C:=\sum_{i=1}^{m} (s_i\cmax^*-\machineload_i)$. We
		construct an upper bound on~$C$, which allows us to bound~$\cmax^*$.
		
		Machines in the set~$\{2,\ldots,\ell\}$ cannot be assigned a bag of size
		larger than~$a_k$ since their load would be greater than~$\cmax^* \cdot s_1$,
		causing a makespan greater than~$\cmax^*$. Therefore, we assume without loss
		of generality that all bags~$a_j<a_k$ are assigned to a machine with
		speed~$s_1$. The total remaining capacity on the first~$k$ machines is
		therefore equal to~$(k-1) a_k - \sum_{i< k} a_i$.
		
		Consider a machine~$i>k$. If its remaining capacity  
		is greater than~$a_k$, then 
		we can decrease the makespan of the assignment by moving bag~$k$ to 
		machine~$i$. Therefore, the remaining capacity on machine~$i$ is at most~$a_k$.
			
		Combining the above and using \eqref{eq:ULt}, we obtain:
		\begin{align*}
			C &\leq (m-1) a_k - \sum_{i< k} a_i = \frac{1}{L}\left( (m-1) t_k - \sum_{i<
			k} t_i\right) = \frac{1}{L}\left(U-L\right)\,.
		\end{align*}
		The total processing time is~$\sum_{i=1}^m a_i = 1$, and the maximum total 
		processing time the machines could process with makespan~$\cmax^*$
		is~$\sum_{i=1}^m s_i \cmax^* = \cmax^*$. Since the latter is equal to the total processing time 
		plus the remaining capacity, we have~$\cmax^* = 1+C \leq U/L$, which proves the~lemma.
	\end{proof}
	
	While the robustness factor~$\rhosandm{m}$ is best possible for every~$m$ for any deterministic algorithm, this is not true
	when we allow
	algorithms that make randomized decisions and compare to an oblivious
	adversary. For~$m=2$, uniformly randomizing between bag sizes~$a_1=a_2=1/2$ 
	and~$a_1=1/4$,~$a_2=3/4$ yields a robustness factor that is slightly better 
	than~$\rhosandm{2} = 4/3$. Interestingly, with speeds in~$\{0,1\}$, the optimal
	robustness factor is equal for deterministic and randomized~algorithms as we show in \Cref{thm:BricksLBrand}.

\subsection{Speeds in \texorpdfstring{$\{0,1\}$}{0,1}}\label{sec:SandOnParallel}
	This section is devoted to showing that 	the best-possible robustness factor that can be achieved with speeds in~$\{0,1\}$ 
	is precisely
	\[
		\rhosandid(m) = \max_{t\in\N,~t\leq \frac m2}\ \frac{1}{\frac t{m-t} 
					+\frac{m-2t}m} \ \leq\  \frac{1+\sqrt{2}}{2} = \rhoid \approx 1.207\, 
	\]
	for both, deterministic and randomized algorithms. The main contribution is the upper bound, that is, the following theorem.
	
	\begin{restatable}{theorem}{IdenticalWSand}\label{thm:IdenticalWSand}\label{theo:UBSandOnParallel}
		For all~$m\geq 1$, there is a deterministic $\rhoid(m)$-robust algorithm for \srs with speeds in~$\{0,1\}$ for infinitesimal jobs.
\end{restatable}

	The
	lower bound for deterministic algorithms and some useful insights were
	already presented in~\cite{SteinZ20}. We recall some of these
	insights here because they are used in the proof. To do so, we introduce 
	some necessary notation used in the remainder of this paper. The number
	of failing machines (i.e., machines with speed equal to~$0$) is referred to as~$t\geq0$, and
	we assume w.l.o.g.\ that these are machines~$1,\dots,t$. Furthermore, we assume
	for this subsection again w.l.o.g.\ that the total volume of infinitesimal jobs
	is~$m$, and we will define bags~$1,\dots,m$ with respective sizes
	$a_1\leq\dots\leq a_m$ summing to at least~$m$ (the potential excess being unused).	
	\begin{lemma}[Statement (3) in~\cite{SteinZ20}]\label{lem:folding}
	 	For all~$m\ge1$ and~$t \leq m/2$, there exists a makespan-minimizing allocation of bags to
	 	machines for \srs with speeds in~$\{0,1\}$ and infinitely many infinitesimal jobs that assigns the smallest~$2t$ bags to machines~$t+1,\ldots,2t$.
	\end{lemma}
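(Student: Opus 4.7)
The plan is to prove, by induction on $m-2t\ge 0$, the equivalent statement that some optimal allocation places each of the $m-2t$ largest bags $a_{2t+1},\dots,a_m$ alone on its own machine; since the non-failing machines are identical, a relabeling then yields the stated conclusion that the smallest $2t$ bags lie on machines $t+1,\dots,2t$. The base case $m=2t$ is vacuous: there are no bags to isolate, and any assignment of the $2t$ bags to the $t$ non-failing machines satisfies the claim.

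For the inductive step $m-2t\ge 1$, I first argue that $a_m$ can be made to be alone on its machine in some optimal. Fix an optimal allocation of makespan $C^*$ and let $k$ denote the machine containing $a_m$, with $n_k$ bags on it and load $L_k\le C^*$. As long as $n_k\ge 2$, the hypothesis $m>2t$ gives $\sum_{j\ne k} n_j = m-n_k \le m-2 < 2(m-t-1)$, so by pigeonhole some other machine $j$ has at most one bag. I would then move a companion $b$ of $a_m$ from $k$ to $j$: if $j$ is empty, the new load $b\le L_k-a_m\le C^*$; if $j$ holds a single bag $c$, then $c\le a_m$ because $a_m$ is the largest bag, so the new load on $j$ becomes $b+c\le (L_k-a_m)+a_m = L_k \le C^*$. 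Either way the makespan is preserved and $n_k$ decreases by one; iterating isolates $a_m$ on its own machine.

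Once $a_m$ is isolated on $k$, the remaining $m-1$ bags occupy $m-t-1$ non-failing machines, forming a sub-instance with $m'=m-1$, $t'=t$, $m'-2t' = (m-2t)-1$, and $t'\le (m-1)/2$ (since $m\ge 2t+1$). Applying the inductive hypothesis to this sub-instance yields an optimal allocation in which the $(m-2t)-1$ largest remaining bags $a_{m-1},\dots,a_{2t+1}$ are each alone on a machine, with makespan at most $C^*$. Re-inserting $a_m$ on $k$ then gives an overall allocation with the desired structure and makespan $\le C^*$, which is therefore itself optimal.

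The main obstacle is the ``move a companion'' step when no target machine is empty. The decisive observation that makes everything work is that any one-bag target machine necessarily contains a bag $c\le a_m$ (because $a_m$ is the global maximum), which yields the clean bound $b+c\le L_k\le C^*$; everything else reduces to pigeonhole bookkeeping enabled by $t\le m/2$ and a straightforward descent in the induction parameter $m-2t$.
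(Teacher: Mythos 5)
Your proof is correct. Worth noting: the paper itself does not prove this lemma---it is imported directly from Stein and Zhong (cited as ``Statement (3)''), so there is no internal proof to compare against. Your argument is thus a self-contained derivation of a result the paper takes as given. Your reformulation (each of the $m-2t$ largest bags is alone on a machine) is indeed equivalent, the base case $m=2t$ is vacuous, the pigeonhole step is valid (under $n_k\ge 2$ and $m>2t$, $m-n_k\le m-2<2(m-t-1)$ forces some other machine to hold at most one bag), and the exchange move preserves the makespan because a single-bag target necessarily carries a bag $c\le a_m$, giving $b+c\le b+a_m\le L_k\le C^*$. The downward recursion on $m-2t$ with $m'=m-1$, $t'=t$ satisfies $t'\le m'/2$ exactly because $m\ge 2t+1$, and re-inserting $a_m$ (with $a_m\le C^*$) keeps the composite allocation optimal. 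This is a clean, elementary exchange-plus-induction argument; the only stylistic remark is that you could dispense with the explicit induction by iterating the same isolation step directly on $a_m, a_{m-1},\ldots,a_{2t+1}$ in decreasing order, re-running the pigeonhole count on the shrinking machine set at each stage, but the inductive phrasing you chose is equally rigorous and arguably cleaner to verify.
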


	Since \Cref{lem:folding} only works for~$t\leq m/2$, one may worry that, for larger~$t$,
	there is a more difficult structure to understand. The following insight shows
	that this worry is unjustified. Indeed, if~$\nrmach< m/2$ is the number of
	machines that do \emph{not} fail, one can simply take the solution for
	$2\nrmach$ machines and assign the bags from any two machines to one machine.
	The optimal makespan is doubled and that of the algorithm is at most
	doubled, conserving the robustness.
	
	\begin{lemma}[Proof of Theorem 2.2 in~\cite{SteinZ20}]\label{lem:m-half}
		Let~$\rho>1$. For all~$m\ge1$, if an algorithm is $\rho$-robust for 
		\srs with speeds in~$\{0,1\}$ and infinitely many infinitesimal jobs 
		for~$t\leq m/2$, it is $\rho$-robust for $t\leq m-1$.
	\end{lemma}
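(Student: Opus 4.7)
The plan is to reduce any case $t > m/2$ to the hypothesis by iteratively \emph{folding} pairs of working machines. First I would prove a one-step folding lemma: any assignment of the algorithm's $m$ bags to $2j$ working machines with makespan $L$ yields an assignment to $j$ working machines with makespan at most $2L$, obtained by pairing the $2j$ machines arbitrarily into $j$ pairs and combining the bags of each pair onto a single machine of the pair. This is immediate since loads are additive.

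Fix $m$ and $t \in (m/2,\,m-1]$, let $m' := m - t < m/2$, and let $k$ be the smallest positive integer with $2^k m' \ge m/2$. By minimality, $2^{k-1} m' < m/2$, hence $2^k m' < m$, so $2^k m'$ is a valid number of working machines within the same instance of $m$ total machines. Equivalently, the corresponding number of failures $t_k := m - 2^k m'$ satisfies $0 \le t_k \le m/2$, so the hypothesis applies at $t_k$: the algorithm admits an assignment of its $m$ bags to the $2^k m'$ working machines with makespan at most $\rho \cdot V/(2^k m')$, where $V$ denotes the total volume of the infinitesimal jobs and $V/(2^k m')$ is the optimum at~$t_k$.

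I would then apply the folding lemma $k$ times to this assignment, each step halving the number of working machines and at most doubling the makespan. After $k$ folds the $m$ bags are assigned to exactly $m'$ working machines with makespan at most $2^k \cdot \rho \cdot V/(2^k m') = \rho \cdot V/m' = \rho \cdot \mathrm{OPT}(t)$, establishing $\rho$-robustness at~$t$.

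The only real obstacle is the bookkeeping: verifying that the chosen $k$ keeps $2^k m'$ in the range $[m/2,\,m)$ and that the geometric $2^k$-blowup in the algorithm's makespan is exactly matched by the $2^k$-gap between $\mathrm{OPT}(t_k) = V/(2^k m')$ and $\mathrm{OPT}(t) = V/m'$. Both facts fall out immediately from the minimality of~$k$, so no substantive combinatorial difficulty arises beyond the trivial folding step; in particular, infinitesimal jobs guarantee that the pairing step never produces any indivisibility obstruction.
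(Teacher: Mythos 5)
Your proof is correct and uses the same core idea as the paper: take a guaranteed assignment on more working machines and fold pairs of machines together, at most doubling both the algorithm's makespan and the optimum. The paper's one-sentence sketch only states the single fold from $2m'$ down to $m'$ working machines, which by itself suffices only when $m' \ge m/4$; your version makes the necessary iteration explicit by choosing the smallest $k$ with $2^k m' \ge m/2$ and folding $k$ times, which cleanly handles all $m' < m/2$ and is exactly the bookkeeping the paper's sketch leaves implicit.
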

	
	We will thus focus on computing bag sizes such that the makespan of a best
	allocation according to Lemma~\ref{lem:folding} is within a~$\rhosandid(m)$
	factor of the optimal makespan when~$t\leq m/2$. The approach
	in~\cite{SteinZ20} to obtain the (as we show tight) lower bound~$\rhosandid(m)$
	is as follows. Given some~$t\leq m/2$ and a set of bags allocated according to Lemma~\ref{lem:folding},
	\begin{enumerate}
		\item[(i)] the load 
		on machines~$t+1,\dots,2t$ is at most~$\rhosandid(m)$
		times the optimal makespan~\mbox{$m/(m-t)$}, and
		\item[(ii)] the load 
		on machines~$2t+1,\dots,m$ is a most
		$\rhosandid(m)$ because those machines only hold a \emph{single} bag after a
		simple ``folding'' strategy for assigning bags to machines, which we define
		below.
	\end{enumerate}
	In particular, since~$t=0$ is possible, (ii) implies that all bag sizes are at most
	$\rhosandid(m)$.
	The fact that the total processing volume of~$m$ has to be accommodated and maximizing over
	$t$ results in the lower bound given in Theorem~\ref{thm:IdenticalWSand}.
	
	To define the bag sizes leading to a matching upper bound, we further
	restrict our choices when~$t \leq m/2$ machines fail. Of course, since we match
	the lower bound, the restriction is no limitation but rather a simplification.
	When $t \leq m/2$ machines fail, we additionally assume that the machines~$t+1,
	\ldots, 2t$ receive exactly two bags each: Assuming~$t\leq m/2$, the
	\simplefold of these bags onto machines assigns bags~$i\geq t+1$ to machine
	$i$, and bags~$i=1,\dots,t$ (recall machine~$i$ fails) to machine~$2t-i+1$.
	Hence, bags~$1,\dots,t$ are ``folded'' onto machines~$2t,\dots,t+1$ (sic),
	visualized in Figure~\ref{fig:folding}. 
	
	For given~$m$, let~$t^\star$ be an optimal adversarial choice for~$t$ in
	\cref{theo:UBSandOnParallel}. Assuming there are bag sizes~$a_1,\ldots,a_m$
	that match the bound~$\rhosandid(m)$ through simple folding, by (i) and (ii), we
	precisely know the makespan on all machines after folding when~$t=t^\star$.
	That fixes~$a_{i}+a_{2t+1-i}=\rhosandid(m)\cdot m/(m-t)$ for all~$i=1,\dots,t$
	and~$a_{2t+1},\dots,a_m=\rhosandid(m)$, see Figure~\ref{fig:folding}. In
	contrast to~\cite{SteinZ20}, we show that defining~$a_i$ for~$i=1,\dots,t$ to
	be essentially a linear function of~$i$, and thereby fixing all bag sizes,
	suffices to match~$\rhosandid(m)$. The word ``essentially'' can be dropped when
	replacing~$\rhosandid(m)$ by~$\rhosandid$.

	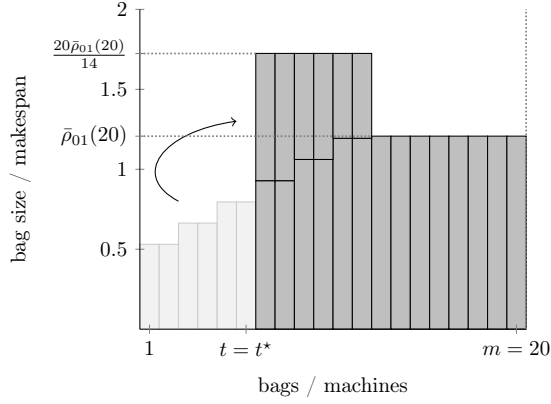
\begin{figure}
		\centering
		\begin{tikzpicture}[fct/.style={very thick, domain=0:1}, samples=100, font=\small, scale=0.8]
			\begin{scope}[xscale=6.42,yscale=2.66]
				\draw[fill=black!5,draw=black!20] (0,0) rectangle (0.05,0.530504);
				\draw[fill=black!5,draw=black!20] (0.05,0) rectangle (0.1,0.530504);	
				\draw[fill=black!5,draw=black!20] (0.1,0) rectangle (0.15,0.66313);
				\draw[fill=black!5,draw=black!20] (0.15,0) rectangle (0.2,0.66313);	
				\draw[fill=black!5,draw=black!20] (0.2,0) rectangle (0.25,0.795756);
				\draw[fill=black!5,draw=black!20] (0.25,0) rectangle (0.3,0.795756);	
				\draw[fill=black!25] (0.3,0) rectangle (0.35,0.928382);
				\draw[fill=black!25] (0.35,0) rectangle (0.4,0.928382);	
				\draw[fill=black!25] (0.4,0) rectangle (0.45,1.06101);
				\draw[fill=black!25] (0.45,0) rectangle (0.5,1.06101);	
				\draw[fill=black!25] (0.5,0) rectangle (0.55,1.19363);
				\draw[fill=black!25] (0.55,0) rectangle (0.6,1.19363);	
				\draw[fill=black!25] (0.6,0) rectangle (0.65,1.2069);
				\draw[fill=black!25] (0.65,0) rectangle (0.7,1.2069);	
				\draw[fill=black!25] (0.7,0) rectangle (0.75,1.2069);
				\draw[fill=black!25] (0.75,0) rectangle (0.8,1.2069);
				\draw[fill=black!25] (0.8,0) rectangle (0.85,1.2069);
				\draw[fill=black!25] (0.85,0) rectangle (0.9,1.2069);						
				\draw[fill=black!25] (0.9,0) rectangle (0.95,1.2069);
				\draw[fill=black!25] (0.95,0) rectangle (1,1.2069);
				
				\draw[fill=black!25] (0.3,0.928382) rectangle (0.35,1.7241);
				\draw[fill=black!25] (0.35,0.928382) rectangle (0.4,1.7241);
				\draw[fill=black!25] (0.4,1.06101) rectangle (0.45,1.7241);
				\draw[fill=black!25] (0.45,1.06101) rectangle (0.5,1.7241);
				\draw[fill=black!25] (0.5,1.19363) rectangle (0.55,1.7241);
				\draw[fill=black!25] (0.55,1.19363) rectangle (0.6,1.7241);
				
				\draw[->] (0.1,0.8) to [bend left = 55] (0.25,1.3);
				%
			\end{scope}					
			\begin{axis}[	
				axis lines=middle,
				axis line style={-},
				ylabel near ticks,
				xlabel near ticks,
				ylabel = {bag size / makespan},
				xlabel = {bags / machines},
				xtick = \empty,
				extra x ticks = {0.025,0.275,0.975},
				extra x tick labels = {1,$t=t^\star$,$m=20$},			
				extra y ticks={1.2069,1.7241},
				extra y tick labels={$\bar{\rho}_{01}(20)
					$,$\frac{20\bar{\rho}_{01}(20)}{14} 
					$},
				domain=0:1, ymin=0, xmax=1, xmin=0, ymax=2,
				width=8cm]
				\addplot[domain=0:2.36, gray, densely dotted, thick] (1,x);
				\addplot[domain=0:0.6, gray, densely dotted, thick] (x,1.207);	
				\addplot[domain=0:0.3, gray, densely dotted, thick] (x,1.724);
			\end{axis}
		\end{tikzpicture}
		\caption{Folding optimally sized bags when~$t^\star$ machines fail.}\label{fig:folding}
	\end{figure}
		
	A clean way of thinking about the bag sizes is through \emph{profile functions}
	which reflect the distribution of load over bags in the limit case
	$m\rightarrow\infty$. Specifically, we identify the set~$\{1,\dots,m\}$ with
	the interval~$[0,1]$ and define a continuous non-decreasing profile function
	$\bar{f}:[0,1]\rightarrow\mathbb{R_+}$ integrating to~$1$. A simple way of
	getting back from the profile function to actual bag sizes of total size
	approximately~$m$ is by equidistantly ``sampling''~$\bar{f}$, that is, by defining
	$a_i:=\bar{f}(\frac{i-1/2}m)$ for all~$i$.
	
	Our profile function~$\bar f$ implements the above observations and ideas in
	the continuous setting. Indeed, our choice
	\[
		\bar f(x)= \min\left\{\frac12+\rhoid\cdot x,\rhoid\right\}=\min\left\{\frac12+\frac{(1+\sqrt2)\cdot x}{2},\frac{1+\sqrt2}{2}\right\}
	\]
	is linear up to~$\beta = 2-\sqrt{2}$, which turns out to be equal to $\lim_{m\rightarrow\infty}2t^\star/m$, and then constantly equal to $(1+\sqrt{2)}/2$ since $$\frac12+\frac{(1+\sqrt2)\cdot x}{2}\leq \frac{1+\sqrt2}{2}\;\Leftrightarrow\; x\leq \frac{\sqrt{2}}{1+\sqrt{2}} = 2-\sqrt{2}.$$ We give some intuition
	for why this function yields the desired bound using the continuous counterpart of folding. When
	$t\leq t^\star$ machines fail, i.e., a continuum of machines with measure
	$x\leq \beta/2$, we fold the corresponding part of~$\bar f$ onto the interval
	$[x,2x]$, yielding a rectangle of width~$x$ and height~$\bar f(0)+\bar
	f(2x)=2\bar f(x)$. We have to prove that the height does not exceed the optimal
	makespan~$1/(1-x)$ by more than a factor of~$\rhoid$. Equivalently, we maximize~$2 \bar f(x)(1-x)$ (even over~$x \in \R$) and observe the maximum of~$\rhoid=(1+\sqrt2)/2$ at
	$x=\beta/2$. When~$x\in(\beta/2,1/2]$, note that by folding, we \emph{still}
	obtain a rectangle of height~$2\bar f(x)$ (but width~$\beta-x$), dominating the
	load on the other machines. Hence, the makespan is at most~$\rhoid/(1-x)$ for every~$x \in [0,1/2]$.
		
	\begin{figure}
		\centering
		\begin{tikzpicture}	[fct/.style={thick}, font=\small, scale = 0.8]			
			\begin{axis}[
				axis lines=middle,
				axis line style={-},
				ylabel near ticks,
				xlabel near ticks,
				ylabel = {$\bar f$},
				xlabel = {$x$},
				xtick = \empty,
				extra y ticks = {1.2071},
				extra y tick labels = {$\bar{\rho}_{01}
					$},			
				extra x ticks = {0.2,0.4,0.586,0.8,1},
				extra x tick labels = {$0.2$,$0.4$,$\beta\approx0.586$,$0.8$,$1$},			
				domain=0:1, ymin=0, xmax=1, xmin=0, ymax=1.5,
				width=8cm]
				\addplot[ybar, bar width=0.05, samples=12, domain=0.025:0.575,fill=black!25] (x,0.5+1.2071*x);
				\addplot[ybar, bar width=0.05, samples=8, domain=0.625:0.975,fill=black!25] (x,1.2071);
				\addplot[fct, domain=0:0.586] (x,0.5+1.2071*x) node[pos=0.6,above] {$\bar f$};
				\addplot[fct, domain=0.586:1] (x,1.2071);
				\addplot[domain=0:2.36, densely dotted, ultra thick] (1,x);	
				\addplot[domain=0:1.2071, densely dotted, very thick] (0.586,x);				
				\addplot[domain=0:0.586, densely dotted, very thick] (x,1.2071);	
			\end{axis}	
		\end{tikzpicture}	
		\caption{The profile function~$\bar f$ and the equidistant ``sampling'' 
			of $\sandalg_{01}$.} \label{fig:sampling}
	\end{figure}
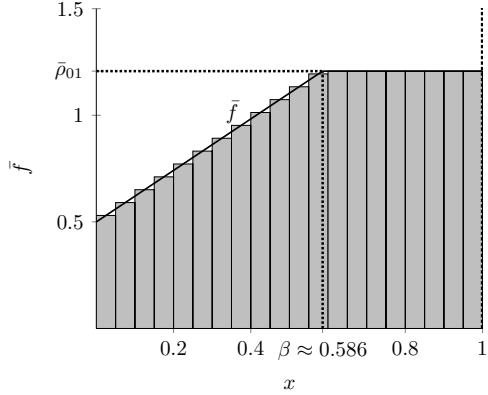
	
	Directly ``sampling''~$\bar f$, we obtain a bound of $\rhoid$, stated below.  
	Later, we make use of the corresponding simpler algorithm.  
	Let $\sandalg_{01}$ denote the algorithm that creates~$m$ bags of size
	$a_i:=\bar{f}\big(\frac{i-1/2}m\big)$, for~$i \in \{1,\ldots,m\}$. 
	See Figure~\ref{fig:sampling} for a visualization.
	
\begin{theorem}\label{prop:identical-steps} 
	$\sandalg_{01}$ is $\rhoid$-robust for \srs with speeds in $\{0,1\}$ and 
	infinitely many infinitesimal jobs for all~$m\geq 1$.
\end{theorem}

\begin{proof}
	Our proof naturally splits into two parts. In the first part, we show that the bag sizes are
	feasible, i.e., their total size is at least $m$. In the second part, we show that the bag sizes 
	achieve the claimed robustness factor.	
	
	To show that the bag sizes are feasible, we prove that 
	\begin{equation}\label{eq:sampling}
	a_i\geq m\cdot\int_{\frac{i-1}m}^{\frac im} \bar f(x) \dx
	\end{equation}	
	for all $i\in\{1,\dots,m\}$. Then the first part follows by summing
	over~\eqref{eq:sampling} for all~$i$ and indeed, as required for a profile
	function, $\bar f$ integrates to $1$:
	\begin{align*}
	\int_0^1 \bar f(x) \dx & = \int_0^\beta \left(\frac12 + \rhoid x \right)\dx + \int_{\beta}^{1} \rhoid \dx \\
	&  = \frac{\beta}{2} +\frac{\rhoid \beta^2}{2} + (1-\beta)\cdot\rhoid  = 1 \, .  
	\end{align*}
	For $i$ with $\beta\notin(\frac{i-1}m,\frac im)$, we have that $\bar f$ is linear on $[\frac{i-1}m,\frac im]$ and therefore
	\[
	\int_{\frac{i-1}{\nrbags}}^{\frac{i}{\nrbags}} \bar f(x) \dx = \left(\frac im
	- \frac{i-1}{\nrbags}\right)\cdot \bar f\left(i-\frac
	12\right)=\frac{f(i-1/2)}m=\frac{a_i}m\, .
	\]
	For the single $i$ with $\beta\in(\frac{i-1}m,\frac im)$ --- and there is at
	least one such $i$ because $\beta$ is irrational --- we even get a stronger
	bound due to the fact that $\bar f$ is strictly concave on that interval.
	Formally, we distinguish two cases, in which we use either of the two linear
	functions from the definition of $\bar f$ as upper bound on~$\bar f$. 
	
	If
	$\frac{i-1/2}{\nrbags} \leq \beta$, it follows that
	\begin{align*}
	&\int_{\frac{i-1}{\nrbags}}^{\frac{i}{\nrbags}} \bar f(x) \dx \leq
	\int_{\frac{i-1}{\nrbags}}^{\frac{i}{\nrbags}} \left(\frac12 + \rhoid x\right)
	\dx
	= \left(\frac im - \frac{i-1}{\nrbags}\right)\cdot \bar f\left(i-\frac
	12\right)
	=\frac{a_i}m\, .		
	\end{align*}
	
	For $\frac{i-1/2}{\nrbags} > \beta$, we have that 
	\[
	\int_{\frac{i-1}{\nrbags}}^{\frac{i}{\nrbags}} \bar f(x) \dx \leq
	\int_{\frac{i-1}{\nrbags}}^{\frac{i}{\nrbags}} \rhoid  \dx = \left(\frac im -
	\frac{i-1}{\nrbags}\right)\cdot \bar f\left(i-\frac
	12\right)=\frac{f(i-1/2)}m=\frac{a_i}m\, .
	\]
	That finishes the proof of~\eqref{eq:sampling} for all $i$ and 
	verifies that our bag sizes are feasible.
	
	It remains to show that our bag sizes achieve the claimed robustness factor of
	$\rhoid\approx 1.207$. Essentially, the argument is a formal version of the
	intuitive argument we gave in the continuous setting, restricted to~$x$~(the
	measure of the continuum of failing machines) being $\frac{i-1/2}m$ for some
	$i\in\{1,\dots,m\}$. By Lemma~\ref{lem:m-half}, it suffices to consider the
	case that $t\leq m/2$ machines fail. By our self-imposed restriction, we only
	consider bag-to-machine assignments obtained through simple folding. Note that
	we only need to bound the load on machines that have two bags assigned to them
	after folding. Recall that these machines are machines $t+1,\dots,2t$; for all
	$i\in\{1,\dots,t\}$, bags $i$ and $2t+1-i$ are assigned to machine~$2t+1-i$.
	Also recall that
	\[
	a_i\leq\frac12+\rhoid\cdot\frac{i-\frac12}m
	\] 
	for all $i\in\{1,\dots,m\}$. Hence, the load created by bags $i\in\{1,\dots,t\}$ and $2t+1-i$ on machine~$2t+1-i$ is 
	\[
	a_i+a_{2t+1-i}\leq\frac12+\rhoid\cdot\frac{i-\frac12}m+\frac12+\rhoid\cdot\frac{2t+1-i-\frac12}m=1+2\rhoid\cdot \frac tm\,.
	\] 
	We would like to show that this load is at most a $\rhoid$ factor away from the load of the optimum, that is, 
	\[
	1+2\rhoid\cdot \frac tm\leq \rhoid\cdot\frac{m}{m-t}\,.
	\]
	Letting $x:=t/m$ yields the inequality 
	\[
	(1+2\rhoid x)(1-x)\leq \rhoid\,.
	\] 
	The left hand side of this inequality takes its maximum $\rhoid$ (even among all $x\in\mathbb{R}$) at~$x=\beta=2-\sqrt{2}$, which shows that the inequality is indeed true and therefore completes the proof.
\end{proof}

To show Theorem~\ref{thm:IdenticalWSand}, however, we have to match $\rhoid(m)$ for every $m$. To do so, we need to design bag sizes more carefully. Indeed, $\rhoid(m)$ is smaller than $\rhoid$ for all values of $m$, and therefore bags of sizes $\rhoid$ are not allowed anymore. For every $x\in[0,1]$, the bag size of the $\lceil x\cdot m\rceil$-th smallest bag still approaches~$\bar{f}(x)$ as $m\rightarrow\infty$, rather than being defined \emph{through} $\bar f$. Specifically, in the appendix we give a family of bag sizes parameterized by some $\delta>0$ that allow a simple computation of the robustness factor. The remainder of the proof is then concerned with showing algebraically that, for each $m$, $\delta$ can be chosen so as to fulfill the constraints imposed by feasibility and robustness.

	We close this section with showing that no better robustness factor can be achieved even by a randomized algorithm.
	
\begin{theorem}\label{thm:BricksLBrand}
		For all~$m\geq 1$ and $\varepsilon>0$, there is no randomized algorithm that achieves a robustness factor of $\rhoid(m)-\varepsilon$ against an oblivious adversary.
\end{theorem}

\begin{proof}
	The outline of this proof is based on the same result from~\cite{SteinZ20} for
	deterministic algorithms. Consider any randomized algorithm, the size of each
	bag follows some probability distribution which can be correlated. The problem
	can be described as follows: the adversary first selects the number $t$ of
	machine failures, knowing the distribution of the bag sizes; 
	then, the actual bag sizes are revealed; finally, the
	algorithm schedules these bags on $m-t$ machines. We assume by contradiction
	that, for every $t$, the expectation of the resulting makespan is smaller than
	$\rhosandid(m)/(1-\frac tm)$. 
	
	We consider an adversary with two possible strategies: make zero machines fail, or make
	$t\leq m/2$ machines fail, the value of $t$ being fixed later. For large~$m$,
	$t/m$ will approach $1-\sqrt 2/2$ and $\rhosandid(m)$ will approach \rhosandid.
	
	The expected size of each bag must be smaller than $\rhosandid(m)$; otherwise
	the expected makespan on $m$ machines would be too large. For every
	realization of bag sizes, there exists an optimal bag-to-machine allocation on
	$m-t$ machines that uses all machines, so has at least $m-2t$ machines
	containing a single bag. Reorder the machines so that machines $t+1$ to $m-t$
	have a single bag. The expected load of each of the first $t$ machines is
	smaller than $\rhosandid(m) / (1-\frac tm)$ as the optimal makespan on $m-t$
	machines is $1/(1-\frac tm)$. The expected load of each of the other machines
	is smaller than $\rhosandid(m)$ as they contain a single bag. By linearity of
	expectation, and due to the expected total load being equal to~$m$, we obtain
	the following contradiction:
	\begin{align*}
	m    &< \min_{t\leq \frac m2,~t\in\mathbb N}~~\frac{t\cdot \rhosandid(m)}{1-\frac tm} + (m-2t)\rhosandid(m)\\
	\rhosandid(m) & > \max_{t\leq \frac m2,~t\in\mathbb N} ~~\frac{1}{\frac t{m-t} +\frac{m-2t}m}~~ =~ \rhosandid(m)\,.
	\end{align*}
	This completes the proof.
\end{proof}

\section{Speed-Robust Scheduling with Discrete Jobs}\label{sec:JobsOnRelated}

	In this section, we consider the most general version of \srs, i.e., discrete jobs scheduled on machines with arbitrary unknown speeds. While in \Cref{sec:Sand,sec:Bricks} we crucially use in our algorithm design the assumption that jobs are infinitesimally small (sand) or of the same size (bricks), respectively, here, their sizes can vary arbitrarily (rocks).
	By a scaling argument, we
	may assume w.l.o.g.\ that the machine speeds satisfy~$\sum_{i=1}^m s_i =
	\sum_{j=1}^n p_j$.
	We first note in the following lemma that obtaining a robust algorithm is not
	trivial in this case, as even algorithms minimizing the largest bag size
	may not have a constant robustness factor. This contrasts with 
	the case where machine speeds are restricted to $\{0,1\}$, in which such algorithms are $(2-\frac2m)$-robust.
	To see this, once the number~$\nrmach$ of speed-1 machines is revealed, simply combine the
	two smallest bags repetitively if~$\nrmach<m$. The makespan is then at
	most twice the average load on~$\nrmach+1$ machines, so~$\frac{2\nrmach}{\nrmach+1}$
	times the average load on~$\nrmach$ machines.
	This is largest for $m'=m-1$ which gives the desired robustness.

	\begin{lemma}\label{lem:optnotrobust}
		There exists an algorithm for \srs minimizing the size of the largest bag which does not have a constant robustness factor.
	\end{lemma}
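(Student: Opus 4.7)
The plan is to exhibit, for every $m\geq 2$, a family of instances on which any algorithm that minimizes the largest bag size is forced into an arbitrarily bad robustness factor as a size parameter grows. I would take $n$ unit-size jobs, where $n$ is a large multiple of $m$, and then let $n\to\infty$.

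The first step is to observe that on this instance minimizing the largest bag essentially pins down the algorithm's output. Because all jobs have size~$1$ and $n$ is divisible by $m$, any allocation into at most $m$ bags has maximum bag size at least $n/m$, and this value is attained only by the balanced partition that puts exactly $n/m$ jobs in each of $m$ bags. So the algorithm's first-stage decision is forced up to relabeling of the bags.

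The second step is to design the adversary. Scaling so that the sum of the machine speeds equals the total processing volume~$n$, I would pick one fast machine of speed $n-(m-1)$ together with $m-1$ slow machines each of speed~$1$. With hindsight one can place $n-(m-1)$ unit jobs on the fast machine (finishing in time~$1$) and a single unit job on each slow machine (again finishing in time~$1$), so the optimal makespan is~$1$. The algorithm, by contrast, has $m$ bags each of size $n/m$ and only one fast machine, so by pigeonhole some bag is assigned to a speed-$1$ machine and contributes a load of $n/m$.

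Combining the two steps, the algorithm's makespan on this family is at least $n/m$ while the optimal makespan is~$1$, so the robustness factor is at least $n/m$, which tends to infinity with $n$; this rules out any constant robustness factor. The argument is short and the only genuine step is the choice of the speed vector: a single fast machine absorbs almost the entire workload optimally, while a uniform balanced partition of indistinguishable unit jobs cannot exploit this because it is forced to place a full bag on a slow machine. I do not foresee any technical obstacle beyond this construction.
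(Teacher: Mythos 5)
Your construction does not work, because the pigeonhole step at its heart is false: in the second stage an algorithm may assign \emph{several} bags to the same machine. With $n$ unit jobs partitioned into $m$ bags of size $n/m$ and speeds $(n-(m-1),1,\dots,1)$, the algorithm can simply stack all $m$ bags on the fast machine, giving makespan $\frac{n}{\,n-(m-1)\,}\to 1$ as $n\to\infty$ (with $m$ fixed). So on your instance the robustness factor stays bounded (in fact tends to $1$), and nothing is proved. The fundamental obstacle is that your fast machine's speed is essentially the entire workload, so it can absorb every bag cheaply.

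The paper's proof avoids exactly this trap by making the fast machine \emph{slow relative to the total volume}. It takes $m=k^2+1$, $k^2$ unit jobs plus one job of size $k$, and speeds with $k^2$ machines of speed~$1$ and one of speed~$k$. The single size-$k$ job forces the minimum achievable maximum bag size to be $k$, so a ``minimize the largest bag'' algorithm produces $k+1$ bags of size $k$. Now the fast machine has speed $k$, which is only one job's worth, while the total volume is $k^2+k$; stacking all bags on the fast machine costs $(k^2+k)/k = k+1$, and putting any bag on a slow machine costs $k$. Either way the makespan is at least $k$, while the optimum is $1$, giving an unbounded ratio. If you want to salvage your idea, you would need to similarly calibrate the fast speed so that the optimum can just barely fit, but the algorithm's bags cannot all be dumped on the fast machine without paying; with all-unit jobs and a balanced partition that calibration is impossible, which is why the paper introduces the one large job.
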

	\begin{proof}
		Consider any integer~$k\geq 1$, a number of machines~$m=k^2+1$,~$k^2$ unit-size
		jobs and one job of processing time~$k$. The maximum bag size is equal to~$k$,
		so an algorithm building~$k+1$ bags of size~$k$ respects the conditions of the
		lemma. Consider the speed configuration where~$k^2$ machines have speed~$1$ and
		one machine has speed~$k$. It is possible to schedule all jobs within a
		makespan~$1$ on these machines. However, the algorithm must either place a bag on
		a machine of speed 1 or all bags on the machine of speed~$k$, hence leading to
		a makespan of~$k$, and proving the result. Note that by adding~$k^2$ unit-size
		jobs, we can build a similar example where the algorithm does not leave
		bags empty, which is always beneficial.
	\end{proof}

	A feature, that is
	exploited in the lower bound, of the algorithms considered in \Cref{lem:optnotrobust} is that bags sizes are too unbalanced. A way
	to prevent this 
	would be to maximize the size of a minimum bag as
	well. But this criterion 
	becomes useless if we consider the same example as
	above with~$m=k^2+2$. Then, the minimum bag size is~$0$ as there are more
	machines than jobs, and the same lower bound holds.
	
	Hence, in order to obtain a robust algorithm in the general case, we focus on
	algorithms that aim at balanced bag sizes, for which the best lower bound is
	described in the following lemma.  An algorithm is called \emph{balanced} if, for an instance of unit-size jobs, the bag sizes created by the algorithm differ by at most one unit. In particular, a balanced algorithm creates~$m$ bags of size~$k$ when
	confronted with~$mk$ unit-size jobs and~$m$ bags. For balanced algorithms,
	we give a lower bound in Lemma~\ref{lem:balancedlowerbound} and a matching upper bound in Theorem~\ref{thm:speedLPT}.
	\begin{lemma}\label{lem:balancedlowerbound}
		No balanced algorithm for \srs can obtain a better robustness factor 
		than~$2-\frac{1}{m}$ for any~$m\geq 1$.
	\end{lemma}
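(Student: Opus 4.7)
The plan is to construct a single hard instance of unit-size jobs that pins down the algorithm's bag structure completely, and then exhibit a second-stage speed configuration against which no bag-to-machine assignment beats a ratio of $2-1/m$. Concretely, I would use $n=m(2m-1)$ unit-size jobs on $m$ machines; the definition of balancedness forces any balanced algorithm to produce $m$ bags of size exactly $2m-1$. For the second stage, I set $s_1=\dots=s_{m-1}=m$ and $s_m=m^2$, so that $\sum_i s_i = m(2m-1) = \sum_j p_j$ and the offline optimal makespan is normalized towards~$1$.

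I would first verify that the offline optimum equals $1$: placing $m$ unit jobs on each slow machine (load $m$, time $1$) and the remaining $m^2$ jobs on the fast machine (load $m^2$, time $1$) uses every machine to full capacity. Then I analyze the algorithm's best response. Let $\ell$ be the number of bags placed on the fast machine. If $\ell=0$, then $m$ equal bags are distributed among only $m-1$ slow machines, so by pigeonhole some slow machine receives at least two bags and the makespan is already $\geq 2(2m-1)/m$. If any slow machine receives two or more bags in general, the same bound holds. Otherwise each slow machine holds at most one bag; when $\ell\leq m-1$, some slow machine is loaded with makespan $(2m-1)/m$ while the fast machine contributes $\ell(2m-1)/m^2\leq(2m-1)/m$, and when $\ell=m$ the fast machine alone gives makespan $m(2m-1)/m^2=(2m-1)/m$. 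In every case the algorithm's makespan is at least $(2m-1)/m=2-1/m$, which, divided by the optimum~$1$, yields the claimed lower bound on the robustness factor.

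The only real design choice, and hence the main obstacle, is picking the speed ratio so that ``one bag per machine'' and ``all bags on the fast machine'' become simultaneously tight for the algorithm: setting $s_m/s_i = m$ for every slow $i$ equates both strategies at exactly $(2m-1)/m$, leaving no escape. Once the ratio is right, the remaining case analysis over $\ell$ is a routine pigeonhole argument, so I do not anticipate further technical difficulties.
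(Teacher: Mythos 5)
Your proof is correct and follows essentially the same approach as the paper: both use $n=m(2m-1)$ unit jobs so that a balanced algorithm is forced to build $m$ bags of size $2m-1$, and both then exhibit a speed configuration with offline optimum $1$ against which a case analysis shows the algorithm's makespan is at least $(2m-1)/m$. The only difference is the speed profile — the paper uses one slow machine of speed $m$ and $m-1$ fast machines of speed $2m$ (making the case analysis a one-line pigeonhole), whereas you use $m-1$ slow machines of speed $m$ and one fast machine of speed $m^2$ — but the underlying argument is the same.
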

	\begin{proof}
		Consider any~$m\geq 1$ and~$km$ unit-size jobs, with~$k=2m-1$. Assume the adversary
		puts~$m$ jobs on the first machine of speed $m$ and~$2m$ jobs on each of the remaining
		machines of speed $2m$ each. An algorithm that uses evenly balanced bags builds~$m$ bags of size
		$k$. It must either place a bag of size~$k$ on the machine of speed~$m$ or~$2k$
		jobs on a machine of speed~$2m$. In any case, the robustness factor is at
		least~$2-\frac1m$.
	\end{proof}
	
	We now show that this lower bound is attained by a simple algorithm, commonly
	named as {\em Longest Processing Time First} (\lpt) which considers jobs in
	non-increasing order of processing times and assigns each job to the bag that
	currently has the smallest size, i.e., the minimum allocated processing~time.
	\begin{theorem}\label{thm:speedLPT}
		\lpt is $(2-\frac{1}{m})$-robust for \srs for all~$m\ge1$.
	\end{theorem}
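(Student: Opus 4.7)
The strategy is to exhibit, for any revealed speed configuration, an assignment of the LPT-bags to machines whose makespan is at most $(2-\tfrac{1}{m})\cdot\text{OPT}$. I would build this up from the classical Graham-style structural property of LPT together with two standard lower bounds on $\text{OPT}$.

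First I would prove the following size bound on the largest bag $B_{\max}$ produced by LPT:
\[
B_{\max}\le \frac{T}{m}+\Bigl(1-\frac{1}{m}\Bigr) p_{\max},
\]
where $T=\sum_j p_j$ and $p_{\max}=\max_j p_j$. If $B_{\max}$ is a single job the bound is trivial; otherwise, let $p^*$ be the last job added by LPT to $B_{\max}$. At that moment $B_{\max}-p^*$ was the minimum bag size, so every bag has size at least $B_{\max}-p^*$ in the final partition, giving $T\ge m(B_{\max}-p^*)+p^*$, and $p^*\le p_{\max}$. Combining this with the lower bounds $\text{OPT}\ge T/\sum_i s_i$ and $\text{OPT}\ge p_{\max}/s_{\max}$, and with the trivial inequality $\sum_i s_i\le m\,s_{\max}$, yields
\[
B_{\max}\le \Bigl(2-\frac{1}{m}\Bigr) s_{\max}\,\text{OPT}.
\]
In particular, the largest bag, placed alone on the fastest machine, already has load at most $(2-\tfrac{1}{m})\,\text{OPT}$.

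Second, I would specify the bag-to-machine assignment by applying LPT list scheduling on related machines in the second stage: sort the bags in non-increasing size and assign each, in turn, to the machine that minimizes its completion time. Let $L$ be the resulting makespan, $k$ the critical machine, and $B^*$ the last bag placed on $k$. The greedy rule forces $L_j+B^*/s_j\ge L$ for every machine $j$, hence $L_j\ge L-B^*/s_j$ whenever $s_j\ge B^*/L$. Summing the inequality $s_j L_j\ge s_jL-B^*$ over the set of ``fast-enough'' machines, and using $\sum_j s_jL_j=T$ together with $B^*\le B_{\max}$ and the OPT lower bounds, produces an inequality on $L$ that I would then simplify to $L\le (2-\tfrac{1}{m})\,\text{OPT}$.

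\textbf{Main obstacle.} The last step is where the analysis is delicate: the naive Graham-style bound $L\le T/\sum_i s_i+m B^*/\sum_i s_i$ loses a factor that can be as large as $m$ when the speeds are very unbalanced. Recovering the tight $(2-\tfrac{1}{m})$ factor requires restricting the averaging argument to the machines with $s_j\ge B^*/L$ (those that were genuine candidates for $B^*$ under the greedy rule), and using a case distinction on whether the critical machine is the fastest -- in which case the bag-size bound above suffices immediately -- or a slower machine, where the restricted averaging and $B^*\le B_{\max}$ together recover the factor. Failed machines ($s_i=0$) pose no additional difficulty: they are simply removed from consideration, and the same argument is applied on the reduced instance, since the total volume $T$ and $p_{\max}$ remain unchanged and $\text{OPT}$ is measured with respect to the surviving machines only.
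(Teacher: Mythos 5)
Your first step is sound: the bound $B_{\max}\le \tfrac{T}{m}+(1-\tfrac1m)p_{\max}$ is correct, follows from the same structural observation the paper uses (at the moment the last job entered the largest multi-job bag, every bag had at least that much volume), and indeed combines with $\text{OPT}\ge T/\sum_i s_i$, $\text{OPT}\ge p_{\max}/s_{\max}$, and $\sum_i s_i\le m\,s_{\max}$ to give $B_{\max}\le(2-\tfrac1m)s_{\max}\,\text{OPT}$. This is a nice self-contained fact.

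The problem is the second half, and you flag it yourself: it is a plan, not a proof. After the restriction to $F=\{j : s_j\ge B^*/L\}$, the averaging gives $L\,\sum_{j\in F}s_j \le T + |F|B^*$, and this does not close. The two a priori facts you have, $s_j\ge B^*/L$ on $F$ and $|F|\le m$, pull in opposite directions: substituting $|F|B^*\le L\sum_{j\in F}s_j$ makes the inequality trivial, while substituting $\sum_{j\in F}s_j\ge s_{\max}$ and $|F|\le m$ leaves a residual term of order $mB^*/s_{\max}$, which is far above what $(2-\tfrac1m)\text{OPT}$ tolerates. The case split you propose (``critical machine is the fastest'') only disposes of the sub-case where $B^*$ is the \emph{sole} bag on that machine; if the fastest machine carries several bags, the bag-size bound says nothing about the sum. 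And in the other case you again need the restricted average to give a factor $2-\tfrac1m$, which you have not shown. In short, the Graham-style list-scheduling inequality alone is not enough here, because it effectively compares against the optimum over \emph{bags}, and controlling the gap between that and the optimum over \emph{jobs} is precisely the crux of the theorem.

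The paper takes a genuinely different route. It fixes a capacity $(2-\tfrac1m)s_i$ per machine, shows that the single-job bags of size exceeding $b$ (the largest multi-job bag) can be placed where the job-optimum places the corresponding jobs, and then assigns the remaining bags greedily subject to the capacity. The key lemma is not your bound on $B_{\max}$ but the dual observation that the \emph{smallest} bag $w$ satisfies $w\ge b/2$. This controls the volume of the bags yet to be placed: if the $(k{+}1)$-st bag of size $T$ does not fit, the unplaced bags each have size at least $T/2$, so the total remaining capacity exceeds $mT$, forcing some machine to have room. That is a bin-packing / volume argument, not a list-scheduling one, and it is what makes $2-\tfrac1m$ drop out cleanly. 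If you want to salvage your approach you would need to bring in the smallest-bag bound and convert your averaging step into a counting argument on how many bags can fail to fit; as written, the proposal leaves the decisive inequality asserted rather than proved.
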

	\begin{proof}
		While we may assume that the bags are allocated optimally to the machines once
		the speeds are given, we use a different allocation for the analysis. This
	    can only worsen the robustness factor.
		
		Consider the~$m$ bags and let~$b$ denote the size of a largest bag,~$B$, that
		consists of at least two jobs. Place all bags of size strictly larger
		than~$b$, each containing only a single job, on the same machine
		as \opt{} places the corresponding  jobs. We define for each machine~$i$ with
		given speed~$s_i$ a capacity bound of~$(2-\frac{1}{m}) \cdot s_i$. Then, we
		consider the remaining bags in non-increasing order of bag sizes and iteratively assign
		them to the -- at the time of assignment -- 
		least loaded machine with sufficient remaining capacity.
	
		With the assumption~$\sum_{i=1}^m s_i = \sum_{j=1}^n p_j$ and the capacities~$(2-\frac{1}{m}) \cdot s_i$, it is sufficient to show that \lpt can
		successfully place all bags.
		
		The bags of size larger than~$b$ fit by definition as they contain a single job.
		Suppose for the sake of contradiction that there is a bag which cannot be assigned. Consider
		the first such bag and let~$T$ be its size. Let~$k<m$ be the number of bags
		that have been assigned already. Further, denote by~$w$ the size of a smallest
		bag. Since we used \lpt in creating the bags, we can show that~$w \geq
		\frac{1}{2}b$. To see that, consider bag~$B$ and notice that the smallest job
		in it has size at most~$\frac 12b$. When this job was assigned to its
		bag,~$B$ was a bag with the smallest size, and this size was at least~$\frac 12b$
		since we allocate jobs in \lpt-order. Hence, the size of a smallest bag
		is~$w\geq \frac{1}{2}b\geq \frac 12 T$, where the second inequality is true as
		all bags larger than~$b$ can be placed.
	
		We use this inequality to give a lower bound on the total remaining capacity on
		the~$m$ machines when the second-stage algorithm fails to place the~$(k+1)$\nobreakdash-st bag. 
		The~$(m-k)$ bags that were not placed have a combined volume of at
		least~$V_\ell = (m-k-1)w+T \geq (m-k+1)\frac{T}{2}~$. The bags that were placed
		have a combined volume of at least~$V_p = kT$. The remaining capacity is then
		at least~$C =(2-\frac{1}{m}) V_\ell + (1-\frac{1}{m})V_p$, and we have
		\begin{align*}
			C 
			 \geq \frac{2m-1}m \frac{m-k+1}2{T} + \frac{m-1}m kT  
	        \geq m T + T - \frac{m+k+1}{2m}T
	        \geq m T\,.
		\end{align*}	
		Thus, there is a machine with remaining capacity~$T$ which contradicts the 
		assumption that the bag of size~$T$ does not fit.
	\end{proof}

\section{Speed-Robust Scheduling with Equal-Size Jobs}\label{sec:Bricks}
	
	In this section, we consider instances where all jobs are of equal size (bricks) as this case seems to capture the complexity of the general problem. This intuition stems from the fact that all known lower bounds already hold for this type of instances~(see~\cite{SteinZ20} and \cref{lem:LB3:1.5}).
	By a scaling argument, we may assume that all jobs have unit processing time. Therefore, we consider unit-size jobs for the remainder of the section.

	Before
	focusing on a specific speed setting, we show that in both settings we can use
	any algorithm for infinitesimal jobs with a proper scaling to obtain a corresponding algorithm for unit-size jobs. Its
	robustness factor improves with increasing average load per bag~$n/m=:\avg$.
	Assume~$\avg>1$, as otherwise the problem is trivial.
	We define the algorithm \sandtobricks that builds on the optimal algorithm for
	infinitesimal jobs,~$\sandalg^*$, which is~$\sandalg$ for general speeds
	(Section~\ref{sec:SandOnRelated}) or~$\sandalg_{01}$  for speeds
	in~$\{0,1\}$~(Section~\ref{sec:SandOnParallel}). Let~$a_1,\ldots,a_m$ be the
	bag sizes constructed by~$\sandalg^*$ scaled such that a total processing
	volume of~$n$ can be assigned, that is,~$\sum_{i=1}^m a_i = n$. For unit-size
	jobs, we define bag sizes as~$a'_i=(1+\frac{1}{\avg}) \cdot a_i$ and assign
	the jobs greedily to the bags.
	
	\begin{lemma}\label{lem:speedScaledSand}
		For~$n$ jobs with unit processing times and~$m$ machines, 
		\sandtobricks for \srs is $(1+\frac{1}{\avg})\cdot \rho(m)$-robust, 
		where~$\rho(m)$ is the robustness factor for
		$\sandalg^*$ for~$m$~machines.
	\end{lemma}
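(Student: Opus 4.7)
My plan is to leverage the $\rho(m)$-robustness of $\sandalg^*$ on the underlying infinitesimal instance and pay the multiplicative price~$1+1/\avg$ for the discretization. Following the convention of \cref{sec:JobsOnRelated}, I scale so that $\sum_{i=1}^m s_i = \sum_{j=1}^n p_j = n$. In this normalization the infinitesimal optimal makespan equals~$1$, and since the infinitesimal problem is a relaxation of the discrete one, the discrete optimal makespan is at least~$1$.

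The first step is to verify that the greedy job-to-bag assignment yields sizes $\tilde a_i \leq a'_i = (1+1/\avg)\,a_i$. The inflation gives $\sum_i a'_i = (1+1/\avg)\,n = n+m$, so $\sum_i \lfloor a'_i \rfloor \geq n$. Hence the $n$ unit jobs can be distributed among the bags while respecting the per-bag capacity $\lfloor a'_i \rfloor$, producing $\tilde a_i \leq \lfloor a'_i \rfloor \leq (1+1/\avg)\,a_i$. The slack of size~$m$ between $\sum_i a'_i$ and~$n$ is precisely what absorbs the rounding loss; fixing this interpretation of ``greedy'' is the only subtle point in the argument, since a weaker guarantee $\tilde a_i \leq a'_i + 1$ would leave an additive error that cannot be absorbed after dividing by $s_i$.

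With this in hand, I invoke $\rho(m)$-robustness of $\sandalg^*$ on the infinitesimal bag sizes $a_1,\dots,a_m$: for every speed configuration there exists an assignment of these bags to machines, in disjoint sets $B_1,\dots,B_m$, under which machine $i$ has load $L_i := \sum_{j \in B_i} a_j \leq s_i\,\rho(m)$. I reuse this same bag-to-machine assignment in the discrete setting. Each discrete bag has size at most $(1+1/\avg)$ times its infinitesimal counterpart, so machine $i$'s discrete load satisfies $\tilde L_i := \sum_{j \in B_i} \tilde a_j \leq (1+1/\avg)\,L_i \leq (1+1/\avg)\,s_i\,\rho(m)$, and its completion time is therefore at most $(1+1/\avg)\,\rho(m)$. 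Since the discrete optimum is at least~$1$, this is at most $(1+1/\avg)\,\rho(m)$ times the optimal makespan, yielding the claimed robustness factor. I do not foresee any major obstacle beyond carefully pinning down the rounding step in the first part.
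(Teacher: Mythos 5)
Your proof is correct and follows essentially the same route as the paper: a volume argument (here phrased via $\sum_i \lfloor a'_i\rfloor \geq n$ rather than the paper's contradiction on the greedy step) shows all $n$ unit jobs fit into the inflated bags, after which the robustness factor $(1+1/\lambda)\rho(m)$ follows by reusing the infinitesimal bag-to-machine assignment and comparing to the discrete optimum. The paper leaves that last step implicit (``it is sufficient to show that all $n$ unit-size jobs can be assigned''), so your write-up merely makes explicit what the paper treats as obvious.
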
	
	\begin{proof}
		To prove the lemma, it is sufficient to show that all~$n$ unit-size jobs can be
		assigned to the constructed bags of sizes~$a'_1,\ldots,a'_m$. Suppose for the sake of contradiction that there is
		a job~$j$ that does not fit into any bag without exceeding the bag size. The
		remaining volume in the bags is at least the total capacity minus the
		processing volume of all jobs except~$j$, that is,
		\[
			\sum_{i=1}^m a'_i - (n-1) = \left(1+\frac{1}{\avg}\right) \cdot n - n +1 
			>\frac{1}{\avg} \cdot n = m\,.
		\]
		Hence, there must exist some bag that has a remaining capacity of at least~$1$ and can fit job~$j$.
	\end{proof}

\subsection{General Speeds}\label{sec:BricksOnSpeed}

	For 
	unit-size jobs, we show how to beat the robustness factor~$2-\frac1m$ (Theorem~\ref{thm:speedLPT}) for $m>4$ 	and obtain a $1.8$-robust algorithm. For~$m=2$ and~$m=3$,
	we give algorithms with best-possible robustness factors~$\frac43$ and~$\frac32$, respectively. 
	
	Theorem~\ref{thm:speedLPT} shows that LPT has a robustness factor of $2-\frac1m$, even for unit-size jobs.
	\oddalgo has a robustness factor
	increasing with the ratio between the number of jobs and the number of
	machines. \oddalgo builds bags of three possible sizes: for~$q\in \mathbb N$
	such that~$\avg\in[2q-1,2q+1]$, bags of sizes~$2q-1$ and~$2q+1$ are built,
	with possibly one additional bag of size~$2q$ (recall that $\avg=n/m$ is the average load per bag).
	\begin{restatable}{lemma}{LemSpeedLPTavg}\label{lem:speedLPTavg}
		For~$n$ unit-size jobs,~$m$ machines and~$q\in\mathbb N$ with~$\lambda\in[2q-1,2q+1]$, 
		\oddalgo is $(2-\frac{1}{q+1})$-robust for \srs.
	\end{restatable}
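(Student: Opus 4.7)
The plan is to follow the LPT-style analysis of Theorem~\ref{thm:speedLPT}, adapted to the narrower bag-size range $\{2q-1, 2q, 2q+1\}$ produced by \oddalgo.

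First, I would verify that \oddalgo's bag construction is well-defined. Set $\delta := n - (2q-1)m$, which lies in $[0, 2m]$ because $\lambda \in [2q-1, 2q+1]$. Take $\lfloor \delta/2 \rfloor$ bags of size $2q+1$, one additional bag of size $2q$ exactly when $\delta$ is odd, and fill the remaining slots with bags of size $2q-1$; this partitions the $n$ unit jobs into exactly $m$ bags of the advertised sizes.

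For the robustness analysis, after scaling so that $\sum_i s_i = n$, it suffices to exhibit a bag-to-machine assignment with load at most $(2 - \tfrac{1}{q+1})\, s_i \cdot \optm$ on every machine $i$. I would mimic the argument of Theorem~\ref{thm:speedLPT} by processing bags in non-increasing order of size and placing each on the machine with the largest remaining capacity. Suppose for contradiction that the greedy step fails to place a bag of size $T \in \{2q-1, 2q, 2q+1\}$ after $k < m$ bags have been placed. With $V_p$ and $V_u$ the total placed and unplaced volumes, every placed bag has size at least $T$ and every unplaced bag at least $2q-1$, so
\[
V_p \geq kT, \qquad V_u \geq T + (m - k - 1)(2q - 1), \qquad V_p + V_u = n.
\]
Using the identity $C := (2 - \tfrac{1}{q+1})\, n - V_p = \tfrac{q}{q+1}\, V_p + \tfrac{2q+1}{q+1}\, V_u$ together with the failure assumption $C < mT$, I would substitute the lower bounds on $V_p$ and $V_u$ to derive
\[
(m - k - 1)\bigl((4q^2 - 1) - qT\bigr) < (m - q - 1)\, T.
\]
Since $T \leq 2q+1 \leq (4q^2-1)/q$ for $q \geq 1$, the left-hand side is non-negative, giving a direct contradiction whenever $m \leq q+1$.

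The main obstacle is the regime $m > q+1$, and in particular the tight corner $T = 2q+1$ with $k = m-1$, where $V_p \geq kT$ is attained and the simple algebra no longer produces a contradiction. Closing this case would require exploiting two sharper facts: (a) when $T = 2q+1$, every placed bag has size \emph{exactly} $2q+1$, so $V_p = k(2q+1)$ is an identity, and the volume equality $V_p + V_u = n$ forces $\lambda = 2q+1$; (b) for unit-size jobs $\optm$ typically exceeds the scaling bound $n / \sum_i s_i = 1$ because of the integrality constraint $\sum_i \lfloor \optm\, s_i \rfloor \geq n$. Incorporating this strengthened lower bound on $\optm$ enlarges the capacity budget just enough to accommodate the failing bag on some machine of sufficient speed, yielding the desired contradiction. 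A symmetric argument handles $T = 2q-1$, and I expect the careful integer-OPT accounting at this boundary to be the most technically delicate part of the proof, analogous to how the LPT analysis extracts the factor $2 - 1/m$ at its boundary.
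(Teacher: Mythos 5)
Your outline correctly mirrors the LPT-style capacity argument of \Cref{thm:speedLPT}, and you are honest that it does not close for $m > q+1$; the gap you identify is real. The issue is that you do the accounting in raw job-volume units, so the failure event only gives ``total remaining capacity $C < mT$,'' which is not quite enough. Your suggestion (a) does not help: all \emph{placed} bags are indeed of size $2q+1$ when $T=2q+1$, but the unplaced bags $k+2,\dots,m$ can be smaller, so $\lambda=2q+1$ is not forced. Your suggestion (b) is the right intuition but stays vague about the mechanism.

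The paper's proof closes the gap with a change of units that you are missing. Instead of real volumes, it assigns to each bag an integer \emph{weight}: a small bag (size $2q-1$) gets weight $q$, while a medium or big bag (size $2q$ or $2q+1$) gets weight $q+1$. Because $(2q-1)/q \le 2q/(q+1) \le (2q+1)/(q+1) = 2-\tfrac1{q+1}$, fitting weight $w$ on a speed-$s$ machine (i.e.\ $w\le s$ in weight units) automatically respects the target time budget $(2-\tfrac1{q+1})\,s$. Since for unit-size jobs one may pass WLOG to integer speeds $s_i' := \lfloor \optm s_i\rfloor$ (with $\sum_i s_i' \geq n$), both weights and machine capacities are integers. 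That upgrades the pigeonhole from ``$C < mT$'' to ``remaining capacity per machine is $\le T-1$, hence total $\le m(T-1)$.'' The weight accounting then gives $C \ge m(T-1)+q > m(T-1)$ for both $T = q$ and $T = q+1$, a clean contradiction for all $m$ and all $q\ge 1$. This integrality gain of one unit per machine is precisely the slack your real-valued argument lacks; quantifying how far $\optm$ exceeds $1$ (your route (b)) would be much messier and, without a discretization like the weight assignment, I don't see how it would yield the exact constant $2-\tfrac1{q+1}$.
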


\begin{proof}
		The proof is along the lines of the proof of Theorem~\ref{thm:speedLPT}.
		Recall that \oddalgo builds~$m$ bags of sizes belonging to~$\{2q-1,2q,2q+1\}$,
		with at most one bag of size~$2q$. This is possible by building first~$m$ bags
		of size~$2q-1$ then putting 2 additional jobs per bag until zero or one job
		remains. Let~$m_s$ be the number of small bags (size~$2q-1$),~$m_m$ be the
		number of medium bags (size~$2q$) and~$m_b$ be the number of big bags (size
		$2q+1$). We have~$m_s+m_m+m_b=m$ and~$n = (2q+1)m_b+2qm_m+(2q-1)m_s$.
	
		We assume~$q\geq 1$ since, if~$q=0$, only bags of size~$1$ are built and the problem is trivial.
		
		Note that a small bag can be executed at speed~$q$ and a large or medium bag can
		be executed at speed~$q+1$ while respecting the prescribed makespan of~$2-\frac
		1 {q+1} = \frac{2q+1}{q+1}$. We define the {\it weight} of a small bag
		as~$q$ and the weight of a large or medium bag as~$q+1$.
		
		In the second stage, when the speed~$s_i\in\mathbb N$ is given for each machine
		$i$, associate a {\em capacity}~$ s_i$ with  each machine. Assign the bags in
		\lpt order to the machines, each bag to the least loaded machine such that the
		total {\it weight} of bags assigned to a machine does not exceed the capacity.
		The total capacity of all bins is equal to~$n$. If all bags can be assigned to
		the machines, then the total {\it size} of the bags assigned to a machine of
		speed~$s_i$ is at most~$(2-\frac{1}{q+1})s_i$, which gives the result.
		
		Suppose for the sake of contradiction that there is a bag that cannot be assigned to a
		machine. Let~$T$ be the {\it weight} of this bag. It suffices to show that the
		total remaining capacity on all machines is at least~$m(T-1)+1$. Indeed, weights
		and capacities are integers, so if the average remaining capacity per machine is
		strictly larger than~$T-1$, one machine has a remaining capacity at least~$T$ and
		the bag fits.
		
		Assume first that the bag is small, i.e.,~$T=q$. The total weight placed so far is at
		most~$(q+1)(m_b+m_m)+q(m_s-1)$, so the remaining capacity is: 
		\begin{align*}
			C&\geq n - (q+1)m_b - (q+1)m_m-qm_s+q \\
			&= (2q+1)m_b + 2q m_m +(2q-1)m_s - (q+1)m_b- (q+1)m_m-qm_s+q \\
			&= qm_b +(q-1)m_m+ (q-1)m_s +q\\
			&\geq m(T-1)+q\,.
		\end{align*}
		
		Assume now that the bag is big or medium, so~$T=q+1$. The total weight placed so far
		is at most~$(q+1)m_b$ and, thus, the remaining capacity is at least: %
		\begin{align*}
			C&\geq n - (q+1)m_b +q \\
			&= (2q+1)m_b + 2qm_m + (2q-1)m_s - (q+1)m_b +q \\
			&= qm_b+ qm_m +qm_s + (q-1) m_s+ qm_m +q\\
			&\geq m(T-1)+q\,.
		\end{align*}
		
		We conclude that all bags can be assigned to the machines without exceeding the capacity.
		Hence, this algorithm is $(2-\frac{1}{q+1})$-robust.
	\end{proof}

	Notice that the robustness guarantees in Lemmas \ref{lem:speedScaledSand} and
	\ref{lem:speedLPTavg} are functions that are decreasing in~$\avg$ and increasing in
	$\avg$, respectively. By carefully choosing between \oddalgo and \sandtobricks, depending on
	the input, we obtain an improved algorithm for bricks. For~$\avg< 8$, we
	execute \oddalgo, which yields a robustness factor of at most~$1.8$ by
	\Cref{lem:speedLPTavg}, as~$q\leq 4$ for~$\avg<8$. Otherwise, when~$\avg \geq 8$,
	we run \sandtobricks and obtain a guarantee of~$\frac{9}{8}\cdot \frac{e}{e-1}
	\approx 1.78$ by Lemma~\ref{lem:speedScaledSand}.
	\begin{theorem}\label{theo:speedbrickUB}
		There is an algorithm for \srs with unit-size jobs that has a robustness
		factor of at most~$1.8$ for any~$m\geq 1$.
	\end{theorem}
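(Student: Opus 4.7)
The plan is a simple case distinction on $\avg = n/m$, deciding between \oddalgo and \sandtobricks depending on which of the two previously established bounds is tighter. The two guarantees move in opposite directions with $\avg$: by \Cref{lem:speedLPTavg}, \oddalgo attains $2-\tfrac{1}{q+1}$, which grows with $\avg$ (through the parameter $q$); by \Cref{lem:speedScaledSand} combined with \Cref{thm:RelatedWSand}, \sandtobricks attains $(1+\tfrac{1}{\avg})\cdot \tfrac{e}{e-1}$, which decreases toward $\tfrac{e}{e-1}\approx 1.58$ as $\avg$ grows. The threshold $\avg = 8$ is chosen so that both bounds stay at most $1.8$ simultaneously.

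For $\avg \geq 8$ I would run \sandtobricks instantiated with $\sandalg^{*}=\sandalg$, which by \Cref{lem:speedScaledSand} and \Cref{thm:RelatedWSand} yields a robustness factor of at most
\[
  \left(1+\tfrac{1}{\avg}\right)\cdot \tfrac{e}{e-1} \ \leq\ \tfrac{9}{8}\cdot \tfrac{e}{e-1} \ \approx\ 1.78 \ <\ 1.8.
\]
For $\avg < 8$, where we may additionally assume $\avg > 1$ (the case $\avg \leq 1$ being trivial since every job can be placed in its own bag), I would instead run \oddalgo. Picking $q\in\mathbb{N}$ with $\avg \in [2q-1, 2q+1]$ forces $q \leq 4$, so \Cref{lem:speedLPTavg} gives a robustness factor of at most $2-\tfrac{1}{q+1} \leq 2-\tfrac{1}{5} = 1.8$.

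Since the heavy work has already been done by \Cref{lem:speedLPTavg} and \Cref{lem:speedScaledSand}, there is no real obstacle to the proof itself; the main ingredient is the selection of the crossover point. The constraint that drives that choice is that both $\tfrac{9}{8}\cdot \tfrac{e}{e-1}$ (from the $\avg\geq 8$ branch) and $2-\tfrac{1}{q+1}$ for $q \leq 4$ (from the $\avg<8$ branch) must be at most $1.8$. At the boundary $\avg=8$ the two bounds meet at $1.8$, so either algorithm may be invoked, and combining the two sub-cases yields the claimed $1.8$-robustness for every $m \geq 1$.
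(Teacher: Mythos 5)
Your proposal is correct and matches the paper's proof essentially line for line: the same case split at $\avg = 8$, \oddalgo with $q\leq 4$ for small $\avg$ via \Cref{lem:speedLPTavg}, and \sandtobricks with the $(1+\tfrac1\avg)\cdot\tfrac{e}{e-1}$ bound for large $\avg$ via \Cref{lem:speedScaledSand}. No differences worth noting.
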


	We give a general lower bound on the best achievable robustness factor. Note
	that the lower bound of~$\rhosand$ from \Cref{thm:speedsandLB} remains valid in
	this setting and is larger than~$1.5$ for~$m\geq 6$.
	\begin{restatable}{lemma}{LemmaLBOneFive}\label{lem:LB3:1.5}
		For every~$m\geq3$, no algorithm for \srs can have a robustness factor smaller
		than~$1.5$, even restricted to unit-size jobs.
	\end{restatable}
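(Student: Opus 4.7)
The plan is to exhibit, for each $m \ge 3$, one hard instance consisting of $n = 2m$ unit-size jobs together with only two candidate speed configurations, and to show that any deterministic choice of bags by the algorithm is defeated by one of them. The reason for picking $n = 2m$ is that the average bag size is then exactly $2$, producing a clean dichotomy on the largest bag size.

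Fix an arbitrary algorithm and let $b_1 \le \cdots \le b_m$ be the resulting bag sizes, padded with zeros if fewer than $m$ bags are used; these are nonnegative integers summing to $2m$, so in particular $b_m \ge 2$. I would split on whether $b_m \ge 3$ or $b_m = 2$.

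In the first case ($b_m \ge 3$), I take the uniform configuration $s_1 = \cdots = s_m = 2$, for which the optimum makespan is $1$ (two jobs per machine). Since bags may not be split, the largest bag is placed entirely on some speed-$2$ machine, giving load at least $b_m/2 \ge 3/2$. In the second case ($b_m = 2$), integrality together with the total volume $2m$ forces the algorithm to use exactly $m$ bags of size $2$. I would then present the skewed configuration $s_1 = m+1$, $s_2 = \cdots = s_m = 1$, whose optimum makespan is still $1$ (place $m+1$ jobs on machine $1$ and one job on each slow machine). Placing even a single bag on a unit-speed machine already creates load $2$, so the only hope of doing better is to pile all $m$ bags onto machine $1$; this yields makespan $\frac{2m}{m+1}$, and a short computation gives $\frac{2m}{m+1} \ge \frac{3}{2} \Leftrightarrow m \ge 3$.

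Since both cases yield ratio at least $3/2$ and they are exhaustive, the claim follows. No step looks to be a real obstacle: the dichotomy was engineered so that one of the two configurations always bites, and the only subtlety worth double-checking is that in Case 2 no ``hybrid'' assignment beats $\frac{2m}{m+1}$, which is immediate from the fact that a single size-$2$ bag on a speed-$1$ machine already contributes load $2 \ge \frac{2m}{m+1}$.
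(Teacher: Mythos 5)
Your proof is correct and uses the same two adversarial speed configurations as the paper's proof (identical speeds when some bag has at least three jobs, and one fast machine of speed $m+1$ plus $m-1$ slow machines of speed $1$ when all bags have size two), merely spelling out the integrality observation that forces equal bags of size $2$ in the second case. There is no substantive difference in approach.
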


	\begin{proof}
		Consider an instance with~$2m$ unit-size jobs. If an algorithm places~$3$ jobs
		in a bag, the adversary selects identical speeds which leads to a makespan
		$3/2$ times larger than the optimal. Otherwise, the adversary chooses a speed
		1 for~$m-1$ machines and a speed~$m+1$ for the remaining machine, thus being
		able to complete the instance within a makespan 1. The algorithm then has to
		put all the bags on the fastest machine to obtain a robustness factor smaller
		than 2. The factor is equal to~$2m/(m+1)$ which is at least~$1.5$ for~$m\geq3$.
	\end{proof}

	For special cases with few machines, we give best possible algorithms which
	match the previously mentioned lower bounds. We also show for~$m=6$ a lower
	bound larger than~$\rhosandm{6}>1.5$; proofs can be found in Appendix~\ref{apx:BricksOnSpeed}. Similar
	lower bounds have been found by a computer search for many larger values of~$m$, for
	which the difference to~$\rhosand$ tends towards zero when~$m$ grows.
	\begin{restatable}{lemma}{genSpeedsTwoThree} \label{lem:speedTwoThreeMach}
		An optimal algorithm for \srs for unit-size jobs has robustness factor~$4/3$
		on~$m=2$ machines and~$3/2$ on~$m=3$ machines, and larger than
		$\rhosandm{6}>1.5$ for~$m=6$.
	\end{restatable}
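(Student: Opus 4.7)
The lemma contains three independent claims, which I handle separately.

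\textbf{Case $m=2$.} The lower bound $4/3$ follows from \Cref{lem:speedsandLB}, since $\rhosandm{2}=4/3$. Although that lemma is phrased for infinitesimal jobs, the bag sizes $1/3,2/3$ and adversarial speed configurations $(1/4,3/4)$, $(1/2,1/2)$ in its proof share the common denominator $12$, so the argument carries over verbatim to unit-size jobs by taking $n=12k$ and rescaling speeds: the discrete OPT matches the infinitesimal OPT on these instances. For the upper bound, I would design an algorithm that, as a function of $n$, selects bag sizes $b_1(n)\le b_2(n)$ with $b_1+b_2=n$. A single closed-form choice such as $(\lfloor n/3\rfloor,\lceil 2n/3\rceil)$ does not work for every $n$---e.g.\ $n=4$ requires the balanced split $(2,2)$ because $(1,3)$ gives ratio $3/2$ against speeds $(2,2)$. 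I would therefore tabulate $b_1(n)$ by a short residue analysis and verify the factor $4/3$ by comparing, for every speed pair $(s_1,s_2)$ with $s_1\le s_2$, the best of three candidate second-stage placements---sorted, swapped, and both bags on the faster machine---against the discrete OPT, paralleling the continuous argument of \Cref{lem:speedsandUB}.

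\textbf{Case $m=3$.} The lower bound $3/2$ is immediate from \Cref{lem:LB3:1.5}. For the upper bound I plan to combine two already-available routines: \oddalgo (\Cref{lem:speedLPTavg}) is $3/2$-robust when $\avg=n/3\in[1,3]$, corresponding to $q=1$; and \sandtobricks (\Cref{lem:speedScaledSand}) instantiated with \sandalg gives factor $(1+1/\avg)\cdot\rhosandm{3}=(1+3/n)\cdot 27/19$, which falls below $3/2$ once $n$ is sufficiently large. The intermediate range of $n$ would be handled by a tailor-made construction with three bag sizes, the verification being finite because only the $3!$ bag-to-machine orderings (for each speed ordering) need to be checked.

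\textbf{Case $m=6$.} Only the lower bound is claimed, and this is the main obstacle. I would exhibit one concrete instance: an integer $n$ together with a family of six speed configurations modeled on $\mathcal{S}_1,\dots,\mathcal{S}_6$ from the proof of \Cref{lem:speedsandLB}, rescaled to integer speeds summing to $n$. The gap between the infinitesimal and discrete settings arises because the ideal bag sizes $(t_k/L)\cdot n$ with $L=6^6-5^6=31031$ are typically non-integer (since $\gcd(t_k,L)=1$), so any integer bag vector must deviate from the ideal, and at least one adversarial configuration forces makespan strictly above $\rhosandm{6}\cdot\opt{}$. Ruling out every integer bag vector for the chosen $n$ amounts to showing infeasibility of a small mixed-integer system, which is done by computer search as the authors note; the hard part is selecting $n$ so that this infeasibility is provable by a finite enumeration. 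The secondary numerical inequality $\rhosandm{6}>1.5$ is a direct evaluation, $\rhosandm{6}=46656/31031\approx 1.5036$.
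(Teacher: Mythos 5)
Your proposal is an outline rather than a proof; in each of the three cases the crucial constructive ingredient is left unspecified, which is precisely where the work lies.

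For $m=2$ the lower bound argument (citing \Cref{lem:speedsandLB} and noting the rationals involved clear to integers on $n=12k$) is sound. But for the upper bound you correctly observe that a single clean formula like $(\lfloor n/3\rfloor,\lceil 2n/3\rceil)$ fails and then defer to an unspecified ``short residue analysis.'' The paper's proof hinges on the specific choice
\[
a_1=\left\lfloor \tfrac43\left\lfloor n/4+1\right\rfloor\right\rfloor,\qquad
a_2=\left\lfloor \tfrac43\left\lceil n/2\right\rceil\right\rfloor,
\]
and then shows two non-obvious facts: for every adversarial speed pair, one of these bags fits on the faster machine within factor $4/3$ (using that the adversary must put at least $\lceil n/2\rceil$ jobs on its faster machine and, if not $\geq\lceil 3n/4\rceil$, at least $\lfloor n/4\rfloor+1$ on the slower one); and $a_1+a_2\geq n$, which requires a case split on $n\bmod 4$ and $n\bmod 3$. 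Without the explicit formula, your plan has no content to verify.

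For $m=3$ you propose a genuinely different decomposition: use \oddalgo for small $\avg$, \sandtobricks for large $\avg$, and ``a tailor-made construction'' in between. The endpoints are fine ($q=1$ gives $3/2$ for $n\leq 9$; $(1+3/n)\cdot\rhosandm3\leq 3/2$ needs roughly $n\geq 54$), but that leaves the range $n\in[10,53]$ entirely unaddressed, and your claim that the verification is finite ``because only the $3!$ bag-to-machine orderings need to be checked'' is not correct as stated: the adversary chooses arbitrary nonnegative real speeds, so per instance one still must argue which finitely many speed configurations are extremal. The paper instead gives one explicit three-bag construction valid for all $n>3$ (with $a_3=\lfloor\frac32\lceil n/3\rceil\rfloor$, $a_1=\lfloor\frac32\cdot\frac12\lfloor n/3+1\rfloor\rfloor$, $a_2=n-a_1-a_3$) and rules out each way the second stage could fail by a direct case analysis, avoiding the splice of two different subroutines. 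Your route could plausibly be made to work, but it is not finished.

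For $m=6$ only the lower bound is claimed, and here you explicitly punt on the ``hard part'' of choosing $n$. The paper exhibits $n=756$, restricts the adversary to configurations with five slow machines of speed $s_1$ and one fast machine of speed $s_6$ with $5s_1+s_6\geq n$, and gives a six-row table of threshold conditions on $(a_1,\dots,a_6)$ together with matching $(s_1,s_6)$ pairs; it then shows that any integer bag vector summing to $756$ violates at least one condition, forcing makespan $\geq 589/391\approx 1.506>\rhosandm6$. Your numerical check $\rhosandm6=46656/31031\approx1.5036>1.5$ is correct, and your intuition for why the discrete setting is strictly worse (the ideal ratios $t_k/L$ don't clear to integers) matches the paper's, but without the explicit $n$ and the finite list of targeted speed configurations the lower bound is not established.
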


\subsection{Speeds in \texorpdfstring{$\{0,1\}$}{0,1}}\label{sec:BricksUB}

When considering speeds in $\{0,1\}$, bricks are of particular interest as the currently best known lower bound for rocks (arbitrary jobs) is~$\frac43$ and uses only bricks (unit-size jobs)~\cite{SteinZ20}.
We present
a matching upper bound. 
				
	\begin{theorem}\label{thm:BricksUB}
		There exists a $\frac 4 3$-robust algorithm for \srs with $\{0,1\}$-speeds and unit-size jobs.
	\end{theorem}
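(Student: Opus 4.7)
The plan is to dichotomize on $\avg = n/m$. When $\avg$ is above a fixed constant, I apply \sandtobricks with $\sandalg_{01}$ as the underlying infinitesimal algorithm. By \cref{lem:speedScaledSand}, this yields a robustness factor of $(1 + 1/\avg) \cdot \rhosandid(m) \leq (1 + 1/\avg) \cdot \rhoid$ with $\rhoid = (1+\sqrt{2})/2 \approx 1.207$; this is at most $4/3$ whenever $\avg \geq \rhoid/(4/3 - \rhoid) = 3(1+\sqrt{2})/(5 - 3\sqrt{2}) \approx 9.57$, so $\avg \geq 10$ (equivalently $\opt{m} \geq 11$) suffices. It remains to handle the regimes where $\opt{m} = \lceil n/m \rceil \in \{1, \dots, 10\}$, each over all $m$.

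For these residual regimes I construct the bags directly, guided by the profile function $\bar f$ of \Cref{sec:SandOnParallel} but adapted to integrality, using at most four distinct integer bag sizes. By \cref{lem:m-half} it suffices to bound the makespan when $t \leq m/2$ machines fail, and by \cref{lem:folding} I may assume the second stage folds the smallest $2t$ bags in pairs onto machines $t+1, \dots, 2t$ and places each of the remaining $m - 2t$ bags alone. Consequently, sorted bag sizes $a_1 \leq \cdots \leq a_m$ summing to $n$ are $\tfrac{4}{3}$-robust whenever
\[
    a_i + a_{2t+1-i} \leq \tfrac{4}{3}\left\lceil\tfrac{n}{m-t}\right\rceil \text{ for all } t \leq m/2 \text{ and } i \leq t,
    \qquad \text{and} \qquad
    a_m \leq \tfrac{4}{3}\left\lceil\tfrac{n}{m}\right\rceil.
\]
The profile $\bar f$ suggests a ``ramp'' of bags whose sizes rise roughly linearly up to some index, followed by a ``plateau'' of bags all of the maximum size. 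Since only four distinct integer sizes are used, the construction is parameterized by a handful of integers (a starting size, a ramp length, and two counts), which I set in closed form for each combination of $\opt{m}$ and $m$ modulo a small constant.

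For the very smallest residual cases---typically $\opt{m} \in \{1, 2, 3\}$, where the analytic construction leaves essentially no slack---I verify $\tfrac{4}{3}$-robustness via an integer linear program: the ILP requires $m$ nonnegative integer bag sizes summing to $n$, at most four distinct values, and the two displayed folding inequalities above for every $t \leq m/2$. For each such $\opt{m}$ only a bounded range of $m$ needs bespoke treatment (the larger $m$ reducing to the analytic construction), so the ILP has constant size and feasibility is checked by finite case analysis on $m$ modulo a small constant.

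The main obstacle will be the integrality of the folded pair-sums. In the infinitesimal setting, $\bar f(x) + \bar f(1-x)$ meets the bound $\rhoid \cdot m/(m-t)$ exactly at the continuous worst case $t^\star$; in the discrete setting the pair-sums must be integers and the optimal makespan $\lceil n/(m-t) \rceil$ jumps as a step function of $t$, so $\bar f$ does not translate verbatim. I plan to resolve this by shifting the ramp of bag sizes by a fractional amount depending on $\opt{m}$ and on $m$ modulo a small constant, which is the source of the case split. A secondary difficulty is that the folding inequality must hold \emph{uniformly} over all $t \leq m/2$: because of integrality, the binding $t$ can drift away from $t^\star$, so I expect to need a careful monotonicity argument in $t$ to show that the closed-form construction satisfies the constraint simultaneously at every $t$.
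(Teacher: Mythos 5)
Your high‑level decomposition (large $\opt m$ via \sandtobricks, small $\opt m$ via a handful of bag sizes, tiny $m$ via an ILP) is broadly the same shape as the paper's, and your threshold $\opt m\geq 11$ for the scaled‑sand argument matches. However, there is a genuine gap in the step you treat as routine: you invoke \Cref{lem:m-half} to restrict attention to $t\leq m/2$, but that lemma is stated and proved only for \emph{infinitesimal} jobs. Its proof pairs up $2m'$ machines and uses that the optimal makespan exactly doubles when the machine count halves, i.e., $\opt{m'}=2\opt{2m'}$. For unit‑size jobs this is false: $\opt{m'}=\lceil n/m'\rceil$ can be strictly smaller than $2\lceil n/(2m')\rceil$ (e.g.\ $n=3$, $m'=1$ gives $3<4$), so the pairing argument only yields $\alg_{m'}\leq 2\rho\,\opt{2m'}$, which need not be $\leq\rho\,\opt{m'}$. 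The paper does \emph{not} rely on \Cref{lem:m-half} for discrete jobs: when $m'<m/2$ it analyzes LPT on the bags directly and argues about the third‑and‑later bag placed on a machine (\Cref{lem:bricks:lpt} and the $m'<m/2$ lemma). Your folding‑only sufficient condition $a_i+a_{2t+1-i}\leq\frac43\lceil n/(m-t)\rceil$ for $t\leq m/2$ therefore does not cover the regime $t>m/2$, and the claim cannot be closed without an additional argument there.

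Two secondary issues. First, you propose a single four‑size ``ramp plus plateau'' construction for all $\opt m\in\{1,\dots,10\}$; the paper only obtains such a construction for $3\leq\opt m\leq 8$ and handles $\opt m\in\{9,10\}$ by a refined \emph{rounding} of the scaled $\sandalg_{01}$ bag sizes with an amortized per‑bag loss argument — it is not obvious that four sizes suffice in that range, and the paper's authors evidently did not find such a construction. Second, you anticipate ILP verification only for $\opt m\leq 3$ and ``a bounded range of $m$'', but in fact all the closed‑form constructions in the paper require $m$ fairly large (up to $m\geq 50$), so the ILP must be run for every $(\opt m,m)$ with $\opt m\leq 10$ and $m\leq 50$; this is an enumeration over all such instances, not a constant‑size case split on $m$ modulo a small number. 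Neither of these is necessarily fatal, but combined with the $t>m/2$ gap above, the proposal as written does not yet prove the theorem.
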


	In the proof, we handle different cases depending on~$\nrbags$
	and~$\opt{m}$ by carefully tailored methods. (Recall,~$\opt{m}$ denotes the optimal makespan on~$m$ machines.) Again, we use~$\lambda$ to denote the average load per bag, i.e.,~$\lambda := \frac nm$.
	
	For~$\opt{\nrbags}\geq 11$, we use the algorithm \sandtobricks based on \sandalgid; see \Cref{lem:bricks:Om>=11}. For~$\opt{\nrbags}\in\{9,10\}$, we refine this approach and show that for~$m\geq 40$ it is still possible to use bag sizes based on \sandalgid; see \Cref{lem:bricks:9<=Om<=10}.  
	For~$\opt{\nrbags} \in \{3,\ldots,8\}$ and~$m \geq 50$, we explicitly give a packing in \Cref{lem:bricks:3<=Om<=8} while for~$\opt{\nrbags} \in \{1,2\}$ and~$m \geq 50$ packing bags according to the optimal schedule on~$\nrbags$ machines is sufficient; see \Cref{lem:bricks:Om<=2}. 	The remaining cases,~$\opt{\nrbags}\leq 10$ and~$m\leq 50$, can be verified by enumerating all possible instances and using an integer linear program to verify that there is a solution of bag sizes that is $\frac{4}{3}$-robust; see \Cref{lem:bricks:enum}.
	
	While proving these results, inductively applying the following lemma allows us to
	restrict ourselves to instances where~$\opt{\nrbags} < \opt{\nrbags-1}$. Hence, we can express~$n = \nrbags \opt{\nrbags} - \ell$, where~$0 \leq \ell < \min\{\nrbags,\opt{\nrbags}\}$. 	
	\begin{lemma}\label{lem:m-1}
		Fix a job set~$I$ and some~$\rho$. If~$\optm=\opt{m-1}$, then any solution for~$I$
		that is $\rho$-robust for~$m-1$ bags on~$m-1$ machines is $\rho$-robust in the instance on~$m$ machines as well.
	\end{lemma}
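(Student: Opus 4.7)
The plan is to turn any $(m-1)$-bag solution into an $m$-bag solution by adjoining a single empty bag, and then to argue that this extension preserves the robustness factor under the hypothesis $\opt{m}=\opt{m-1}$. Since the bag multiset differs only by a zero-size bag, the proof boils down to a case analysis on the number $t\in\{0,1,\dots,m-1\}$ of machines that fail on the $m$-machine side, matching both the algorithm's makespan and the optimum with the corresponding quantities in the $(m-1)$-machine instance.

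First, for $t\geq 1$, the second-stage allocation may send the empty bag to one of the failing machines at no cost, reducing to the problem of distributing the $m-1$ nonempty bags over the $m-t\leq m-1$ working machines. This is precisely the second-stage problem faced by the original $(m-1)$-bag solution under $t':=t-1$ failures out of $m-1$ machines, which also leaves $m-t$ working machines. Both the algorithm's makespan and the optimum depend only on the job set $I$ and on the number of working machines, so they coincide between the two instances, and the assumed $\rho'$-robustness on $m-1$ machines directly yields the ratio bound.

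The case $t=0$ is the only place where the hypothesis enters. With all $m$ machines working but only $m-1$ nonempty bags, the algorithm places one bag per machine and leaves one machine empty, so its makespan equals the largest bag size, which is exactly the makespan of the $(m-1)$-bag solution on $m-1$ working machines under zero failures. By assumption this is at most $\rho'\cdot\opt{m-1}$, and invoking $\opt{m-1}=\opt{m}$ rewrites the bound as $\rho'\cdot\opt{m}$, closing the case. Combining both cases gives $\rho'$-robustness of the extended solution on $m$ machines.

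This argument is essentially mechanical, so there is no deep obstacle; the only subtlety worth flagging is that for $t\geq 1$ the optima on the two instances automatically agree because they depend solely on the number of working machines, and hence the hypothesis $\opt{m}=\opt{m-1}$ is needed exclusively to handle the single corner case $t=0$, where the $m$-machine side genuinely has one more working machine than the $(m-1)$-machine side.
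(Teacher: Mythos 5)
Your proof is correct and takes essentially the same approach as the paper: treat the $(m-1)$-bag solution as an $m$-bag solution by adding an empty bag, reuse the $(m-1)$-machine packings whenever at least one machine fails, and invoke $\opt{m}=\opt{m-1}$ only in the $t=0$ case where one extra machine is available but irrelevant. The paper's version is terser, but the case split and the role of the hypothesis are the same.
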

	\begin{proof}
		Let~$I$ be the set of jobs. Compute a solution with~$m-1$ bags that is
		$\rho$-robust for the instance with jobs~$I$ and~$m-1$ bags.
		If~$\nrmach<m$ machines actually work, i.e., $m-\nrmach$ machines fail, return the schedule computed by the
		$\rho$-robust algorithm with $m-1$ bags and~$\nrmach$ machines.
		If~$m$ machines work, assign each of the~$m-1$ bags to its private machine. By assumption, the load of the largest bag is at most
		$\rho\opt{m-1}=\rho\optm$ which gives the result.
	\end{proof}

\begin{coro}\label{lem:bricks:Om>=11}
	If~$\opt{\nrbags}\geq11$, \sandtobricks based on $\sandalg_{01}$ is $\frac 4 3$-robust.
\end{coro}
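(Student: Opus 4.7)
The plan is to apply \Cref{lem:speedScaledSand} with the base algorithm $\sandalg^*$ chosen as $\sandalg_{01}$, and then to combine the resulting guarantee with $\rhosandid(m) \leq \rhoid = (1+\sqrt{2})/2$, which is provided by \Cref{thm:IdenticalWSand} for every $m \geq 1$. This immediately yields that \sandtobricks based on $\sandalg_{01}$ is $(1+\tfrac{1}{\avg})\cdot \rhoid$-robust on any instance of $n$ unit-size jobs and $m$ machines with speeds in $\{0,1\}$, where $\avg = n/m$.

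Next, I would bound $\avg$ from below using the hypothesis $\k \geq 11$. Because the optimal makespan of $n$ unit-size jobs on $m$ identical (all-ones) machines is $\lceil n/m \rceil$, we have $\k = \optm = \lceil \avg \rceil$, and therefore $\avg \geq \k - 1 + \tfrac{1}{m} > 10$. This gives $1 + \tfrac{1}{\avg} < \tfrac{11}{10}$, so the overall robustness factor is strictly less than $\tfrac{11}{10}\cdot \tfrac{1+\sqrt{2}}{2} = \tfrac{11(1+\sqrt{2})}{20}$.

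It then remains to verify the arithmetic inequality $\tfrac{11(1+\sqrt{2})}{20} \leq \tfrac{4}{3}$, which rearranges to $33\sqrt{2} \leq 47$. Squaring both sides reduces this to $2\cdot 33^{2} = 2178 \leq 2209 = 47^{2}$, which holds. Since the heavy lifting is carried out by \Cref{lem:speedScaledSand} and \Cref{thm:IdenticalWSand}, there is no substantive obstacle in the argument; the threshold $\k \geq 11$ is essentially the smallest integer for which this arithmetic goes through (one checks that $(1+\tfrac{1}{9})\cdot \rhoid > \tfrac{4}{3}$), which is precisely why the cases $\k \in \{9,10\}$ and $\k \leq 8$ demand the more delicate, case-specific treatments outlined in \Cref{sec:BricksUB}.
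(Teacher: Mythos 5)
Your proposal is correct and follows essentially the same route as the paper: apply \Cref{lem:speedScaledSand} to $\sandalg_{01}$, use $\k \geq 11 \Rightarrow \avg = n/m > 10$, and verify that $(1+\tfrac{1}{\avg})\cdot\rhoid < \tfrac{4}{3}$. The paper states the bound as $\avg \geq 10$ and $1.1\cdot\rhoid < 1.33$, while you carry out the same arithmetic exactly (via $33\sqrt{2}\leq 47$); this is only a cosmetic difference, and your closing observation about why $\k\geq 11$ is the natural threshold is also accurate.
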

\begin{proof}
	This follows directly from \Cref{lem:speedScaledSand} and the fact that,
	for~$\opt{\nrbags}\geq11$, we have~$\avg=\frac{n}{m} \geq 10$. Thus, the robustness ratio of
	\sandtobricks is at most~$(1+\frac{1}{\avg})\rhoid \leq 1.1 \cdot \rhoid <1.33$ which
	is less than~$4/3$.
\end{proof}

\begin{lemma}\label{lem:bricks:9<=Om<=10}
	For~$\opt{\nrbags}\in\{9,10\}$ and~$m\geq 50$, there is a $\frac{4}{3}$-robust algorithm.
\end{lemma}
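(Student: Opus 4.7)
My plan is to adapt the scale-and-round approach of \sandtobricks (\Cref{lem:speedScaledSand}) by using a scale factor tied to $\k$ rather than $\lambda$ and by rounding with cumulative rounding rather than per-bag floor. Let $a_1^{\mathrm{inf}} \le \dots \le a_m^{\mathrm{inf}}$ be the sizes output by \sandalgid on $m$ units of infinitesimal volume, so $\sum_i a_i^{\mathrm{inf}} \ge m$, $\max_i a_i^{\mathrm{inf}} \le \rhoid$, and, by \Cref{prop:identical-steps}, $a_i^{\mathrm{inf}} + a_{2t+1-i}^{\mathrm{inf}} \le \rhoid\,m/(m-t)$ for every $t \le m/2$ and $i \le t$. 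Pick a scale $c$ slightly below $\lfloor 4\k/3 \rfloor/\rhoid$ so that $c\rhoid$ is strictly less than the maximum admissible integer bag size $\lfloor 4\k/3 \rfloor \in \{12,13\}$, with a $\Omega(1)$ slack; concretely, $c = \k + \eta$ for a small positive $\eta$ that we fix as small as possible. Define real sizes $a_i := c\,a_i^{\mathrm{inf}}$.

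To obtain integer sizes I apply cumulative rounding: with $S_i := \sum_{j \le i} a_j$ and $S_0 := 0$, let $a_i^* := \lfloor S_i\rfloor - \lfloor S_{i-1}\rfloor$. This guarantees $a_i^* \in \{\lfloor a_i\rfloor, \lceil a_i\rceil\}$, so $\max_i a_i^* \le \lceil c\rhoid\rceil \le \lfloor 4\k/3\rfloor$, and $\sum_i a_i^* = \lfloor cm\rfloor \ge cm - 1$. Since $c > \k$, the surplus $cm - \k m$ is linear in $m$; for $m \ge 50$ and $\k \in \{9,10\}$ this dominates both the unit rounding loss and the shortfall $\ell < \k$, giving $\sum_i a_i^* \ge n$ and hence feasibility of the first-stage assignment.

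The main obstacle, and the step in which $m \ge 50$ is used, is the folded-pair bound $a_i^* + a_{2t+1-i}^* \le \lfloor 4\k m/(3(m-t))\rfloor = \tfrac{4}{3}\opt{m-t}$ under the simple folding of \Cref{lem:folding}. A naive per-pair rounding error of $+2$ over the real sum $a_i + a_{2t+1-i}$ would saturate the budget $\lfloor 4\k/3\rfloor\,m/(m-t)$ near the worst-case $t = t^\star$, so the argument must be amortised over consecutive bags. The key observation is that in the linear segment of the profile function $\bar f$, the real pair sum $a_i + a_{2t+1-i} = c(1 + 2\rhoid t/m)$ is the \emph{same constant $P$ for all $i \le t$}, and cumulative rounding then gives $\sum_{i=1}^{t} (a_i^* + a_{2t+1-i}^*) = \lfloor tP\rfloor$, i.e., the integer pair sums average to at most $P$. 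To turn this into a per-pair worst-case bound I would observe that the first bag always rounds down ($a_1^* = \lfloor a_1\rfloor$ since $S_0 = 0$), providing an ``anchor'' pair with rounding error at most $+1$; a careful bookkeeping of the fractional-part telescoping $\phi_i := S_i - \lfloor S_i\rfloor$, using that $a_i^* - a_i = \phi_{i-1} - \phi_i$, then distributes the remaining rounding errors across pairs so that no pair exceeds $\lceil P\rceil$, with the constant slack between $c\rhoid$ and $\lfloor 4\k/3\rfloor$ absorbing the residual. Pairs entirely inside the plateau of $\bar f$ and pairs straddling its kink are handled as simpler sub-cases: plateau bags all round to at most $\lceil c\rhoid\rceil = \lfloor 4\k/3\rfloor$ and, being pairwise equal, every folded pair stays within budget, while straddling pairs only occur for $t$ well below $t^\star$ where the real-size slack is already large. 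The hypothesis $m \ge 50$ enters precisely by ensuring the multiplicative surplus of $c$ over $\k$ translates into enough absolute slack at $t = t^\star$ to absorb the $\Theta(1)$ worst-case residual rounding error.
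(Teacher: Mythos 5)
Your proposal inverts the difficulty that the paper tackles. The paper scales by $\frac{4}{3\rhoid}$ and rounds \emph{down}; then every rounded bag is at most the corresponding real bag of \sandalgid (scaled), so the folded pair bound $a_i+a_{2t+1-i}\leq\frac43\opt{m-t}$ from \Cref{prop:identical-steps} carries over automatically to the integer bags, and all the work goes into showing that the total rounded volume is still $\geq n$ (the paper does this by amortizing the rounding loss on the slope over periods of length $\omega$, the order of $4n$ in $\Z_{3m^2}$). You instead use cumulative rounding, which makes the volume argument trivial but lets individual bags round \emph{up}, so you must re-establish the pair bound on the integer sizes. That is a legitimate alternative decomposition, but the part you sketch for the re-established pair bound has a gap.

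Specifically, your claim that ``no pair exceeds $\lceil P\rceil$'' is not true for cumulative rounding in general. Writing $q=\{P\}$ and using $S_{2t-i}=(t-i)P+S_i$ (which holds when all involved bags lie on the slope), one computes
\[
a_i^*+a_{2t+1-i}^* \;=\; \lfloor P\rfloor + \big(\lfloor (t-i+1)q+\phi_{i-1}\rfloor - \lfloor (t-i)q+\phi_i\rfloor\big),
\]
and the bracketed integer can equal $2$: this happens when bag $i$ wraps ($\phi_{i-1}+\{a_i\}\geq 1$, forcing $\phi_i=\phi_{i-1}+\{a_i\}-1$) \emph{and} $(t-i)q+\phi_i$ and $(t-i+1)q+\phi_{i-1}$ straddle an integer. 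Both conditions can be met simultaneously (e.g.\ $q=0.3$, $\{a_i\}=0.1$, $\phi_{i-1}=0.95$, $t-i=3$ gives a pair sum of $\lfloor P\rfloor+2 = \lceil P\rceil+1$). The telescoping identity $\sum_i (a_i^*-a_i)=-\phi_{2t}\leq 0$ controls the \emph{total} pair error, not the worst pair, and the ``anchor'' observation ($a_1^*$ rounds down) protects only the pair containing bag~$1$. So the per-pair bound you would need is $a_i^*+a_{2t+1-i}^*\leq\lfloor P\rfloor+2$, and you must then verify $\lfloor P(t)\rfloor+2\leq\lfloor\frac43\opt{m-t}\rfloor$ for every $t\leq m/2$, separately for $\k=9$ and $\k=10$ and for a concrete choice of $c$, plus handle pairs with one bag on the plateau (where the constant-$P$ structure fails). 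None of this is in your sketch, and since your slack near $t^\star$ is itself only on the order of $2$, it is genuinely unclear without the computation whether it suffices; the paper's round-down approach sidesteps exactly this difficulty, which is why it is the route they take.
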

\begin{proof}
	Consider bags created by $\sandalg_{01}$ and scale them by a factor
	of~$\smash{\frac{4}{3\rhoid}}$. We obtain bag sizes
	$
	a_i' = \frac n m \cdot\min \{ \frac{4}{3}, \ \frac{2}{3\rhoid}
	+\  \frac 4 3 \cdot \frac{i-\frac{1}{2}}m \},
	$
	for~$i\in\{1,\ldots,m\}$.
	Round down~$a_i'$ to the nearest integer and denote the
	rounded bag size by~$a_i$.
	
	The total volume of bags before rounding is~$\sum_{i=1}^m a'_i =
	\frac{4}{3\rhoid} n >n$ and is, thus, larger than the total processing volume
	of all jobs. We will show that after rounding down the bag sizes to the nearest integer, the remaining volume is still
	at least~$n$. Therefore, we can guarantee that all unit-size jobs can be
	assigned to the bags, so the robustness factor is not larger  than the
	robustness factor~$\rhosandid$ of $\sandalg_{01}$ times the scaling
	factor~$4/(3\rhoid)$, which proves the lemma.
	
	To argue that the volume that remains after rounding is at least~$n$, we show that the loss of volume due to rounding is bounded by the term~$\big(\frac
		{4}{3\rhoid}-1\big) n$.
	We do this by carefully analyzing the loss incurred on
	three different portions of bags that correspond to different parts of the profile function $\bar f(x)= 
		\frac n m \cdot \min\left\{\frac43,\ \frac{2}{3\rhoid}
		+\  \frac 4 3 \cdot x \right\}$, which, too, is obtained by scaling the profile function of $\sandalg_{01}$. Denote
	by~$\omega$ the order of~$4n$ in the additive group~$\Z_{3m^2}$, that is,
	$\omega = \min \{i\in\N \setminus\{0\} \mid i\cdot 4n \equiv 0 \mod
	3m^2\}$.~We~claim: 
	\begin{enumerate}[label = (\roman*)]
		\item On the \emph{plateau}, that is, for all bags with~$a'_i = \frac n m \cdot \frac{4}{3}$, we loose a
		volume of at most~$L_{\max} \coloneqq 1-\frac{1}{3m}$ per
		bag.\label{claim:BricksUnderSandtwoi}
		\item On the remaining portion, i.e., the \emph{slope}, we loose for any~$\omega$ consecutive bags an average
		volume of at most~$\frac{1}{2} + \frac{4n-1}{3m^2}$ per bag.
		\label{claim:BricksUnderSandtwoii}
		\item On \emph{leftover bags} of the slope, i.e., bags that remain after
		partitioning bags on the slope into~$\omega$-sized groups, we loose a total
		volume of~$\frac{3m^2}{32n} + \frac{n}{6m^2} + \frac 1 4$ additional to the
		average loss of~$\frac{1}{2} + \frac{4n-1}{3m^2}$ per bag from~(ii).
		\label{claim:BricksUnderSandtwoiii}
	\end{enumerate}
	See \Cref{fig:bricksfit} for an illustration of the different parts.
	
	\begin{figure}[tbhp]
	\centering
	\def\Pbeta{0.586}
	\def\Palpha{1}
	\def\Pm{30}
	\def\Pk{9}
	\def\Ph{0}
	\def\Pn{(\Pk*\Pm-\Ph)}

	\begin{tikzpicture}[fct/.style={very thick},yscale=0.7,xscale=0.68]
		
		\begin{axis}[
			xmin=.49, xmax=\Pm+.51, ymin=0, ymax=14,
			extra x ticks={1,2,29,30},
			xtick={0.5,1.5,...,30.5},
			extra x tick labels={$1$,\vphantom{|}$\ldots$,$\ldots$\vphantom{|},\phantom{|}$m$\phantom{|}},
			xtick pos=left,	
			xticklabels={,,},
			ytick = \empty,
			extra x tick style={major tick length = 0pt,},
			ybar, bar width=.5cm,
			samples=18,
			x=.5cm,y=.5cm,
			domain=1:18,
			]
			
		\end{axis}
	
		\begin{axis}[
			xmin=.49, xmax=\Pm+.51, ymin=0, ymax=14,
			ybar, bar width=.5cm,
			samples=18,
			x=.5cm,y=.5cm,
			domain=1:18,
			hide axis
			]
			
			\addplot+[draw=black, thick, pattern=north west lines] 
			{\Pn/\Pm*min(4/3,2/3/1.207+4/3*(x-1/2)/\Pm)};	
			\draw[thick, black, opacity = 0.3] (0.5,4.97) -- (18.03,11.95);
		\end{axis}
		
		\begin{axis}[
			xmin=.49, xmax=\Pm+.51, ymin=0, ymax=14,
			x=.5cm,y=.5cm,
			ybar,bar width=.5cm,
			samples=3,
			domain=1:3,
			hide axis
			]
			\addplot+[draw=black, thick, fill=black!25] 
			{floor(\Pn/\Pm*min(4/3,2/3/1.207+4/3*(x-1/2)/\Pm))};
		\end{axis}
		
		\begin{axis}[
			xmin=.49, xmax=\Pm+.51, ymin=0, ymax=3/2*\Pn/\Pm,
			x=.5cm,y=.5cm,
			ybar,bar width=.5cm,
			samples=floor(\Pbeta*\Pm)-2,
			domain=4:floor(\Pbeta*\Pm)+1,
			hide axis
			]
			\addplot+[draw=black, thick, fill=black!50] 
			{floor(\Pn/\Pm*min(4/3,2/3/1.207+4/3*(x-1/2)/\Pm))};
		\end{axis}
		
		\begin{axis}[
			xmin=.49, xmax=\Pm+.51, ymin=0, ymax=3/2*\Pn/\Pm,
			x=.5cm,y=.5cm,
			ybar,bar width=.5cm,clip=false,
			samples=\Pm-floor(\Pbeta*\Pm)-1,
			domain=ceil(\Pbeta*\Pm)+1:\Pm,
			hide axis
			]
			\addplot+[draw=black, thick, fill=black!75] 
			{floor(\Pn/\Pm*min(4/3,2/3/1.207+4/3*(x-1/2)/\Pm)+0.001)};
			
			\draw[thick] (30.45,12) -- (30.7,12);
			\node[]() at (31.5,12) {$\frac{4}{3}\,\frac{n}{m}~$};
			\draw[thick] (0.25,4.97) -- (0.5,4.97);
			\node[]() at (-1,4.97) {$\frac{2}{3\rhoid}\,\frac{n}{m}~$};
			
			\foreach \a/\b/\c in {3.5/14/, 8.5/14/{,dashed}, 13.5/14/{,dashed}, 18.5/14/}{
				\edef\temp{\noexpand\draw [very thick\c] (axis cs:\a,0) -- (axis cs:\a,\b);}
				\temp
			}
			\draw[thick] (8.5,14.2) -- (8.5,14.4) -- node[above] {\footnotesize$\omega = 5$} (13.5,14.4) -- (13.5,14.2);
		\end{axis}

	\end{tikzpicture}
	\caption{Bags~$a_i'$ obtained from \sandalgid by scaling (striped) and bags
		$a_i$ rounded down to the nearest integer for~$m=30$ and~$n=270$. The plateau
		is indicated in dark grey, the sloped part in mid grey, and the leftover bags in
		light grey. The continuous sloped line indicates the function~$\bar f$ with the
		appropriate scaling. Note that, here,~$\omega = 5$, and the volume of the
		bags~$a_i$ is~$288 \geq n$.} \label{fig:bricksfit}
	\end{figure}
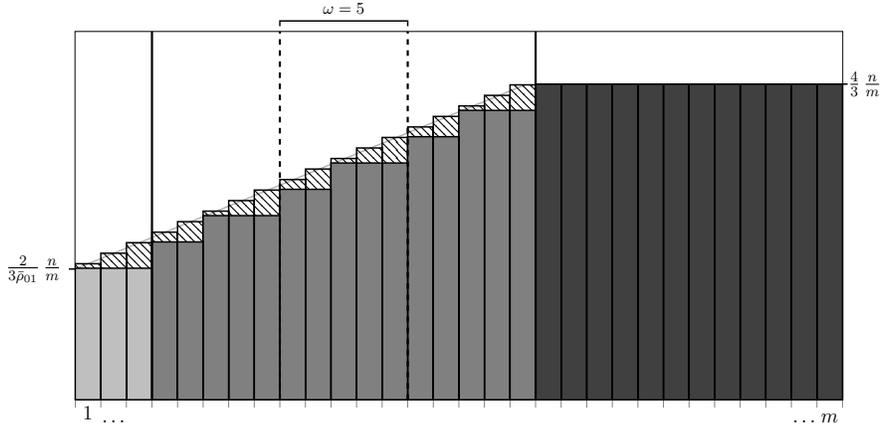

	The first case, Claim~\ref{claim:BricksUnderSandtwoi}, follows directly from
	the fact that bag sizes are~$\frac{4n}{3m}$.
	
	For Claim~\ref{claim:BricksUnderSandtwoii}, note that since the slope of the
	profile function is~$\Delta \coloneqq \frac{4n}{3m^2}$, the loss due to
	rounding is periodic. Specifically, it has a period of~$\omega$. For~$\omega$
	consecutive bags, and starting with a worst-case loss of~$L_{\max}$ for the
	first bag, the average loss due to rounding equals
	\begin{align*}
	\frac 1 \omega \sum_{i=0}^{\omega-1} \left( L_{\max} - \frac i \omega \right)
	= 1 - \frac{1}{3m^2} - \frac 1 2 \frac {\omega -1}{\omega} = 1  - \frac 1 2 -
	\frac{1}{3m^2} + \frac {1}{2\omega}.
	\end{align*}
	Denote this term by~$L_\omega$. Since~$\omega \geq \frac{3m^2}{4n}$, the
	average loss is at most~$\frac{1}{2} + \frac{4n-1}{3m^2}$.
	
	Finally, for Claim~\ref{claim:BricksUnderSandtwoiii}, we construct adversarial
	leftover bags that maximize the loss due to rounding. The first bag has a loss
	of~$L_{\max}$, the second a loss of~$L_{\max}-\Delta$, and so on until the
	last one has a loss which is just above~$L_\omega$. Adding further leftover
	bags would lead to averaging as in Claim~\ref{claim:BricksUnderSandtwoii} and
	ultimately a smaller rounding loss. For such adversarial leftover bags, we
	define~$n_\ell \coloneqq \frac{L_{\max} - L_\omega}\Delta$, so~$\lfloor n_\ell
	\rfloor$ is the number of leftover bags. Then, the overall additional loss,
	when compared to~$L_\omega$, is given by
	\begin{align*}
	\sum_{i=0}^{\floor{n_\ell}-1} \left( (L_{\max} - L_\omega) - i \cdot \Delta \right) 
	&\leq \floor{n_\ell}(L_{\max} - L_\omega) - \frac{\floor{n_\ell}(n_\ell-2)}{2}\cdot \Delta\\
	\leq \frac {n_\ell\,(L_{\max} - L_\omega)}{2} + n_\ell\cdot\Delta
	&\leq \frac {(\frac 1 2 - \frac {1}{2\omega})^2}{2\Delta} + \frac 1 2.
	\end{align*}
	Again, we use~$\omega \geq \frac{3m^2}{4n}$ to obtain an upper bound for the last term. Then
	\begin{align*}
	\sum_{i=0}^{\floor{n_\ell}-1} \left( (L_{\max} - L_\omega) - i \cdot \Delta\right) 
	& \leq \frac {(\frac 1 2 - \frac {2n}{3m^2})^2}{\frac{8n}{3m^2}} + \frac{1}{2}
	= \frac{3m^2}{32n} + \frac{n}{6m^2} + \frac 1 4,
	\end{align*}
	which concludes the proof of Claim~\ref{claim:BricksUnderSandtwoiii}.
	
	Let~$[x]$ denote the value of~$x$ rounded to the closest integer. We can now
	bound the overall loss due to rounding by
	\begin{align*}
	\left[(1-\beta)m\right] \cdot L_{\max} + \left[\beta m\right] \cdot L_\omega + \left(\frac{3m^2}{32n} + \frac{n}{6m^2} + \frac 1 4\right).
	\end{align*}
	Using $(1-\beta)m$ and $\beta m$ instead of rounding to the nearest integer decreases this term by at most $\tfrac{1}{2}$ since $m$ is an integer and $L_{\max},L_\omega \leq 1$.
	Thus the total loss due to rounding is less or equal to~$\big(\frac {4}{3\rhoid}-1\big) n$, if 
	\begin{align*}
	\left(\frac {4}{3\rhoid}-1\right) n - 
	(1-\beta)m -
	\beta m \cdot \left(\frac{1}{2} + \frac{4n-1}{3m^2}\right)
	-\frac{3m^2}{32n} - \frac{n}{6m^2} - \frac 1 4 &\geq 0 ,
	\end{align*}
	Using~$\nrbags \opt{\nrbags}   - \opt{\nrbags} \leq n \leq \nrbags \opt{\nrbags} $, 
	algebraic computations show that for~$\opt{\nrbags}=9$ and~$\opt{\nrbags}=10$ this is the case
	when~$m\geq40$ and~$m\geq30$, respectively (see \cite{ehmnss2021}).
\end{proof}

When $\opt{\nrbags} \leq 8$, we only require constantly many different bag sizes, which we describe explicitly. To simplify the analysis, we assume that, once the number~$\nrmach$ of non-failing machines is revealed, bags are assigned to the machines in \lpt order. Since this assignment cannot be better than the optimal bag-to-machine assignment, the ratio between the makespan attained by \lpt and the optimal solution is not smaller than the robustness factor. 

The following observation on bags assigned to machines by \lpt is crucial. For a given set of bags, let~$\lptf$ be the makespan attained by assigning the bags in \lpt order to the currently least loaded machine 
when there are~$\nrmach$ machines with speed~$s_i = 1$ (and $\nrbags - \nrmach$ machines with speed~$s_i =0$).
\begin{lemma}\label{lem:bricks:lpt}
	Let~$a$ be the size of a bag determining the makespan of \lpt on~$\nrmach$
	machines. If~$a \leq \frac{\opt{\nrmach}}{3}$, then~$\lptf \leq \frac43
	\opt{\nrmach}$.
\end{lemma}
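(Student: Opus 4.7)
The plan is to adapt the standard LPT analysis for makespan minimization on identical machines. Let $i^\star$ be the machine that attains the makespan $\lptf$, and let $a$ be the bag on $i^\star$ that ``determines the makespan'', i.e., the last bag placed on $i^\star$ by LPT. (If $i^\star$ holds only one bag, the claim is trivial because $a \leq \opt{\nrmach}/3 \leq \frac{4}{3}\opt{\nrmach}$; so assume $a$ is placed after at least one other bag on $i^\star$.)

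First I would use the greedy rule of LPT: at the moment $a$ is placed on $i^\star$, this machine has the minimum load among all $\nrmach$ machines. Hence every other machine carries a load of at least $\lptf - a$ at that point, and this remains a lower bound on its final load. Summing over all machines, the total processing volume $V$ of the bags satisfies
\[
V \;\geq\; \nrmach(\lptf - a) + a.
\]

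Second, since the optimal schedule on $\nrmach$ machines also has to distribute $V$ over $\nrmach$ machines while achieving makespan $\opt{\nrmach}$, we have $V \leq \nrmach\,\opt{\nrmach}$. Combining the two bounds and rearranging,
\[
\lptf \;\leq\; \opt{\nrmach} + a\Bigl(1 - \tfrac{1}{\nrmach}\Bigr) \;\leq\; \opt{\nrmach} + a.
\]
Plugging in the hypothesis $a \leq \opt{\nrmach}/3$ yields $\lptf \leq \opt{\nrmach} + \opt{\nrmach}/3 = \frac{4}{3}\opt{\nrmach}$, as required.

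There is no real obstacle here: the only subtle point is being explicit that ``the bag determining the makespan'' is the last bag put on the critical machine, so that LPT's least-loaded invariant applies exactly at the moment of its placement. Everything else is arithmetic.
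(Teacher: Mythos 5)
Your proof is correct and follows essentially the same route as the paper's: both exploit that the critical machine is least loaded at the moment the critical bag is placed, then use a volume argument (total volume $\leq \nrmach\,\opt{\nrmach}$) to conclude $\lptf \leq \opt{\nrmach} + a$. You spell out the Graham-style averaging step a bit more explicitly than the paper does, but the underlying argument is the same.
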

\begin{proof}
	Let~$b$ be a bag of size~$a$ that determines the makespan and is placed on machine~$i$. As \lpt assigns bags
	in decreasing size to the currently least loaded machine, the load
	on machine~$i$ right before assigning~$b$ was at most~$\opt{\nrmach}$.
	Hence, it follows that~$\lptf \leq \opt{\nrmach} + a \leq \frac{4}{3} \opt{\nrmach}$.
\end{proof}

For~$\opt{\nrbags} \in \{3,\ldots,8\}$, we pack four different types
of bags depending on~$\opt{\nrbags}$. 
For~$l \in \{0,1,2,3\}$, we denote
by~$a_l$ the size of the~$l$-th bag type and by~$x_l$ its multiplicity. The idea
is to have~$x_0 + x_1 = x_3 \approx \frac25 m$ while~$x_2 \approx \frac15m$.

More precisely, let~$a_1 = \ceil{\frac23\opt{\nrbags}}$,~$a_2 = \opt{\nrbags}$, and~$a_3 =
\floor{\frac43\opt{\nrbags}}$ be the three \emph{standard} bag sizes. Since~$a_1 + a_3 = 2
a_2$, packing as many smallest as largest bags, i.e.,~$x_1 = x_3$, ensures
that $\sum_{l=1}^3 a_l x_l = \nrbags \opt{\nrbags}$. Recall that $n = \nrbags \opt{\nrbags} - \ell$
with~$0 \leq \ell < \opt{\nrbags}$. Hence, we decrease~$x_1$ by~$\ell$ and pack~$x_0 =
\ell$ many bags of size~$a_0 = a_1 - 1 = \ceil{\frac23\opt{\nrbags}} -1$ in order to pack
exactly~$n$ jobs in our~$\nrbags$ bags. As we aim for a tight robustness
guarantee, we have to be careful about the exact number of bags in this
section. The following table defines~$x_0+x_1$,~$x_2$, and~$x_3$ depending on~$m
\Mod 5$.
\begin{equation*}
\renewcommand*\arraystretch{1.4}
\arraycolsep=3pt
\begin{array}{c||ccccc|c}
m \Mod 5 & 0 & 1 & 2 & 3 & 4 & a_l\\\hline
x_0 \!+\! x_1 & \frac{2m}5 & 2 \floor{\frac m5} & 2 \floor{\frac m5} \!+\! 1 & 2 \floor{\frac m5} \!+\! 1 & 2 \floor{\frac m5} \!+\! 1 & \ceil{\frac23 \opt{\nrbags}}\\
x_2 & \frac m5 & \floor{\frac m5} \!+\! 1 & \floor{\frac m5} & \floor{\frac m5} \!+\! 1 & \floor{\frac m5} \!+\! 2 & \opt{\nrbags}\\
x_3  & \frac{2m}5 & 2 \floor{\frac m5} & 2 \floor{\frac m5} \!+\! 1 & 2 \floor{\frac m5} \!+\! 1 & 2 \floor{\frac m5} \!+\! 1 & \floor{\frac43\opt{\nrbags}}
\end{array}
\end{equation*}

The analysis of assigning these bags via \lpt is a tedious case distinction based on which types of bags are assigned to the same machine and does not provide new insights. Therefore, we defer it to Appendix~\ref{apx:BricksUB}. 

\begin{restatable}{lemma}{bricksforthreeOmeight}\label{lem:bricks:3<=Om<=8}
	If~$\opt{\nrbags} \in \{3,\ldots,8\}$ and~$\nrbags \geq 50$, there is a $\frac43$-robust algorithm. 
\end{restatable}

\begin{lemma}\label{lem:bricks:Om<=2}
	If~$\opt{\nrbags} \leq 2$ and~$\nrbags \geq 37$, there exists a $\frac43$-robust algorithm. 
\end{lemma}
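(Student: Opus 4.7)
The first-stage strategy is to commit to the $m$ bags defined by an optimal schedule on $m$ identical machines: bag $i$ contains exactly the jobs that $\optm$ puts on machine $i$. Since $\k\le 2$, every bag has size in $\{0,1,2\}$. In the second stage, once $\nrmach\le m$ working machines are revealed, I would simply run LPT on these bags and argue $\lptf\le \tfrac{4}{3}\opt{\nrmach}$, splitting on $\k$.

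The case $\k=1$ is essentially free: the standing assumption $\opt{m}<\opt{m-1}$ combined with $n\le m$ forces $n=m$, so every bag is a single job and LPT attains $\lptf=\lceil n/\nrmach\rceil=\opt{\nrmach}$. For $\k=2$, I would first prove the uniform estimate $\lptf\le \opt{\nrmach}+1$ by the standard LPT argument: if $b^\star\in\{1,2\}$ is the last bag placed on the machine attaining $\lptf$, that machine was least loaded when $b^\star$ was assigned, so its load was at most the current average, giving $\lptf\le (n-b^\star)/\nrmach+b^\star$; integrality of $\lptf$ together with $\opt{\nrmach}=\lceil n/\nrmach\rceil$ then yields $\lptf\le \opt{\nrmach}+1$ whether $b^\star=1$ or $b^\star=2$. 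This immediately handles every $\nrmach$ with $\opt{\nrmach}\ge 3$, since $\opt{\nrmach}+1\le \tfrac{4}{3}\opt{\nrmach}$ in that regime.

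The only case the generic estimate does not cover is $\opt{\nrmach}=2$, where we would need $\lptf\le 2$ rather than $3$. The key observation, which I expect to be the main obstacle, is that this case is essentially vacuous under our hypotheses: from $\k=2$ and $\optm<\opt{m-1}$ we get $n\ge 2m-1$, so $\opt{\nrmach}=\lceil n/\nrmach\rceil=2$ forces $\nrmach\ge n/2\ge m-\tfrac12$, i.e.\ $\nrmach=m$. In this single subcase LPT reproduces the optimal schedule on $m$ machines and gives $\lptf=\optm=2$, so robustness $1$ trivially.

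Combining the three subcases yields the $\tfrac{4}{3}$-bound in full generality; the numerical hypothesis $\nrbags\ge 37$ is inherited from the outer case analysis leading to Theorem~\ref{thm:BricksUB} and is not exploited in the argument above.
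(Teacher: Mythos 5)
Your proof is correct, and it takes a genuinely different and somewhat cleaner route than the paper for the $\k=2$ case. The paper's proof splits on the range of $\nrmach$: when $\nrmach\ge m/2$ it uses the fact that LPT places at most two bags per machine (so $\lptf\le 4$ while $\opt{\nrmach}\ge 3$), and when $\nrmach < m/2$ it uses that $\opt{\nrmach}\ge 5$ and either a direct bag-count (for $\opt{\nrmach}=5$, at most three bags per machine) or \cref{lem:bricks:lpt} (for $\opt{\nrmach}\ge 6$). You instead prove the single uniform estimate $\lptf\le\opt{\nrmach}+1$ via Graham's average-load argument — the least-loaded machine has load at most $(n-b^\star)/\nrmach$ when the last bag $b^\star$ lands on the makespan-determining machine, and then integrality of both sides together with $b^\star\le 2$ gives $\lptf\le\opt{\nrmach}+1$ — after which only $\opt{\nrmach}=2$ survives as a concern, and you correctly note that $n\ge 2m-1$ forces $\nrmach=m$ in that subcase. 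Your bound is tighter and does away with the case split; the paper's route is presumably phrased the way it is because \cref{lem:bricks:lpt} is set up as a reusable tool for the $3\le\k\le 8$ cases, so the same style is used here. Both proofs correctly observe that the hypothesis $m\ge 37$ plays no role in this particular lemma.
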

\begin{proof}
	If~$\opt{\nrbags} = 1$, each job gets a unique bag. Hence,~$\lptf = \opt{\nrmach}$. 
	
	For~$\opt{\nrbags} = 2$, if no machine fails, the packing achieves a makespan of~$\opt{\nrbags}$. By
	\cref{lem:m-1}, we may assume that~$n\in \{2m,2m-1\}$. Therefore,
	if~$\frac{\nrbags}{2} \leq \nrmach \leq \nrbags-1$, we have~$\opt{\nrmach}
	\geq 3$ while~$\lptf \leq 4$ since at most~$2$ bags are assigned to the same
	machine. For the remaining cases, with~$\nrmach \leq \frac{\nrbags}{2}-1$, we
	have~$\opt{\nrmach} \geq 5$. When~$\opt{\nrmach} = 5$, \lpt assigns at
	most~$3$ bags to each machine which guarantees that~$\lptf \leq 6 \leq \frac43
	\opt{\nrmach}$. If~$\opt{\nrmach} \geq 6$, \cref{lem:bricks:lpt} implies
	that~$\lptf \leq \frac43\opt{\nrmach}$.
\end{proof}
\begin{restatable}{lemma}{BricksEnum}\label{lem:bricks:enum}
	For~$\opt{\nrbags}\leq 10$ and~$m\leq 50$, there exists a $\frac{4}{3}$-robust algorithm.
\end{restatable}
\begin{proof}
	We enumerate all instances with~$\opt{\nrbags}\neq\opt{m-1}$ for which the properties of the
	statement hold. These instances consist
	of~$n=\nrbags \opt{\nrbags} -\ell$ unit-size jobs, with~$\opt{\nrbags}\in\{1,2,\ldots,10\}$
	and~$\ell\in\{0,1,\ldots,\min(\opt{\nrbags}-1,m-1)\}$. We solve each such instance by
	computing a feasible solution for an integer linear program (ILP) which we now
	describe. Denote by~$p_{\max}$ the largest possible size of a bag. To be able
	to achieve a robustness of~$\frac{4}{3}$, we must have~$p_{\max}=
	\floor{\frac{4}{3}\opt{\nrbags}}$. The decision variables used are~$y_p$
	and~$x_{p,i,\nrmach}$ to indicate how many bags of size~$p$ are created and
	how many of these bags of size~$p$ are assigned to machine~$i$ when~$\nrmach$
	machines are working, respectively. The ILP is as follows. 
	\begin{equation*}\label{eq:mmdk:ilp}
	\begin{array}{lcll} 
	\textstyle{\sum_{p=1}^{p_{\max}}}\; y_{p} & = & \nrbags & \\[5 pt] 
	\textstyle{\sum_{p=1}^{p_{\max}}}\; p\cdot y_{p} & = & n &  \\[5 pt] 
	\textstyle{\sum_{i = 1}^{\nrmach}}\; x_{p,i,\nrmach} & = & y_p & \text{~for all } \nrmach \in [\nrbags], p \in [p_{\max}]\\[5 pt] 
	\textstyle{\sum_{p=1}^{p_{\max}}}\; p\cdot x_{p,i,\nrmach} & \leq & \frac 4 3 \cdot\optf & \text{~for all } \nrmach \in [\nrbags], i \in [\nrmach]\\[5 pt]
	x_{p,i,\nrmach} & \in & \Z_{\geq 0} & \text{~for all } p \in [p_{\max}], \nrmach \in [\nrbags], i \in [\nrmach]\\[5 pt] 
	y_{p} & \in & \Z_{\geq 0} & \text{~for all } p \in [p_{\max}]
	\end{array}
	\end{equation*}
	The first equation ensures that exactly~$m$ bags are created, and the second
	that, in total, they consist of exactly~$n$ jobs. The third equation enforces
	that, for every value of~$\nrmach$, all bags are assigned to a machine.
	Finally, the fourth equation checks that, for every value of~$\nrmach$ and for
	every remaining machine, the makespan of the optimal solution is not exceeded
	by more than a factor of~$\frac 4 3$, yielding the robustness guarantee. The
	last two constraints dictate integrality and non-negativity of the decision
	variables.
	
	A file containing the solutions produced by this ILP and a simple program
	verifying them are available in \cite{ehmnss2021}.
\end{proof}

\section{Concluding Remarks}	

	In this work, we have
	established matching lower and upper bounds
	for the \srs problem with infinitesimal jobs and design optimal algorithms for either infinitesimal jobs or equal-size
	jobs when
	speeds are restricted to~$\{0,1\}$. We believe that the insights from our optimal algorithms will be useful
	to improve the more general~upper~bounds.
	
	We have also shown that randomization does not help when the speeds belong to
	$\{0,1\}$ and jobs are infinitesimal. However, the other known lower bounds do
	not hold in a randomized setting, so designing better randomized algorithms
	remains an interesting challenge.
	
	The following observation about adversarial strategies might be
	useful for further research. We give two somewhat orthogonal examples proving
	the lower bound of~$\frac 43$ for \srs with unit processing time jobs and
	speeds from~$\{0,1\}$. In both examples, there are only two relevant
	adversarial strategies: either one machine fails or none. This may seem
	sub-optimal, but the lower bound of~$\frac 43$ is tight for unit-size jobs
	(\Cref{thm:BricksUB}). Further, we show in the proof of 
	\Cref{thm:IdenticalWSand} (\Cref{apx:SandOnIdentical}) that, for infinitesimal
	jobs, an adversary only requires two strategies to force all algorithms to have
	a robustness factor at least~$\rhosandid(m)$, which is optimal.
	
	{\em Example 1} (from \cite{SteinZ20}). Consider~$2m$ jobs and~$m>2$
	machines. If an algorithm places~$2$ jobs per bag, let one machine fail. This
	leads to a makespan of~$4$ while the optimal makespan is~$3$ which gives a lower
	bound of~$\frac 43$. Otherwise, one bag has at least three jobs, and, if no
	machine fails, the algorithm's makespan is~$3$ while the optimal makespan is~$2$,
	yielding a lower bound of~$\frac 32$. 
	
	{\em Example 2.} Our new dual example has~$3m$ jobs for~$m>3$ machines.
	If an algorithm places~$3$ jobs per bag, let one machine fail. This leads to a
	makespan of~$6$ while the optimal makespan is~$4$, implying a lower bound of
	$\frac 32$. Otherwise, one bag has at least~$4$ jobs, and if no machine fails,
	the algorithm's makespan is~$4$ whereas the optimal makespan is~$3$, which again
	gives a lower bound~of~$\frac 43$.

We conclude with a remark on a natural generalization of our model in which we allow to pack more bags than the number of machines, say $M\geq m$. Our upper bounds obviously still hold by using the algorithms with $m$ bags as presented. One would expect that the approximation ratio improves when~$M/m$ increases, and it would be interesting to quantify the achievable robustness factor in terms of both $M$ and $m$. Indeed, for the general problem with rocks and arbitrary speeds, it is not difficult to see that \Cref{thm:speedLPT} can be generalized by parameterization and adjusting the inequalities in the proof (see Appendix~\ref{app:extensions}). Packing $M\geq m$ bags by LPT yields a robustness factor of $1+\frac{m-1}{M}$, which interpolates nicely between $2-\frac{1}{m}$ (for $M=m$) and~$1$~(for~$M\rightarrow\infty$).

Concerning our lower bounds, we expect \Cref{thm:speedsandLB,thm:BricksLBrand} to become invalid for $M>m$ as these results are tight for $M=m$ and an algorithm should be able,
with infinitesimal jobs, to exploit any additional bag to reduce the competitive
ratio. However, the lower bound of $4/3$ for unit-time jobs and speeds in
$\{0,1\}$ (both examples above) has some slack and therefore
holds when $M$ is moderately larger than $m$. Specifically, assuming for simplicity that $3$ divides $m$, one can easily check that for an instance with $M=\frac43m-1$ bags and  $2m$ unit
jobs, no algorithm can be better than $4/3$-competitive (the worst cases being $0$ and $\frac m3$ machine failures).
Quantifying exactly how all results in this paper evolve when $M>m$ remains an open problem.

\bibliographystyle{abbrvnat}
\bibliography{bib}   

\newpage
\appendix

\section*{Appendices}

\section{Proofs for Section~\ref{sec:SandOnParallel} -- Infinitesimal Jobs and Speeds in \texorpdfstring{$\{0,1\}$}{0,1}}\label{apx:SandOnIdentical}

In the following, we establish the upper bound of $\rhoid(m)$ when speeds are in $\{0,1\}$, which dominates the more coarse bound of $\rhoid$ from Theorem~\ref{prop:identical-steps} but also requires substantially more work.

\IdenticalWSand*

\begin{proof}
	We give bag sizes that guarantee a robustness factor of~$\rhoid(m)$ for every~$m$. While the load
	distribution in the limit approaches that given by~$\bar f$, we do not work
	with~$\bar f$ explicitly anymore. We fix~$m\geq 3$ in the following; the other
	cases are trivial. Furthermore, let again
	\[
	t^\star\in\argmax_{t\leq \frac m2,~t\in\mathbb N}\ \frac{1}{\frac t{m-t} 
		+\frac{m-2t}m}\,.
	\]
	To show the theorem, we distinguish two cases:~$t^\star \leq 1$ and~$t^\star >1$.
	
	We start with~$t^\star\leq 1$ and show that this implies $m\in \{3,4,5\}$. First, note that the above expression defining
	$t^\star$ equals $1$ for $t=0$ and $t=m/2$ and is larger for values of $t$ in
	between. As we assumed~$m \geq3$, we have that~$m/2 \geq 1$, and thus~$t^\star
	\geq 1$. Hence, for~$m \in \{3,4\}$, we have~$t^\star =1$.
	
	For~$m \geq 5$, in order to have~$t^\star =1$, it is necessary
	that~$\smash{\frac{1}{\frac t{m-t} +\frac{m-2t}m}}$ is larger at~$t =1$ than
	at~$t=2$. Noting that
	\[ 
	\frac{1}{\frac{t}{m-t}+\frac{m-2t}{m}} = \frac{m^2-tm}{m^2-2tm+2t^2}
	\]
	and reformulating, we obtain the necessary condition
	\[
	\frac{m^2-2m}{m^2-4m+8} - \frac{m^2-m}{m^2-2m+2}  {}<{} 0\,. 
	\] 
	The left-hand side is equal to 	
	\begin{align*}
		\frac{m(m-2)}{(m-2)^2+4} - \frac{m^2-m}{m(m-2)+2}  
		= \frac{m(m^2-6m+4)}{((m-2)^2+4) \cdot (m(m-2)+2)}\,.
	\end{align*}
	As~$m \geq 5$, this term is negative if and only if~$m(m^2 -6m + 4) {}<{} 0$. Since
	the roots of this polynomial are $0$, $3-\sqrt 5$, and $3+\sqrt 5$, the
	expression is strictly negative if~$m\leq 5$ and strictly positive for~$m \geq 6$, which implies~$t^\star \geq 2$ in the latter case.
	
	It remains to consider~$m=5$. By the above calculations we know that~$t^\star =
	1$. Therefore,~$t^\star \leq 1$ implies~$m \in \{3,4,5\}$. For these three
	cases, the optimal bag sizes are as follows:
	\begin{itemize}
		\item For $m=3$, the bag sizes are $\{0.9,0.9,1.2\}$ and $\rhosandidm{3}=1.2$.
		\item For $m=4$, the bag sizes are $\{0.8,0.8,1.2,1.2\}$  and $\rhosandidm{4}=1.2$.
		\item For $m=5$, the bag sizes are $\{\frac{25}{34},\frac{25}{34},\frac{40}{34},\frac{40}{34},\frac{40}{34}\}$ and $\rhosandidm{5}=40/34$.
	\end{itemize} 
	It can be easily verified that, if at most~$m/2$ machines fail, the obtained makespan is at most~$\rhosandidm{m}$.
	
	Consider now the case that~$t^\star \geq 2$. By the discussion above, this
	implies~$m \geq 6$. We also claim that there exist bag sizes that achieve a
	robustness factor of~$\rhoid(m)$. It turns out that, for many different values of $m$,
	these sizes are not unique. We impose additional constraints on the bag sizes
	to get bag sizes that are easier to analyze, giving some intuition along
	the way for why the imposed restrictions do not remove all bag sizes
	achieving~$\rhoid(m)$. When we are left with a single degree of freedom, we
	impose lower and upper bounds on the corresponding variable so that fulfilling
	these bounds implies the robustness guarantee~$\rhoid(m)$. We then show that the largest lower
	bound does not exceed the smallest upper bound, implying that there is a
	feasible choice for said variable.
	
\begin{figure}
		\centering
		\begin{tikzpicture}[fct/.style={very thick, domain=0:1.6}, samples=100, font=\small, scale=0.8]
			\begin{scope}[xscale=7.007,yscale=5.564]
				\draw[fill=black!25] (0,0) rectangle (0.05,0.534128);
				\draw[fill=black!25] (0.05,0) rectangle (0.1,0.534128);	
				\draw[fill=black!25] (0.1,0) rectangle (0.15,0.610432);
				\draw[fill=black!25] (0.15,0) rectangle (0.2,0.610432);	
				\draw[fill=black!25] (0.2,0) rectangle (0.25,0.686736);
				\draw[fill=black!25] (0.25,0) rectangle (0.3,0.686736);	
				\draw[fill=black!25] (0.3,0) rectangle (0.35,0.76304);
				\draw[fill=black!25] (0.35,0) rectangle (0.4,0.76304);	
				\draw[fill=black!25] (0.4,0) rectangle (0.45,0.839344);
				\draw[fill=black!25] (0.45,0) rectangle (0.5,0.839344);	
				\draw[fill=black!25] (0.5,0) rectangle (0.55,0.915648);
				\draw[fill=black!25] (0.55,0) rectangle (0.6,0.915648);	
				\draw[fill=black!25] (0.6,0) rectangle (0.65,0.991952);
				\draw[fill=black!25] (0.65,0) rectangle (0.7,0.991952);	
				\draw[fill=black!25] (0.7,0) rectangle (0.75,1.06826);
				\draw[fill=black!25] (0.75,0) rectangle (0.8,1.06826);
				\draw[fill=black!25] (0.8,0) rectangle (0.85,1.14456);
				\draw[fill=black!25] (0.85,0) rectangle (0.9,1.14456);						
				\draw[fill=black!25] (0.9,0) rectangle (0.95,1.20656);
				\draw[fill=black!25] (0.95,0) rectangle (1,1.20656);	
				\draw[fill=black!25] (1.0,0) rectangle (1.05,1.20656);
				\draw[fill=black!25] (1.05,0) rectangle (1.1,1.20656);	
				\draw[fill=black!25] (1.1,0) rectangle (1.15,1.20656);
				\draw[fill=black!25] (1.15,0) rectangle (1.2,1.20656);	
				\draw[fill=black!25] (1.2,0) rectangle (1.25,1.20656);
				\draw[fill=black!25] (1.25,0) rectangle (1.3,1.20656);	
				\draw[fill=black!25] (1.3,0) rectangle (1.35,1.20656);
				\draw[fill=black!25] (1.35,0) rectangle (1.4,1.20656);	
				\draw[fill=black!25] (1.4,0) rectangle (1.45,1.20656);
				\draw[fill=black!25] (1.45,0) rectangle (1.5,1.20656);	
				\draw[fill=black!25] (1.5,0) rectangle (1.55,1.20656);
				\draw[fill=black!25] (1.55,0) rectangle (1.6,1.20656);
				
				\draw (0.075,0.610432) -- (0.1,0.610432);
				\draw[<->] (0.0875,0.610432) -- (0.0875,0.534128);
				\node at (0.0625,0.57228) {\tiny $\delta$};
				
				\draw (0.175,0.686736) -- (0.2,0.686736);	
				\draw[<->] (0.1875,0.686736) -- (0.1875,0.610432);
				\node at (0.1625,0.648584) {\tiny $\delta$};
				
				\draw (0.275,0.76304) -- (0.3,0.76304);	
				\draw[<->] (0.2875,0.76304) -- (0.2875,0.686736);
				\node at (0.2625,0.724888) {\tiny $\delta$};
				
				\draw (0.375,0.839344) -- (0.4,0.839344);	
				\draw[<->] (0.3875,0.839344) -- (0.3875,0.76304);
				\node at (0.3625,0.801192) {\tiny $\delta$};
				
				\draw (0.475,0.915648) -- (0.5,0.915648);	
				\draw[<->] (0.4875,0.915648) -- (0.4875,0.839344);
				\node at (0.4625,0.877496) {\tiny $\delta$};
				
				\draw (0.575,0.991952) -- (0.6,0.991952);	
				\draw[<->] (0.5875,0.991952) -- (0.5875,0.915648);
				\node at (0.5625,0.9538) {\tiny $\delta$};
				
				\draw (0.675,1.06826) -- (0.7,1.06826);	
				\draw[<->] (0.6875,1.06826) -- (0.6875,0.991952);
				\node at (0.6625,1.030106) {\tiny $\delta$};

				\draw (0.775,1.14456) -- (0.8,1.14456);	
				\draw[<->] (0.7875,1.14456) -- (0.7875,1.06826);
				\node at (0.7625,1.10641) {\tiny $\delta$};

				\draw (0.875,1.20656) -- (0.9,1.20656);	
				\draw[<->] (0.8875,1.20656) -- (0.8875,1.14456);
				\node at (0.8575,1.17556) {\tiny $\delta'$};
				
				\node at (0.025,0.267) {\tiny $a_{\scaleto{1}{2pt}}$};
				\draw[->] (0.025,0.307) -- (0.025,0.534128);
				\draw[->] (0.025,0.227) -- (0.025,0);	
				
				\node at (0.075,0.267) {\tiny $a_{\scaleto{2}{2pt}}$};
				\draw[->] (0.075,0.307) -- (0.075,0.534128);
				\draw[->] (0.075,0.227) -- (0.075,0);
				
				\node at (0.125,0.3052) {\tiny $a_{\scaleto{3}{2pt}}$};
				\draw[->] (0.125,0.3352) -- (0.125,0.610432);
				\draw[->] (0.125,0.2752) -- (0.125,0);	
				
				\node at (0.175,0.3052) {\tiny $a_{\scaleto{4}{2pt}}$};
				\draw[->] (0.175,0.3352) -- (0.175,0.610432);
				\draw[->] (0.175,0.2752) -- (0.175,0);				
				
				\node at (0.225,0.3434) {\scaleto{$...$}{0.5pt}};																					
			\end{scope}					
			\begin{axis}[	
				axis lines=middle,
				axis line style={-},
				ylabel near ticks,
				xlabel near ticks,
				ylabel = {bag size},
				xlabel = {bags},
				xtick = \empty,
				ytick = \empty,
				extra x ticks = {0.025,0.875,1.575},
				extra x tick labels = {1,$2t^\star=18$,$m=32$},			
				extra y ticks={0.33333,0.66666,1,1.2067,1.33333},
				extra y tick labels={$1/3$,$2/3$,$1$,$\bar{\rho}_{01}(32)$,$4/3$},
				domain=0:1.6, ymin=0, xmax=1.6, xmin=0, ymax=1.33333,
				width=12.8cm, height=9cm]
				\addplot[domain=0:2.36, gray, densely dotted] (1.6,x);
				\addplot[domain=0:0.8875, gray, densely dotted] (x,1.2067);	
			\end{axis}
		\end{tikzpicture}
		\caption{The structure of the optimal solution of Theorem~\ref{thm:IdenticalWSand} illustrated for $m=32$.}\label{fig:sol-structure}
	\end{figure}
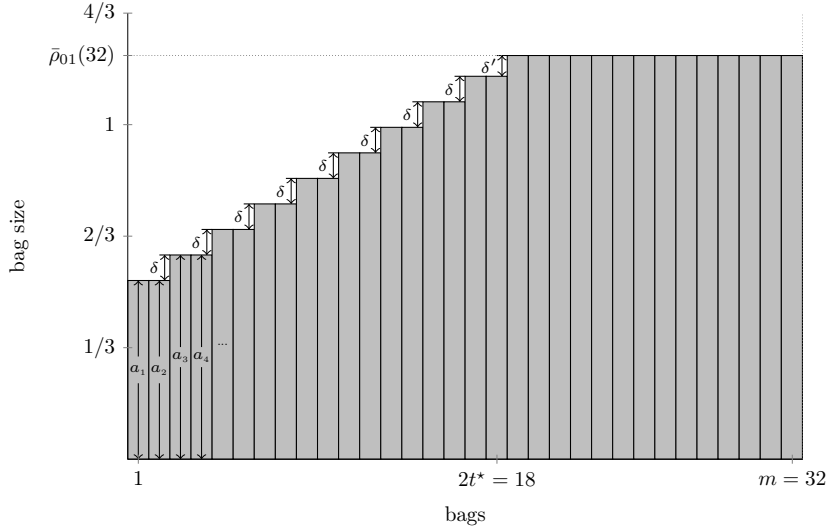	
	
	As observed in Section~\ref{sec:SandOnParallel}, if we want to guarantee a robustness factor of~$\rhoid(m)$, we need to fix
	$a_{2t^\star+1},\dots,a_m=\rhosandid(m)$; so, this restriction
	does not remove any set of
	bag sizes that achieves~$\rhoid(m)$. Since the optimal factor could be achieved by
	a set of bag sizes that achieves this factor with the 
	simple folding, we also assume such bag sizes here. As before, this implies
	$a_{i}+a_{2t^\star+1-i}=\rhosandid(m)\cdot m/(m-t^\star)$ for all~$i=1,\dots,t^\star$.
	
	Note that, when considering a bag~$i\leq t$
	for any $t<m/2$ and keeping
	track of the bag which~$i$ is matched up with while increasing~$t$,
	odd bags are always folded on top of even bags and vice versa.
	That motivates considering bags in pairs and thinking of pairs that get matched
	up rather than particular bags. Specifically, we choose~$a_{2i-1}=a_{2i}$
	for~$i=1,\dots,t^\star$.
	
	To imitate the increasing linear part of~$\bar f$, we further
	impose~$a_{2i+1}=a_{2i-1}+\delta$ for some~$\delta\geq 0$ and
	all~$i=1,\dots,t^\star-1$. Since $\bar f$ has two different slopes, similar to
	our previous set of bag sizes, the difference between~$\rhoid(m)$ and the
	largest bag that is smaller than~$\rhoid(m)$ may differ from~$\delta$; we
	call this value~$\delta':=a_{2t^\star+1}-a_{2t^\star}$. In fact, it can be
	shown that~$\delta '\neq \delta$ is even necessary given our previous
	assumption on~$a_i$ for some values of~$m$. 
	Figure~\ref{fig:sol-structure} visualizes this structure for $m=32$. 
		Note that Figure~\ref{fig:folding} visualizes folding on the structure in the $m=20$ case.
	
	Note that there is a single degree of freedom left if we want to define bags of
	total volume precisely~$m$: If we choose~$\delta$, then this fixes~$\delta'$, and
	vice versa. The goal is to show that the set that we can choose~$\delta$ from,
	so as to guarantee~$\rhoid(m)$, is nonempty. In what follows, we derive lower and
	upper bounds on~$\delta$. Fulfilling these bounds implies a robustness factor of~$\rhoid(m)$.
	
	To make our computations simpler, we assume that~$\delta\geq\delta'$. This
	yields our first lower bound~$L_1$ on~$\delta$. We express all bounds in terms
	of~$m$, $t^\star$, and~$\rhoid(m)$. To compute~$\delta'$ from these values,
	observe that the total size of bags~$a_1,\dots,a_{2t^\star}$
	is~$m-\rhoid(m)\cdot(m-2t^\star)$ using that the total volume is~$m$ and that
	bags~$a_{2t^\star+1},\ldots, a_\nrbags$ have size~$\rhoid(m)$. Dividing
	by~$2t^\star$ yields the average bag size of bags~$a_1,\dots,a_{2t^\star}$,
	denoted by
	\[
	\bar a=\frac{m-\rhoid(m)\cdot(m-2t^\star)}{2t^\star}\,.
	\]
	From~$\bar a$ it takes~$(t^\star-1)/2$ steps of size $\delta$ to get to~$a_{2t^\star}$. Arguing
	in terms of volume, which allows arguing in terms of half steps, we get: If
	$\delta'\le\delta$, then we have~$\rhoid(m)= a_{2t^\star}+\delta'\le \bar a +
	\delta(t^\star-1)/2 +\delta= \bar a + \delta(t^\star+1)/2.$ This is equivalent
	to
	\begin{align}
	\delta \  \ge\  \frac{2}{t^\star +1} \cdot \left(\rhoid(m)- \bar a\right) 
	\ &= \  \frac{2}{t^\star +1} \cdot \left(\rhoid(m)- \frac{m-\rhoid(m)\cdot(m-2t^\star)}{2t^\star}\right) \nonumber\\
	\ &=\  \frac{m(\rhoid(m)-1)}{t^\star(t^\star+1)}=:L_1\,.\label{eq:qm-lb1}  
	\end{align}
	Now, we give two bounds on~$\delta$ that ensure that the bag sizes are feasible
	in that we have $a_i\in[0,\rhoid(m)]$ for all~$i$. Using
	that~$a_{2t^\star}=\bar{a}+(t^\star-1)/2\cdot\delta$ and imposing
	that~$a_{2t^\star}\leq\rhoid(m)$, i.e., that
	$\delta'=a_{2t^\star+1}-a_{2t^\star}\ge0$, yields $\rhoid(m)\geq \bar a +
	(t^\star -1)\delta/2$. Hence,
	\begin{align}
	\delta \ \leq\  \frac{2}{t^\star-1}\cdot (\rhoid(m)-\bar a) &= 
	\frac{2}{t^\star-1}\cdot \left(\rhoid(m)-\frac{m-\rhoid(m)\cdot(m-2t^\star)}{2t^\star}\right) \nonumber\\
	&=\  \frac{m(\rhoid(m)-1)}{t^\star(t^\star-1)}=:U_1\,.\label{eq:qm-ub1}
	\end{align}
	We also need~$a_1\geq 0$, which we impose by letting the sum of the 
	increments not exceed~$\rhoid(m)$. This yields an upper bound  on $\delta$ of
	\begin{equation}
	U_2:= \frac{\rhoid(m)}{t^\star-1}\,.\label{eq:qm-ub2}
	\end{equation}

	For the upper and lower bounds ensuring the robustness, recall
	that~$a_{i}+a_{2t^\star+1-i}=\rhoid(m)\cdot \frac{m}{m-t^\star}$ for all
	$i=1,\dots,t^\star$. That is, the robustness ratio is attained \emph{exactly}
	when $t=t^\star$. This implies that the ratio of the increase in the
	algorithm's cost and the increase the optimum's cost when the number~$t$ of
	failing machines is increased from $t^\star$ to, say, $t^\star+k$, should be at
	most $\rhoid(m)$. To compute the change in the algorithm's cost, note that,
	using~\eqref{eq:qm-lb1}, \emph{any} bag size~$a_i$ can be bounded from above by
	$a_1+\lfloor i/2\rfloor\cdot \delta$. That implies that the algorithm's cost
	increases by at most~$k\delta$. On the other hand the optimum's cost changes
	from $\frac{m}{m-t^\star}$ to $\frac{m}{m-t^\star-k}$. So the aforementioned
	ratio is
	\[
	\frac{k\cdot \delta}{\frac{m}{m-t^\star-k}-\frac{m}{m-t^\star}}=\frac{\delta\cdot(m-t^\star)(m-t^\star-k)}{m}\,,
	\]
	yielding an upper bound on $\delta$ of
	\[
	U^k_3:=\frac{m\cdot \rhoid(m)}{(m-t^\star)(m-t^\star-k)}\,.
	\]
	This upper bound is minimized for $k=1$, so we only need to remember
	$U_3:=U_3^1$. Similarly, we consider the case when $t$ is decreased from
	$t^\star$ to $t^\star-k$. Then the algorithm's cost decreases by
	\emph{precisely} $k\delta$ and that of the optimum by
	$\frac{m}{m-t^\star}-\frac{m}{m-t^\star+k}$. Imposing that the ratio of these
	quantities is \emph{at least} $\rhoid(m)$ yields
	\[
	\delta\geq\frac{m\cdot \rhoid(m)}{(m-t^\star)(m-t^\star+k)}=:L_2^k\,.
	\] 
	Again, this bound is \emph{maximized} for $k=1$, so we only need to remember $L_2:=L_2^1$.
	
	With these lower and upper bounds on $\delta$, we can complete the proof. It
	boils down to showing that the interval~$[\max\{L_1,L_2\},\min\{U_1,U_2,U_3\}]$
	is nonempty because then we can choose $\delta$ from that interval and thereby
	define bag sizes with a robustness of $\rhoid$. To achieve this, we first
	rewrite~$\rhoid(m)$ as
	\begin{equation}
	\rhoid(\nrbags) = \frac{1}{\frac {t^\star}{m-t^\star}+\frac{m-2t^\star}m} =
	\frac{m(m-t^\star)}{m^2-2mt^\star+2(t^\star)^2}\,.\label{eq:rhoidexpanded}
	\end{equation}
	We use \eqref{eq:rhoidexpanded} to expand the lower and upper bounds as follows:
	\begin{align*}
		L_1 &= \frac{m(\rhoid(m)-1)}{t^\star(t^\star+1)} 
		=\frac{m(m-2t^\star)}{(t^\star+1)(m^2-2mt^\star+2(t^\star)^2)}\,\\
		L_2 &= \frac{m\rhoid(m)}{(m-t^\star)(m-t^\star+1)} = 
		\frac{m^2}{(m-t^\star+1)(m^2-2mt^\star+2(t^\star)^2)}\,,\\
		U_1 &= \frac{m(\rhoid(m)-1)}{t^\star(t^\star-1)} 
		=\frac{m(m-2t^\star)}{(t^\star-1)(m^2-2mt^\star+2(t^\star)^2)}\,,\\
		U_2 &= \frac{\rhoid(m)}{t^\star-1} = 
		\frac{m(m-t^\star)}{(t^\star-1)(m^2-2mt^\star+2(t^\star)^2)}\,,\\
		U_3 &= \frac{m\rhoid(m)}{(m-t^\star)(m-t^\star-1)} = 
		\frac{m^2}{(m-t^\star-1)(m^2-2mt^\star+2(t^\star)^2)}\,.
	\end{align*}
	First notice that $U_1$ and $U_2$ differ only by $mt^\star$ in the numerator,
	so $U_1$ is not greater than~$U_2$ and therefore we can ignore $U_2$. Moreover,
	when comparing $L_2$ with $U_3$ and $L_1$ with~$U_1$, we observe that the
	numerators do not differ while the denominators are smaller in the upper
	bounds. Therefore, it is immediate that $L_2\le U_3$ and $L_1\le U_1$.
	
	For the remaining two comparisons, $L_2\le U_1$ and $L_1\le U_3$, 
	we use that $t=t^\star$ maximizes the expression
	\[
		\compfunc(t) := \frac{m(m-t)}{m^2-2mt+2t^2} = \frac{m(m-t)}{(m-t)^2+t^2}\,.
	\]
	Hence,~$\compfunc(t^\star)-\compfunc(t^\star-1)\ge0$. Therefore, we have that 
	\begin{align}
		0&\le \frac{m(m-t^\star)}{(m-t^\star)^2+(t^\star)^2} - 
		\frac{m(m-(t^\star-1))}{(m-(t^\star-1))^2+(t^\star-1)^2}\nonumber\\
		& = m\frac{(m-t^\star)^2+(m-t^\star)(-2t^\star+2)
			-(t^\star)^2}
		{\left((m-t^\star)^2+(t^\star)^2\right)\left((m-t^\star+1)^2+(t^\star-1)^2\right)}\,.
		\label{eq:intermtstaropt-}
	\end{align}
	Since $m$ and the denominator of~\eqref{eq:intermtstaropt-}  are both strictly positive, we obtain that
	\begin{equation}
		m^2-4mt^\star+2m + 2(t^\star)^2-2t^\star\ge 0\,. \label{eq:tstaropt-}
	\end{equation}
	Similarly,~$g(t^\star)-g(t^\star+1)\ge0$ and we have that
	\begin{align}
		0&\le \frac{m(m-t^\star)}{(m-t^\star)^2+(t^\star)^2} - 
		\frac{m(m-(t^\star+1))}{(m-(t^\star+1))^2+(t^\star+1)^2}\nonumber\\
		& = m\frac{-(m-t^\star)^2+(m-t^\star)(2t^\star+2)+(t^\star)^2}
		{\left((m-t^\star)^2+(t^\star)^2\right)\left((m-t^\star-1)^2+(t^\star+1)^2\right)}\,.
		\label{eq:intermtstaropt+}
	\end{align}
	Again, $m$ and the denominator of~\eqref{eq:intermtstaropt+} are both strictly positive, so we obtain that
	\begin{equation}
		-m^2+4mt^\star+2m - 2(t^\star)^2-2t^\star\ge 0\,. \label{eq:tstaropt+}
	\end{equation}
	Now, we show that $L_2\le U_1$. This is equivalent to showing that
	\begin{align*}
	\frac{m}{m-t^\star+1}\le \frac{m-2t^\star}{t^\star-1}\quad 
	&\Leftrightarrow \quad 
	0\le m^2-4mt^\star+2m+2(t^\star)^2-2t^\star\,,
	\end{align*}
	which is true by \eqref{eq:tstaropt-}. 
	Finally, we show that $L_1\le U_3$. 
	This is equivalent to showing that
	\begin{align*}
	\frac{m-2t^\star}{t^\star+1}\le\frac{m}{m-t^\star-1}
	\quad & \Leftrightarrow \quad 
	m^2-4mt^\star-2m+2(t^\star)^2+2t^\star \le 0\,,
	\end{align*}
	which is true by \eqref{eq:tstaropt+}. 
	Thus, the interval $[\max\{L_1,L_2\},\min\{U_1,U_2,U_3\}]$ is non-empty 
		and there are valid choices for~$\delta$. 
	The validity of the upper bound follows.
\end{proof}	

\section{Proofs for \texorpdfstring{\Cref{sec:BricksOnSpeed}}{} -- Equal-Size Jobs and General Speeds}\label{apx:BricksOnSpeed}
	
	In the following we show the results of Lemma \ref{lem:speedTwoThreeMach} for few machines.
	\begin{manuallemma}{\ref{lem:speedTwoThreeMach}.a}
		The optimal algorithm for \srs for unit-size jobs has robustness factor~$4/3$ on~$m=2$ machines.
	\end{manuallemma}
	\begin{proof}
		The lower bound is implied by \Cref{thm:speedsandLB}.
		Let~$n$ be the number of jobs of the instance. 
		Consider an algorithm that builds two bags containing at most~$a_1$ and~$a_2$ jobs as follows.
		\[
		a_1 := \floor{\frac 43 \cdot \floor{n/4+1}}
		\quad ;\quad
		a_2:=\floor{\frac 43 \ceil{n/2}}\geq \floor{\frac 23 n}.
		\]
		
		We now show that (i) for every adversary, the algorithm can schedule these bags
		within a makespan of~$4/3$ and (ii) that~$a_1+a_2\geq n$, so the bags contain all
		jobs.
		
		The adversary places at least~$M_2:=\ceil{n/2}$ jobs on one machine, say
		machine~2. So the algorithm is always able to place at least~$a_2$ jobs on
		machine~2. If the adversary places at least \ceil{3n/4} jobs on machine 2,
		then the algorithm can place both bags on this machine. Otherwise, the
		adversary can place at most~$\ceil{3n/4}-1$ jobs on machine~2. This implies
		that there are at least~$M_1 := n-\ceil{3n/4}+1 = \floor{n/4}+1$ jobs
		on machine~1. This means that the algorithm can place at least~$\floor{\frac
			43 M_1}$ jobs on machine~1; this is exactly~$a_1$. So the makespan achieved by
		the algorithm is at most~$4/3 \cdot \cmax^*$, where~$\cmax^*$ is the optimal
		makespan.
		
		Hence, the algorithm is $4/3$-robust if~$a_1+a_2\geq n$. Consider the four
		natural integers $s<4$,~$t<3$,~$k$ and~$q$ such that~$n=4k+s=3q+t$. Note that
		\[
			a_1 = \floor{\frac 43 \cdot (k+1)} = \floor{\frac{4k+4}{3}} = \floor{\frac{3q+t + (4-s)}{3}} = q + \floor{\frac{4+t-s} 3}\,.
		\]
		We consider several cases which together complete the proof:
		\begin{itemize}
			\item~$4+t-s\geq 3$, i.e.,~$t\geq s-1$: we have~$a_1  \geq q+1\geq \ceil{n/3}$ so~$a_1+a_2\geq n$.
			\item~$t=0$: we have~$a_1 \geq q = \ceil{n/3}$ so~$a_1+a_2\geq n$.
			\item~$0<t<s-1\leq 2$, which means~$t=1$ and~$s=3$: we have~$a_1=q$ and~$n=4k+3$ is odd, so~$M_2 = \frac{1}{2}(n+1) = \frac{1}{2}(3q+2)$ and 
			\[
				a_2=\floor{\frac43 M_2}=\floor{2q+\frac43}=2q+1=n-a_1\,.
			\]
		\end{itemize}
	\end{proof}

	\begin{manuallemma}{\ref{lem:speedTwoThreeMach}.b}
		The optimal algorithm for \srs for unit-size jobs has robustness factor~$3/2$ on~$m=3$ machines.
	\end{manuallemma}
	\begin{proof}
		The lower bound is implied by \Cref{lem:LB3:1.5}.
		Let~$n>3$ be the number of jobs. Consider an algorithm that builds three bags as follows:
		\begin{align*}
		a_1 := \floor{\frac 32 \cdot{\frac12\floor{\frac n3+1}}}\quad;\quad
		a_3 := \floor{\frac 32\ceil {\frac n3}} \geq \floor{n/2} \geq \frac{n-1}{2}\quad ; \quad
		a_2 := n-a_1-a_3\,.
		\end{align*}
		
		Order the machines in increasing order of adversary load first. The adversary places at least
		$\ceil{\frac n3}$ jobs on the most loaded machine, machine 3, so the algorithm
		can always put at least~$a_3$ jobs, i.e., the third bag on machine 3.
		
		If the adversary places at least~$\ceil{\frac{2}{3}n}$ jobs on machine 3, then
		the algorithm can put all bags on it and the claim holds. Assume now the
		adversary places at most~$\ceil{\frac{2}{3}n}-1$ jobs on machine 3, so
		machines 1 and 2 receive at least~$\floor{\frac n3+1}$ jobs combined. This
		means in particular that the algorithm can always put bag~$a_1$ on machine 2.
		We now consider several cases that could prevent the algorithm from reaching a
		robustness factor of~$3/2$. They all implicitly assume that the algorithm
		cannot simultaneously put the bags~$a_1$ and~$a_2$ on machines 1 and 2 and
		cannot put the bags~$a_2$ and~$a_3$ jobs on machine 3, as the contrary allows
		to fit all bags. We therefore show a contradiction in each case.
		\setlist[enumerate,1]{label={(\roman*)}}
		\begin{enumerate}
			\item The algorithm cannot put~$a_2$ on machine 2. This means than~$M_2< \frac 23 a_2$ 
			and therefore changed from itemize (with dashes) to enumeration
			\begin{align*}
				M_1\leq M_2 &\leq \frac 23 a_2 - \frac 13\quad;\quad
				 M_3 \leq \frac 23 (a_2+a_3) -\frac 13\,.
			\end{align*}
			This implies
			\[
				n= M_1+M_2+M_3\leq 2a_2 + \frac 23 a_3 -1.
			\]
			Using that~$a_2 = n - a_1 - a_3$ and rearranging yields
			\[
				n \leq 2n - 2a_1 - 2a_3 + \frac 23 a_3 -1 \quad \Leftrightarrow\quad
				2a_1 + \frac 43 a_3-n+1\leq 0\,.
			\]
			However, letting~$n= 3k+t$ with~$t\in\{0,1,2\}$ and~$k>0$, we have 
			\[
				a_1= \floor{\frac{3k+3}{4}} \geq \frac{3k}{4} \quad ; \quad a_3 \geq \frac{n-1}2\,.
			\]
			This leads to the following contradiction:
			\begin{align*}
				2a_1 + \frac 43 a_3 - n + 1 & \geq \frac {3k}2 + \frac 23 n - \frac 23 - n + 1
				= \frac{3k}{2} - \frac n 3 + \frac 13 \geq k + \frac{k}{2} - k - \frac 13>0\,.
			\end{align*}
			
			\item The algorithm cannot put~$a_3$ on machine 2,~$a_2$ can be put there.
			Consequently, the algorithm cannot put~$a_1$ on machine 1 as this allows to
			place simultaneously~$a_1$ and~$a_2$ on machines 1 and 2. So~$M_1 < \frac 23
			a_1$, which means~$M_1 \leq \frac23 a_1 - \frac 13$. Similarly, the algorithm
			cannot put simultaneously~$a_1$ and~$a_2$ on machine 2 nor~$a_2$ and~$a_3$ on
			machine~3. Therefore, we have
			\begin{align*}
				M_1 &\leq \frac23 a_1 - \frac 13\quad;\quad
				M_2 \leq \frac 23 \min(a_1+a_2,a_3) - \frac 13\quad;\quad
				M_3 \leq \frac 23 (a_2+a_3) -\frac 13\,.
			\end{align*}
			
			Noting that~$\min(a_1+a_2,a_3)\leq n/2$ as~$a_1+a_2+a_3=n$, we get the following contradiction
			\begin{align*}
				n = M_1+M_2+M_3 &\leq \frac 23 (a_1+a_2+a_3 + \min(a_1+a_2,a_3)) -1 \\
				&\leq \frac 23 (n + \frac 12 n) -1= n-1\,.
			\end{align*}
			
			\item The algorithm cannot put~$a_1+a_2$ on machine 3.  By definition, we
			have~$a_3 \geq \frac{n-1}{2}$, so~$a_1+a_2\leq \frac{n+1}{2}$ and~$M_3 <
			\frac 23 (a_1+a_2)= \frac{n+1}{3}$. Therefore, all machine loads equal~$n/3$
			so one bag per machine fits, which is a contradiction.
		\end{enumerate}
	\end{proof}
	
	\begin{manuallemma}{\ref{lem:speedTwoThreeMach}.c}\label{lem:LBm6}
		For~$m=6$, the optimal algorithm for speed-robust scheduling for unit-size jobs
		has a robustness factor larger than~$\rhosand$. 		
	\end{manuallemma}
	\begin{proof}
	
		Consider~$n=756$ unit-size jobs and~$m=6$ machines. Consider any algorithm
		building~$6$ bags of sizes~$a_1\leq a_2\leq a_3\leq a_4\leq a_5 \leq a_6$ out
		of these jobs. Consider an adversary setting where five machines are set to a
		speed~$s_1\in \mathbb N$ and one machine to a speed~$s_6\in \mathbb N$ with
		$5s_1+s_6 \geq n$ such that an optimal schedule of~$n$ jobs on these machines
		has a makespan at most 1. For the algorithm and for each~$i$, either a bag of
		size at least~$a_i$ is scheduled on a machine of speed~$s_1$ or all bags of
		size at least~$a_i$ are scheduled on the machine of speed~$s_6$. Hence, for
		each~$i\in\{1,\dots,6\}$, the number~$\phi_i := \min(a_i/s_1, \sum_{j=i}^{6}
		a_j/s_6)$ is a lower bound on the algorithm's makespan, in other words, on its
		robustness factor.
		
		Consider \cref{tbl:LMm6}. Observe that the Conditions~$(2)$ and~$(3)$ are
		equivalent as~$n$, the number of jobs, is fixed. If none of the Conditions
		$(1)$ are satisfied, then the total size of the bags is at most
		$76+91+109+132+158+189=755<n$, which is a contradiction. Hence, let~$i$ be the
		first row such that Condition~$(1)$ is satisfied. The fact that
		Condition~$(1)$ is not satisfied for~$i'< i$ then implies that Condition~$(3)$
		is met by definition. Hence, the bag sizes of the algorithm satisfy the
		conditions of at least one row. Observe that this implies that for this
		particular~$i$,~$\phi_i$ is then a lower bound on the makespan of the
		algorithms assignment. Overall,~$\phi = \min_i \phi_i$ is then a lower bound
		on the robustness factor of the algorithm.
		
		We have that~$\phi = \frac{589}{391}\approx 1.506 > \rhosandm{6}\approx
		1.503$. Hence, the robustness factor of \emph{any} algorithm for unit-size
		jobs and~$m$ machines is indeed strictly larger than the robustness factor $\rhosand$ for
		infinitesimal jobs on the same number of machines; see \cref{thm:speedsandUB}.
	\end{proof}
	
	\begin{table}[tbh]
		\centering
		\renewcommand{\arraystretch}{1.4}
		\begin{tabular}{|c|c|c|c|c|c|}
			\hline
			\multicolumn{3}{|c|}{Conditions on bag sizes}  &  \multirow{2}{*}{$s_1$} & \multirow{2}{*}{$s_6$} & \multirow{2}{*}{$\phi$} \\ \cline{1-3}
			$(1)$ &~$(2)$ &~$(3)$ &&&  \\ \hline
			$a_1\geq 77$ & &
			&  51 & 501 &~$\min(\frac{77}{51},\frac{756}{501})$\\
			$a_2\geq92$ &~$\sum_{j\geq 2} a_j \geq 680$ &~$a_1 \leq 76$ 
			&  61 & 451 &~$\min(\frac{92}{61},\frac{680}{451})$\\
			$a_3\geq 110$ &~$\sum_{j\geq 3} a_j \geq 589$ &~$a_1+a_2 \leq 167$
			&  73 & 391 &~$\min(\frac{110}{73},\frac{589}{391})$\\
			$a_4\geq 133$ &~$\sum_{j\geq 4} a_j \geq 480$ &~$\sum_{j< 4} a_j \leq 276$
			&  88 & 318 &~$\min(\frac{133}{88},\frac{480}{318})$\\
			$a_5\geq 159$ &~$a_5+a_6 \geq 348$ &~$\sum_{j< 5} a_j \leq 408$
			&  105 & 231 &~$\min(\frac{159}{105},\frac{348}{231})$\\
			$a_6\geq 190$ & &
			&  126 & 226 &~$\frac{190}{126}$\\
			\hline
		\end{tabular}
		\caption{Speed instance in function of the bag sizes in \Cref{lem:LBm6}.}
		\label{tbl:LMm6}
	\end{table}

\section{Proofs for \texorpdfstring{\Cref{sec:BricksUB}}{} -- Equal-Size Jobs and Speeds in \texorpdfstring{$\{0,1\}$}{0,1}}
\label{apx:BricksUB}

	\renewcommand{\k}{\ensuremath{{\bar\lambda}}}

	In this section, we give the missing proof of \Cref{lem:bricks:3<=Om<=8}. For the remainder of this
	section, we use~$\k := \optm$ to denote the optimal makespan on~$\nrbags$
	machines. 
	\bricksforthreeOmeight*
	The proof of \Cref{lem:bricks:3<=Om<=8} consists of two major cases depending on the number
	of bags that \lpt assigns to the same machine. The first part of the proof is to
	consider~$\nrmach$ such that \lpt assigns at most~$2$ bags to any machine (\Cref{lem:bricks:m'>=m/2}), and
	the second part consists of~$\nrmach$ such that there is at least one machine to
	which \lpt assigns at least~$3$ bags (\Cref{lem:bricks:m'<m/2}).	
	
	Recall that for~$3 \leq \k \leq 8$, we pack four different types~$a_0,\ldots, a_3$
	of bags depending on~$\k$, where~$a_l$ denotes the size of the~$l$-th bag type and~$x_l$ its multiplicity. 
	We give their values in the following two tables. 	
	\begin{equation}
		\label{eq:Om<=8}\tag{$\mathcal{B}$}
		\begin{array}{c||rrrrrr}
			\k & 3 & 4 & 5 & 6 & 7 & 8 \\\hline
			a_0 & 1 & 2 & 3 & 3 & 4 & 5  \\
			a_1 & 2 & 3 & 4 & 4 & 5 & 6  \\
			a_2 & 3 & 4 & 5 & 6 & 7 & 8 \\
			a_3 & 4 & 5 & 6 & 8 & 9 & 10 
		\end{array} \qquad
		\renewcommand*\arraystretch{1.3}
		\begin{array}{c||ccccc|c}
			m\!\Mod 5 & 0 & 1 & 2 & 3 & 4 & a_l\\\hline
			x_0 \!+\! x_1 & \frac{2m}5 & 2 \floor{\frac m5} & 2 \floor{\frac m5} \!+\! 1 & 2 \floor{\frac m5} \!+\! 1 & 2 \floor{\frac m5} \!+\! 1 & \ceil{\frac23 \k}\\
			x_2 & \frac m5 & \floor{\frac m5} \!+\! 1 & \floor{\frac m5} & \floor{\frac m5} \!+\! 1 & \floor{\frac m5} \!+\! 2 & \k\\
			x_3  & \frac{2m}5 & 2 \floor{\frac m5} & 2 \floor{\frac m5} \!+\! 1 & 2 \floor{\frac m5} \!+\! 1 & 2 \floor{\frac m5} \!+\! 1 & \floor{\frac43\k}
		\end{array}
	\end{equation}

	\begin{lemma}\label{lem:bricks:m'>=m/2}
		Let $\lptf$ denote the makespan of \lpt that assigns bags as described 
		by~\eqref{eq:Om<=8} to $m'$ machines. 
		If~$\nrmach \geq \frac\nrbags2$, then~$\lptf \leq \frac43 \optf$.
	\end{lemma}
	\begin{proof}
		To prove this lemma, we consider all cases of how \lpt assigns two bags to the
		same machine. Let~$\lptf$ denote the resulting makespan.
		We start by bounding the number of failing machines~$\nrfail$ depending on the
		value of~$\lptf$. If~$\lptf = a_{l'} + a_l$ with~$l'\leq l$, then
		\begin{equation}\label{eq:bricks:mfailed}
			\nrfail \geq \sum_{l'' =0}^{l'-1} x_{l''} + \floor{\sum_{l'' = l'}^{l-1} x_{l''}/2} +1\,.
		\end{equation}
		\smallskip
		
		\noindent {$\bm{\text{LPT}_\nrmach\in  \{2a_0, a_0 + a_1, 2a_1\} }$:} 
		Since~$\k < \opt{\nrbags-1}$ by our assumption based on \cref{lem:m-1}, we have~$2 a_1 \!\leq\! \frac43 \optf$ if~$\nrmach \!<\! \nrbags$. So
		if~$\lptf \!\in\! \{2a_0, a_0 \!+\! a_1, 2a_1\}$, then~$\lptf \!\leq\! \frac43 \optf$. 
		
		\smallskip
		
		\noindent{$\bm{\text{LPT}_\nrmach = a_0 + a_2}$:} 
		Observe that~$\floor{(x_0+x_1)/2}\geq \floor{\frac{m}{5}}$ by definition, see~(\ref{eq:Om<=8}). 
		With Equation~\eqref{eq:bricks:mfailed}, we obtain~$\nrmach = \nrbags -\nrfail \leq m - (\lfloor m/5 \rfloor + 1) \leq \frac45 m$. This implies that
		\begin{equation*}
			\optf \geq \frac{n}{\nrmach} 
			\geq \frac{\nrbags \k - \ell}{4/5 \nrbags} 
			= \frac54\left(\k - \frac{\ell}{\nrbags}\right).
		\end{equation*}
		As~$\ell < \k \leq 8$ and~$m\geq 37$ by assumption, we have that~$\frac54 \frac\ell m \leq \frac14$. Hence, 
		\begin{equation*}
			\frac43 \optf  \geq \frac43 \left(\frac54\k - \frac14\right) \geq \frac53 \k - \frac13\,.
		\end{equation*}
		Using that~$a_0 + a_2 = \left(\floor{2/3\k}-1 \right) + \k \leq \frac53\k -
		1$, we conclude that~$\lptf \leq \frac43 \optf$.
		
		\medskip For the remaining cases, let~$l$ and~$l'$ be the indices of the bag
		types that are assigned to the same machine and let~$y_{ll'} := \sum_{l''
		=0}^{l'-1} x_{l''} + \floor{\sum_{l'' = l'}^{l-1} x_{l''}/2}$. Then,~$\nrfail
		\geq y_{ll'} + 1$ by Equation~\eqref{eq:bricks:mfailed}. Showing that~$\lptf
		\leq \frac43 \optf$ is equivalent to showing that
		\[
			a_{l'} + a_l \leq \frac43 \ceil{\frac{n}{\nrbags - (y_{ll'}+1)}}
		\]
		for all possible combinations of~$a_{l'}$ and~$a_{l}$. 
		With~$a_{ll'} := \ceil{\frac34(a_{l'} + a_{l})}$, this inequality holds if 
		\begin{equation}\label{eq:bricks:y>=...}
			y_{ll'} \geq \floor{\frac{(a_{ll'}-1)\nrbags - n}{a_{ll'}-1}} = \nrbags -
			\ceil{\frac{n}{a_{ll'}-1}} = \floor{\nrbags - \frac{\k m}{a_{ll'}-1} +
			\frac{\ell}{a_{ll'}-1}}\,,
		\end{equation}
		where we used the fact that~$y_{ll'} \in \mathbb{Z}$. 
		
		\smallskip
		
		\noindent{$\bm{\text{LPT}_\nrmach = a_1 + a_2}$:} By Equation~\eqref{eq:bricks:y>=...}, 
		it suffices to verify~$x_0 + \floor{\frac{x_1}{2}} \geq \nrbags - \ceil{\frac{n}{\ceil{3/4(a_1 + a_2)}-1}}$. 
		If~$\k = 3$, then the right hand side becomes~$m - \ceil{\frac{3\nrbags - \ell}{3}} = 0$. 
		Hence, the inequality is satisfied. For~$\k \geq 4$, the right hand side is at 
		most~$\floor{\frac \nrbags5 + \frac\ell5} \leq \floor{\frac \nrbags5} + \floor{\frac45 + \frac{\ell}{a-1}}$ 
		with~$0 \leq \ell < \k$. Observe that~$\floor{\frac{x_1}2} + x_0 \geq 
		\floor{\frac \nrbags5} + \floor{\frac{x_0}2}$. Using the definition of~$x_0 + x_1$ and~$x_0 = \ell$, 
		\begin{equation*}
			\floor{\frac \nrbags5} + \floor{\frac45 + \frac{\ell}{a-1}} \leq 
			\begin{cases}
				\floor{\frac \nrbags5} + 0 \leq \floor{\frac{x_1}{2}} + x_0 & \text{ if } x_0 = 0 \\
				\floor{\frac \nrbags5} + 0 \leq \floor{\frac{x_1}{2}} + x_0 & \text{ if } x_0 = 1 \text{ and } \k \geq 5\\
				\floor{\frac \nrbags5} + 1 \leq \floor{\frac{x_1}{2}} + x_0 & \text{ if } x_0 \geq 2\,,
			\end{cases}	
		\end{equation*}
		which shows the validity of Inequality~\eqref{eq:bricks:y>=...} in all cases
		except the combination of~$\ell = x_0 = 1$ and~$\k =4$. For this particular
		case, a careful case distinction based on~$\nrbags \Mod 5$ shows that
		Inequality~\eqref{eq:bricks:y>=...} still holds.

		\noindent{$\bm{\text{LPT}_\nrmach= a_2 + a_2}$:} 
		By Equation~\eqref{eq:bricks:y>=...}, it suffices to verify~$x_0 + x_1 \geq
		\nrbags - \ceil{\frac{n}{\ceil{3a_2/2}-1}}$. The right hand side can be
		transformed into~$\floor{\frac \nrbags3 + \frac23\frac\ell \nrbags} \leq 
		\frac{\nrbags}{3} + \frac23 = \frac5{15}\nrbags + \frac{10}{15}$. We have~$x_0
		+ x_1 \geq 2 \floor{\frac \nrbags 5} \geq \frac{6}{15} \nrbags-
		\frac{24}{15}$. Using that~$\nrbags \geq 50$, we obtain~$x_0 + x_1 \geq
		\frac{5}{15}\nrbags + \frac{21}{15}$, which concludes the proof of
		Inequality~\eqref{eq:bricks:y>=...} for this case.
		
		\smallskip
		
		\noindent{$\bm{\text{LPT}_\nrmach = a_0 + a_3}$:} We need to verify 
		\begin{align*}
			\frac{x_0 + x_1 + x_2}2 & \geq 
			\floor{\nrbags - \frac{\k}{\floor{3\k/2} -1}\nrbags + \frac\ell{\floor{3\k/2} -1}}\,.
		\end{align*} 
		Based on~$\nrbags \Mod 7$, the last term can be bounded from above by 
		\begin{align*}
			\floor{ \frac27 \nrbags + \frac23}
= \frac27 \nrbags - 2\frac{\nrbags \Mod 7}{7} + \floor{2\frac{\nrbags \Mod 7}{7} + \frac23}
\leq \frac27\nrbags + \frac{40}{70}\,.
		\end{align*}
		Consider~$\frac{x_0 + x_1 + x_2}{2}$, the left hand side of the inequality. We can express this as 
		\[
		\frac{x_0 + x_1 + x_2}{2} = \frac{3}{10}\nrbags + R
		\geq \frac27\nrbags + \frac{43}{70}\,,
		\]
		where $R \in \{0, \frac2{10}, -\frac1{10}, \frac1{10}, \frac3{10}\}$ and we used that~$\nrbags \geq 50$. 
		This shows the validity of Inequality~\eqref{eq:bricks:y>=...} for the current case. 
		
		\smallskip
		
		\noindent{$\bm{\text{LPT}_\nrmach = a_1 + a_3}$:} Since~$a_1 + a_3 = 2 \k$, we need to verify 
		\[
			x_0 + \frac{x_1 + x_2}{2} \geq \floor{\nrbags - \frac{\k}{\ceil{3\k/2} -1}  + \frac{\ell}{\ceil{3\k/2} -1}}\,.
		\]
		Using~$\k \leq 8$ and that the
		second term on the right hand side is increasing in~$\k$ and depends on the
		parity of~$\k$, we can upper bound the right hand side
		by~$\floor{\frac{3}{10}\nrbags + \frac{2}{3}\frac{\ell}{\k-1}}$. Note that the
		left hand side is slightly larger than~$\frac3{10}m$ but not sufficiently large
		for a crude upper bound. Hence, we rewrite the left hand side as
		\begin{equation*}
			x_0 + \frac{x_1 + x_2}{2} = \frac3{10}\nrbags + \frac{x_0}{2} + R,
		\end{equation*}
		with $R\in \{0, \frac2{10}, -\frac1{10}, \frac1{10},\frac3{10} \}$ depending on~$\nrbags \Mod 5$. 
		
		If~$x_0 = \ell =0$, then the right hand side of the inequality
		is~$\floor{\frac3{10}m}$. Except for the case~$\nrbags \Mod 5 = 2$, the term~$\frac{x_1 +
		x_2}{2}$ clearly satisfies the inequality. If~$\nrbags \Mod 5 =2$, we
		have~$\nrbags \Mod{10} \in \{2,7\}$, which implies that~$\floor{\frac3{10}m}
		\leq \frac3{10}m - \frac{6}{10} \leq \frac{x_1 + x_2}{2}$ by the case
		distinction above.
		
		If~$x_0 = \ell > 0$, then~$\frac{x_0}2 \geq \frac12$ and, thus, the case distinction yields 
	$
			x_0 + \frac{x_1 + x_2}{2} \geq \frac3{10}\nrbags + \frac4{10}\,.
	$
		Using that the right hand side is upper bounded by~$\floor{\frac3{10}\nrbags +
		\frac23}$, we use a similar case distinction based on~$\nrbags \Mod {10}$ to
		derive
		\begin{align*}
			\floor{\frac3{10}\nrbags + \frac23} = \frac3{10}\nrbags - \frac3{10}(\nrbags \Mod {10}) 
			+ \floor{\frac{3}{10} (\nrbags \Mod {10}) + \frac23} 
			 \leq \frac{3}{10}\nrbags + \frac{3}{10}\,,
		\end{align*}
		which concludes the proof of Inequality~\eqref{eq:bricks:y>=...}. 
		
		\smallskip By our choice of~$x_0,\ldots,x_3$, the occurrence of~$\lptf \in \{ a_2
		+ a_3, a_3 + a_3\}$ implies that~$\nrmach < \frac{\nrbags}{2}$ which is not
		considered in this lemma.
	\end{proof}
	
	\begin{lemma}\label{lem:bricks:m'<m/2}
		Let $\lptf$ denote the makespan of \lpt that assigns bags as described 
		by~\eqref{eq:Om<=8} to $m'$ machines. 
		If~$\nrmach < \frac\nrbags2$, then~$\lptf \leq \frac43 \optf$.
	\end{lemma}
	
	\begin{proof}	
		Assume for the sake of contradiction that \lpt fails to place all bags onto
		the machines such that~$\lptf \leq \frac 43 \optf$. We consider the first
		bag~$b\in [\nrbags]$ whose assignment to the currently least loaded machine
		causes the failure, i.e., the completion time of this machine exceeds~$\frac43
		\optf$. For simplicity, let this be machine~$i$ and let~$C_i$ be the completion
		time of~$i$ \emph{before} adding bag~$b$. Let~$a$ be the size of bag~$b$.
		
		As~$\nrmach < \frac\nrbags2$,~$\nrbags \geq 37$, and~$\k < \opt{\nrbags-1}$, we
		have that~$\optf \geq 2\k + 1$. Hence, if bag~$b$ is the first or second bag on
		machine~$i$, then~$C_i + a \leq 2 \floor{\frac43 \k} \leq \frac43 \optf$; a
		contradiction.
		
		Consider the case where bag~$b$ is the fourth bag (or larger) on machine~$i$
		and restrict the instance to consist only of the jobs assigned by \lpt so
		far plus the jobs in bag~$b$. Let~$\optf'$ be the optimum of this restricted
		instance on~$\nrmach$ machines. As \lpt has not assigned all bags yet, i.e.,
		there is still unscheduled volume, we have that~$\optf' \geq C_i + 1$. Since
		there are already at least three bags of size at least~$a$ on machine~$i$, we
		have~$\optf' \geq C_i \geq 3 a$. Moreover, as~$b$ is the first bag to
		violate~$C_i + a \leq \frac43 \optf$, bag~$b$ determines the makespan of \lpt
		on the restricted instance; a contradiction by \cref{lem:bricks:lpt}.
		
		Hence, bag~$b$ is the third bag on machine~$i$. Based on its size, 
		we distinguish four cases.

		\smallskip
		
		\noindent{$\bm{a = a_0}$:} If a bag of size~$a_0$ is the first bag whose
		completion time violates~$\frac43 \optf$, then a bag of size~$a_0$
		determines~$\lptf$. Note that~$3 a_0 = 3 (\ceil{\frac23\k} - 1) \leq 3\frac23\k
		= 2\k \leq \optf$. Hence, by \cref{lem:bricks:lpt}, we have~$\lptf \leq \frac43
		\optf$; a contradiction.
		
		\smallskip
		
		\noindent{$\bm{a = a_1}$:} If~$C_i = 2 a_3$, then~$\frac43 \optf \geq
		\frac{10}{3}\k$ while~$C_i + a_1 \leq \frac{10}{3}\k$; a contradiction. If~$C_i
		= a_2 + a_3$, then~$\frac43 \optf \geq \frac{28}9 \k$, while~$C_i + a_1 = 3
		\k$; a contradiction. If~$C_i = 2\k = 2a_2 = a_1 + a_3$, then~$\frac43 \optf
		\geq \frac83\k + \frac43$ while~$C_i + a_1 \leq \frac83 \k + \frac23$; a
		contradiction. If~$C_i = a_1 + a_2$, then~$\frac43 \optf \geq a_2 + a_1 + (a_2
		+ a_1)/3 + \frac43$ while~$C_i + a_1 = a_2 + 2 a_1$. For~$3 \leq \k \leq 8$,
		one can check that~$(a_2 + a_1)/3 + \frac43 \geq a_1$; a contradiction.
		As~$\nrmach < \frac\nrbags2$, we have covered all possibilities for~$C_i$.
		Hence, a bag of size~$a_1$ cannot cause \lpt to fail.
		
		\smallskip
		
		\noindent{$\bm{a = a_2}$:} Let~$x$ be the number of bags of size~$a_2$ that
		were successfully assigned by \lpt before bag~$b$. Denote by~$V$ a volume of 
		$m' \cdot \frac 4 3 \optf$ minus the volume of already assigned bags
		which are~$x_3$ bags of size~$a_3$ and~$x$ bags of size~$a_2$. We have 
		\begin{align*}
		V & \geq \frac43 \left( a_0x_0 + a_1x_1 + (x_2 - x) a_2   \right) + \frac13 \left( xa_2 + a_3 x_3 \right)\,.
		\shortintertext{Using~$x_0 + x_1 = x_3$, this implies}
		V & \geq a_1 x_3 - \frac43 x_0 + \frac13 x_3\left(a_1  + a_3 \right) + \frac13 x_2a_2 + (x_2 - x) a_2 \geq \frac43 a_2 x_3\,,		
		\end{align*}
		where the second inequality follows from~$\frac43 x_0 < \frac43 \k \leq \frac13 x_2\k + (x_2 - x) \k$. As~$b$ is the third bag on machine~$i$ by the above discussion, we have~$\nrmach < \frac{x_3 + x_2}{2} \leq x_3$. Thus, we conclude that $V \geq \frac43 a_2$.
		Hence, the total volume left on the~$\nrmach$ machines is at least~$\nrmach
		a_2$. Hence, there has to be one machine~$i'$ where~$b$ still fits,
		i.e.,~$C_{i'} + a_2 \leq \frac43 \optf$. As~$i$ is the least loaded machine
		when \lpt assigns bag~$b$ and~$b$ violates~$\frac43 \optf$, we obtain a
		contradiction.
		
		\smallskip
		
		\noindent{$\bm{a = a_3}$:} Let~$x$ be again the number of bags of
		size~$a_3$ successfully assigned to machines by \lpt before bag~$b$. If~$a_3$
		is the size of a third bag on machine~$i$, then~$\nrmach < \frac{x_3}2$. By
		definition of~$x_2$, this additionally implies that~$\nrmach \leq x_2$. Let~$V$
		be the remaining volume after having assigned~$x$ bags of size~$a_3$. Then,
		\begin{align*}
			V & = \frac43 \left( a_0x_0 + a_1 x_1 + a_2 x_2 + a_3 (x_3 -x)\right) + \frac13 a_3 x
			 \geq \frac83 a_1 \nrmach + \frac43 a_2 \nrmach + \frac23 a_3 \nrmach\,,	  
		\end{align*}
		where we used~$\frac43 x_0 \leq a_3 \leq (x_3 - x) a_3$ 
		and~$\nrmach < \frac{x_3}2$ as well as~$\nrmach \leq x_2$.
		Thus, the remaining volume satisfies~$V \geq a_3 \nrmach$. Hence, there is at
		least one machine with remaining volume at least~$a_3$ contradicting \lpt's
		choice of machine~$i$.
		
		\smallskip As all possible cases for the size of the first bag that causes \lpt
		to fail lead to a contradiction, this proves the statement.
	\end{proof}
	
	\begin{proof}[\cref{lem:bricks:3<=Om<=8}]
		Combining the results of \cref{lem:bricks:m'>=m/2,lem:bricks:m'<m/2} shows that
		packing the bags according to~\eqref{eq:Om<=8} is $\frac43$-robust.
	\end{proof}

\section{Tradeoff Version of \Cref{thm:speedLPT} for Packing More Bags}
\label{app:extensions}

For the sake of completeness, we show an easy way how \Cref{thm:speedLPT} can be extended to the model in which we may partition the set of jobs into $M\geq m$ bags. 

	\begin{theorem}
		For~$M\geq m \ge1$, let $\alpha:= 1+\frac{m-1}{M}$. \lpt is~$\alpha$-robust for \srs.
	\end{theorem}
	\begin{proof}
	The proof follows the same lines as for \Cref{thm:speedLPT} with the adjustment that \lpt packs $M$ bags instead of $m$ and, for assigning bags to machines, we use a parameterized capacity bound~$\alpha \cdot s_i$ for machine~$i$ for some $\alpha \geq 1$.
	
	We slightly adjust the last part of the proof, where we give a lower bound on the total remaining capacity on
		the~$m$ machines when the second-stage algorithm fails to place the~$(k+1)$\nobreakdash-st bag. 
		The~$(M-k)$ bags that were not placed have a combined volume of at
		least~$V_\ell = (M-k-1)w+T \geq (M-k+1)\frac{T}{2}~$. The bags that were placed
		have a combined volume of at least~$V_p = kT$. The remaining capacity is then
		at least~$C =\alpha \cdot V_\ell + (\alpha-1)\cdot V_p$, and we have
		\begin{align*}
			C & \geq  \alpha \cdot(M-k+1) \frac{T}{2}+ (\alpha-1)\cdot kT 
			 = \frac{\alpha}{2} \cdot T M + T \left(\frac{\alpha k}{2} - \frac{\alpha k}{2} \right) +\frac{\alpha}{2}\cdot T -kT\\
			 & = \frac{\alpha}{2} \cdot T M + T \left(\frac{\alpha k}{2} + \frac{\alpha}{2} -k \right) 
			 = \frac{\alpha}{2} \cdot T M + T \left(\frac{\alpha}{2} -k\cdot  \left(1-\frac{\alpha}{2}\right)\right).
		\end{align*}	
		Using $\alpha<2$ and $k\leq M-1$, we obtain 
		\begin{align*}
			C & \geq \frac{\alpha}{2} \cdot T M + T \left(\frac{\alpha}{2} -(M-1)\cdot  \left(1-\frac{\alpha}{2}\right)\right).
		\end{align*}
		We define $x := M-m$ and replace $M$ by $m+x$. 
		Then
	\begin{align*}
			C & \geq T m - \left(1-\frac{\alpha}{2}\right) \cdot T m + \frac{\alpha}{2}\cdot Tx + T \left(\frac{\alpha}{2} -(m+x-1)\cdot  \left(1-\frac{\alpha}{2}\right)\right)\\
			 & \geq T m + T\cdot \left( -\left(1-\frac{\alpha}{2}\right)\cdot m + \frac{\alpha}{2}\cdot  x + \frac{\alpha}{2} - (m+x-1) \cdot  \left(1-\frac{\alpha}{2}\right) \right)\\
			 & = T m + T\cdot\left(\alpha\cdot(m+x) +1-2m-x\right) = Tm.
		\end{align*}
	Notice that the last equality follows directly from $\alpha=1+\frac{m-1}{M}$. 

		Thus, there is a machine with remaining capacity~$T$, which contradicts the 
		assumption that the bag of size~$T$ does not fit.
	\end{proof}

\end{document}